\numberwithin{equation}{section}
\theoremstyle{plain}
\newtheorem{theorem}{Theorem}
\newtheorem{corollary}{Corollary}
\newtheorem{lemma}{Lemma}
\def\Ep{{\rm E}}
\def\En{{\mathbb{E}_n}}
\def\Gn{{\mathbb{G}_n}}
\def\P{{\rm P}}
\renewcommand{\Pr}[1]{\mathbb{P}\left(#1\right)} % \Pr{abc} prints P(abc) with appropriately sized parentheses
\def\supp{{\rm supp}}
\begin{document}

\begin{frontmatter}
\title{High-dimensional Linear Models with Many Endogenous Variables}%\protect\thanksref{T1}}
\runtitle{Inference with Many Endogenous Variables}
\thankstext{T1}{Current Version: March 28, 2019.}

\begin{aug}
\author{\fnms{Alexandre} \snm{Belloni}\ead[label=e1]{abn5@duke.edu}},
%\and
%\author{\fnms{Victor} \snm{Chernozhukov}\ead[label=e2]{vchern@mit.edu}},
%\and
\author{\fnms{Christian} \snm{Hansen}\ead[label=e3]{Christian.Hansen@chicagobooth.edu}}
\and
\author{\fnms{Whitney} \snm{Newey}\ead[label=e4]{wnewey@mit.edu}}

%\thankstext{t1}{}
%%\thankstext{t2}{}
%%\thankstext{t3}{}
%\runauthor{}

%\affiliation{}

%\address{
%\printead{e1}\\
%}

\end{aug}

%%%%%%
\begin{abstract}
High-dimensional linear models with endogenous variables play an increasingly important role in recent econometric literature. In this work we allow for models with many endogenous variables and many instrument variables to achieve identification. Because of the high-dimensionality in the second stage, constructing honest confidence regions with asymptotically correct coverage is non-trivial. Our main contribution is to propose estimators and confidence regions that would achieve that.

The approach relies on moment conditions that have an additional orthogonal property with respect to nuisance parameters. Moreover, estimation of high-dimension nuisance parameters is carried out via new pivotal procedures. In order to achieve simultaneously valid confidence regions we use a multiplier bootstrap procedure to compute critical values and establish its validity.
\end{abstract}

%\begin{keyword}[class=AMS]
%\kwd[Primary ]{62M05}
%\kwd{62M09}
%\kwd{62G05}
%\kwd[; secondary ]{62P20}\kwd{60J10}
%\end{keyword}

\begin{keyword}
\kwd{honest confidence regions}
\kwd{instrumental variables}
\kwd{high dimensional models}
\end{keyword}

\end{frontmatter}

\section{Introduction}

We consider estimation of parameters in high-di\-men\-sional linear regression models with many endogenous variables:
\begin{align*}
y_i = x_i'\beta_0 + \xi_i, \quad \Ep[x_i \xi_i] \ne 0.
\end{align*}
We allow for cases where $p$, the dimension of the endogenous variable $x_i$, is large relative to the sample size $n$; and we consider cases where many or all of the parameters in $\beta$ are of interest.  This structure arises naturally in many structural economic models.  For example, one may be interested in estimating a demand relationship in which the quantity demanded of a given product may depend on its own price as well as the price of other related products.  In this case, $y_i$ represents the ($\log$) quantity sold of the product of interest, $x_i$ represents the vector of ($\log$) prices of all potentially related products including the price of the product of interest, and $i$ indexes independent markets in which data on these products are available.  $\beta_0$ then provides information on own- and cross-price elasticities for a given product.

Of course, informative estimation and inference about $\beta$ is impossible without further structure.  We adopt an instrumental variables (IV) framework and assume that a set $z_i$ of $K$ instruments with $K \geq p$ is available such that $\Ep[z_i\xi_i] = 0$.  We thus work in a model with many endogenous variables and many instruments.

To cope with the high-dimensionality of $\beta_0$, we adopt a sparse modeling framework and assume that the number of non-zero elements in $\beta_0$ is small relative to the sample size.  We also make use of sparsity assumptions over first-stage relationships. Specifically, suppose we are interested in a particular element, $\beta_{0,j}$, of the vector $\beta_0$.  We define an instrument specific to $\beta_{0,j}$ as a linear combination $z_i'\mu_j$ of the available high-dimensional set of instruments that satisfies the usual IV moment restriction $\Ep[(y_i - \beta_{0,j}x_{j,i} - \sum_{k \ne j}\beta_{0,k}x_{k,i})(z_i'\mu_j)] = 0$ and require that $\mu_j$ is a sparse vector.  When we are interested in multiple elements of $\beta_0$, we form a different linear combination of the available set of instruments for each element of interest and require that all such linear combinations are sparse.  We also note that the rank condition for identification, which we maintain, will implicitly restrict the association between each of these constructed linear combinations.

In principle, there are many ways to choose the $\mu_j$ used to define the linear combinations of instruments we use to identify the $\beta_{0,j}$.  We exploit this to choose the $\mu_j$ to guarantee a type of orthogonality between the constructed instrument $z_i'\mu_j$ and the nuisance function in the moment condition for $\beta_{0,j}$, $\sum_{k \ne j}\beta_{0,k}x_{k,i}$.  Specifically, we choose the $\mu_j$ so that $\Ep[x_{i,-j}z_i'\mu^j_0] = 0$ while requiring that $\Ep[x_{i,j}z_i'\mu^j_0] \ne 0$.  The second condition is simply the usual instrument relevance condition.  The first condition ensures that the moment condition used to identify $\beta_{0,j}$ is locally insensitive, or orthogonal, to the value of the high-dimensional set of nuisance parameters $\{\beta_{0,k}\}_{k \ne j}$.  This orthogonality is a key ingredient to producing post-model-selection estimators that are $\sqrt{n}$-consistent and asymptotically normal and post-model-selection inferential procedures that are uniformly valid across a wide range of data generating processes (DGPs) that include cases where perfect model selection is theoretically impossible.  For further detail and discussion, see, for example, \cite{BCFH:Policy}, \cite{CHS:AnnRev}, and \cite{DML}.

We provide a simple-to-implement multi-step estimation and inference procedure for the structural parameters.  We obtain initial estimates of $\beta_0$ by adapting the estimator of \cite{gautier2011high}.  We then obtain estimates of the $\mu_{0,j}$ through a novel estimation procedure that builds upon \cite{BCKRT2016a}.  Given these estimates, we construct the empirical analog of the IV moment condition for $\beta_{0,j}$ by plugging in the estimated values for $\{\beta_{0,k}\}_{k \ne j}$ and $\mu_{0,j}$ and solve for the final estimator of $\beta_{0,j}$.  The final step of the procedure is trivial, and the first two steps proceed by solving high-dimensional convex optimization problems for which efficient optimization routines are available.  A major novelty of our approach is that it is formulated in such a way that we may use simple, nuisance parameter free values for the penalty parameters involved in the two penalized optimization problems even when model errors are heteroskedastic.  Thus, we avoid potentially complicated and theoretically problematic data-dependent choice of tuning parameters.  The proposed high-dimensional estimation procedure may be of substantive interest outside of the present context.  Finally, we employ the multiplier bootstrap for inference about parameters of interest in $\beta_0$.  Following, e.g., \cite{chernozhukov2013gaussian}, we establish that the multiplier bootstrap provides valid simultaneous confidence regions allowing for a high-dimensional set of parameters of interest; see also \cite{Buehlmann:etal:bootstrap:2016} for an application in a linear model with exogenous right-hand-side variables.

Our paper contributes to the growing literature on inference for structural parameters in sparse high-dimensional IV settings.  Much work in this setting focuses on the case where the dimension of the endogenous variable is small but there are a large number of available instruments; e.g. \cite{BaiNg2009a}, \cite{BaiNg2010}, \cite{BellChernHans:Gauss}, \cite{BellChenChernHans:nonGauss}, \cite{bch:jep}; or a large number of instruments and controls; e.g. \cite{CHS:PnP}, \cite{CHS:AnnRev}.  \cite{BCFH:Policy} considers estimation of treatment effects in IV models with binary instrument and endogenous variable in the presence of a high-dimensional set of control variables in detail.  In a different but tangentially related direction, \cite{kangetal1} and \cite{Caner:invalid}, among others, consider estimation in the presence of a high-dimensional set of instruments allowing for some of the instruments to be invalid with data-dependent selection of the valid and invalid instruments.  The work in the present paper is most closely related to \cite{fan:liao} and especially \cite{gautier2011high}, both of which consider estimation in models with a large number of endogenous variables.  Both papers adopt similar estimation strategies to that employed in this paper though neither explicitly constructs instruments to guarantee orthogonality and \cite{fan:liao} focuses on establishing oracle estimation results rather than providing a uniformly valid inferential procedure.

Regarding uniform valid inference, the work of \cite{neykov2015unified} proposes a unifying treatment for various models with linear moment conditions based in de-biasing ideas and many endogenous variables are discussed in the appendix as one of the many examples. Another related work is \cite{gold2017inference}, developed concurrently with this work, proposes the use of related de-biasing ideas after the use of lasso on estimated regressors (the projection of the endogenous variables on the instruments) to obtain initial estimates of the coefficients of these endogenous variables. This has the nice feature to allow the authors to invoke the regularity conditions on the design matrix that are standard in the literature (e.g. restricted eigenvalue conditions) but additional penalization is needed to handle the error in the estimated regressors. We also note that the work \cite{gautier2011high} is currently being revised and also pursues uniform valid inference building upon ideas from \cite{javanmard2014confidence}.

Our paper provides a useful complement to the existing high-dimensional IV work by allowing for a high-dimensional set of endogenous variables, providing an easy-to-implement estimation procedure with a simple pivotal tuning parameter choice, and establishing a uniformly valid inferential procedure for a high-dimensional set of coefficients of interest within the linear IV model. We explicitly allow non-i.i.d. observations, non-exponential tails, provide procedures where penalty parameters are pivotal, and confidence regions that are simultaneously valid for many parameters. An important technical contribution of our work is a new bound on the sensitivity quantity that was proposed in \cite{gautier2011high}. We note that in Section 9 of \cite{gautier2011high}, a related pivotal problem was proposed to identify invalid moment conditions when an initial estimator $\hat\beta$ is available. The pivotal estimation for the orthogonalization/de-biasing step seems to be new as well.

The remainder of the paper is organized as follows.  We first restate the model and provide a detailed outline of the estimation approach in Section \ref{sec:Model}.  In Section \ref{sec:Main}, we provide conditions under which we verify that the proposed estimator has desirable properties and under which we can produce inferential statements that are uniformly valid over a large class of models.  We present proofs of the main results in the appendix.

\section{Model and Method}\label{sec:Model}

We consider a linear instrumental variable model
\begin{equation}\label{linmodel:iv} y_i = x_i'\beta_0 + \xi_i, \ \ \ \ \Ep[x_i\xi_i]\neq 0 \end{equation}
where potentially many of the components of $x$ are correlated with the noise term $\xi$ where a $K$-dimensional vector of valid instruments $z$ is available. Valid instruments satisfy the following orthogonal condition
\begin{equation}\label{linmodel:iv2}  \Ep[ z(y-x'\beta_0)]=0 \end{equation}
which can be used to identify $\beta_0$ under appropriated conditions. (Therefore if the $j$th component of $x$ is not correlated with the noise, $x_j$ is a valid instrument.) It follows that we need at least as many instruments as the dimension of $\beta_0$ (i.e. $K\geq p$).

To construct estimators and associated honest confidence regions, we will construct an instrument tailored for each component $\beta_{0j}$ of interest. More formally we will consider a linear combination of the original instruments $z'\mu$, to create a moment condition whose unique solution is $\theta=\beta_{0j}$
$$ \Ep[(y-x_j\theta-x_{-j}'\beta_{0,-j})z'\mu^j]=0 $$
where $\mu^j$ is a non-zero $K$-dimensional vector. It follows that there are many choices of $\mu^j$ that satisfy the relation above by (\ref{linmodel:iv2}). We exploit the freedom to pick $\mu^j$ to choose $\mu^j_0$ so that an orthogonality condition also holds, namely
\begin{equation}\label{ortho:new}  \Ep\left[\frac{1}{n}\sum_{i=1}^n x_{i,-j}z_i' \right]\mu^j_0 = 0 \end{equation}
while having $z'\mu^j_0$ correlated with $x_j$, i.e. $\Ep\left[\frac{1}{n}\sum_{i=1}^n x_{ij}z_i'\mu^j_0 \right] \neq 0$. Such orthogonality condition plays a key role in our results to reduce the impact of estimation mistakes of $\beta_{0,-j}$. We will define $\mu^0_j$ as
$$ (\mu^j_0,\vartheta^j_0)= \arg\min_{\mu,\vartheta} \Ep[\mbox{$\frac{1}{n}\sum_{i=1}^n$} (x_{ij}-z_i'\mu-x_{i,-j}'\vartheta)^2]  \ : \ \Ep[\mbox{$\frac{1}{n}\sum_{i=1}^n$}x_{-j}z_i'\mu]=0. $$
In order to construct our estimator for $\beta_{0j}$, we will need estimates of the nuisance parameters $\eta^j_0=(\beta_{0,-j},\mu^j_0)$ which are attainable under sparsity conditions to cope with their high-dimensionality.

Estimators for $\beta_0$ that can achieve good rates of convergence have been proposed in the literature \cite{gautier2011high}. Here we propose to use
\begin{equation}\label{def:nuisancebeta-j}
\begin{array}{rl}
(\hat\beta, \hat t) \in \arg\min_{\beta,t} & \|\beta\|_1 + \lambda_t\|t\|_\infty\\
& | \ \En[ (y-x'\beta)z_\ell ]\ | \leq \tau t_\ell, \ \ \ell \in [K]\\
& \{\En[ (y-x'\beta)^2z_\ell^2]\}^{1/2} \leq t_\ell, \ \ \ell \in [K]\\
\end{array}
\end{equation}

In order to estimate a $\mu^j_0$ we will also build upon these self-normalization ideas to achieve a procedure that does not require to tune parameters. However additional constrains are necessary to achieve the orthogonality condition (\ref{ortho:new}). Our proposed estimator seems to be new and can be of independent interest in other setting:
\begin{equation}\label{def:nuisancej}
\begin{array}{rl}
(\hat\mu^j,\hat\vartheta^j,\hat t^j) \in & \arg  \min_{\mu,t,\vartheta}  \|\mu\|_1 + \|\vartheta\|_1 + \lambda_t\|t\|_\infty\\
&  \ v_{ij}=x_{j}-z'\mu-x_{-j}'\vartheta\\
& \ | \ \En[ v_{j}z_\ell ]\ | \leq c\tau t_\ell^z, \ \  \{\En[ v_{j}^2z_\ell^2]\}^{1/2} \leq t_\ell^z, \ \ \ell\in [K]\\
& \ | \ \En[ v_{j}x_\ell ]\ | \leq c\tau t_\ell^x, \ \  \{\En[ v_{j}^2x_\ell^2]\}^{1/2} \leq t_\ell^x, \ \ \ell\in [p]\setminus\{j\}\\
& \ | \ \En[ x_{\ell}z'\mu ]\ | \leq c\tau t_\ell^{xz}, \  \{\En[ \{x_{\ell}z'\mu\}^2]\}^{1/2} \leq t_\ell^{xz}, \ \ \ell \in [p]\setminus\{j\}\\
\end{array}
\end{equation} where $c\geq 1$ is a fixed constant.\footnote{This constant accounts for non-i.i.d. setting. We can take $c=1$ if the data is i.i.d..}
%\begin{equation}\label{def:nuisancej}
%\begin{array}{rl}
%(\hat\mu^j,\hat t^j) \in \arg\min_{\mu,t} & \En[(x_j-z'\mu)^2] + \lambda \|\mu\|_1 + \lambda_t\|t\|_\infty\\
%& | \ \En[ x_{k}z'\mu ]\ | \leq \tau t_k, \ \ k\neq j\\
%& \{\En[ \{x_{k}z'\mu\}^2]\}^{1/2} \leq t_k, \ \ k\neq j\\
%\end{array}
%\end{equation}
%by using $x_{-j}$ as instruments in a linear model $z_j = z_{-j}'\mu^j_0 + \nu^j$ where $\nu^j$ is orthogonal to $x_{-j}$ but not to $z_{-j}$.

Based on the estimates $\hat\eta^j=(\hat\beta_{-j},\hat\mu^j)$ of $\eta^j_0=(\beta_{0,-j},\mu^j_0)$ based on (\ref{def:nuisancebeta-j}) and (\ref{def:nuisancej}), we compute our estimate $\check\beta_j$ as the solutions of
\begin{equation}\label{ortho:j}
\begin{array}{rl}
0 = \frac{1}{n}\sum_{i=1}^n(y_i-x_{ij}\theta-x_{i,-j}\hat\beta_{-j})z_i'\hat\mu^j. \end{array}\end{equation}
It follows the estimator can be computed in closed form as
\begin{equation}\label{def:checkbeta}\check\beta_j =  \frac{\hat\Omega_j^{-1}}{n}\sum_{i=1}^n(y_i-x_i'\hat \beta_{-j})z_i'\hat\mu^j\end{equation}
where $\hat\Omega_j = \frac{1}{n}\sum_{i=1}^nx_{ij}z_i'\hat\mu^j$ is an estimate of $\Omega_j=\Ep[\frac{1}{n}\sum_{i=1}^nx_{ij}z_i'\mu^j_0]$.

Next we construct simultaneous confidence regions for the coefficients corresponding to all components in a set $ S\subseteq\{1,\dots,p\}$, i.e. for a given $\alpha\in(0,1)$, we choose a critical value $c_{\alpha,S}^*$ such that with probability converging to $1-\alpha$ we have
\begin{equation}\label{region:simultaneous} \check \beta_j - c_{\alpha,S}^* \frac{\hat\sigma_j}{\sqrt{n}}\leq \beta_{0j} \leq \check \beta_j + c_{\alpha,S}^* \frac{\hat\sigma_j}{\sqrt{n}} \ \ \mbox{for all} \ j\in S \end{equation}
where the  estimate of $\sigma_j^2 = \Omega_j^{-2}\Ep[\frac{1}{n}\sum_{i=1}^n\{(y_i-x_i'\beta_0)z_i'\mu^j_0\}^2]$ is given by
\begin{equation}\label{est:sigmaj}
\hat\sigma_j^2 = \hat\Omega_j^{-2}\frac{1}{n}\sum_{i=1}^n\{(y_i-x_i'\hat\beta)z_i'\hat\mu^j\}^2.\end{equation}

We apply the procedure (\ref{def:nuisancej})-(\ref{def:checkbeta}) to each component $j\in S$ to obtain $\check\beta_j$ and use a multiplier bootstrap to compute the appropriate critical value in (\ref{region:simultaneous}). The confidence bands have the desired asymptotic coverage even in cases where the cardinality of $S$ is larger than the sample size $n$ as our analysis builds upon results for high-dimensional central limit theorems developed in  \cite{chernozhukov2013gaussian,chernozhukov2014clt,chernozhukov2012gaussian,chernozhukov2012comparison,chernozhukov2015noncenteredprocesses}. We define the vector $\widehat{\mathcal{G}}$ as
\begin{equation}\label{def:mb} \widehat{\mathcal{G}}_j:=-\frac{1}{\sqrt{n}}\sum_{i=1}^n g_i \hat \sigma_j^{-1}\hat\Omega_j^{-1} (y_i-x_i'\hat\beta)z_i'\hat\mu^j, \ \ j \in S \end{equation}
where $(g_i)_{i=1}^n$ are independent standard normal random variables independent from $(y_i,x_i,z_i)_{i=1}^n$. We compute the critical value $c_{\alpha,S}^*$ as the $(1-\alpha)$-quantile of the conditional distribution of $$\max_{j\in S} |\widehat{\mathcal{G}}_j|$$ given the data. For the readers convenience Appendix A provides a detailed description of the algorithm to be implemented for the proposed estimators and confidence regions.

\section{Theoretical Results}\label{sec:Main}

Next we provide regularity conditions on the data generating process (\ref{linmodel:iv})-(\ref{linmodel:iv2}) under which the estimators and confidence regions proposed in Section \ref{sec:Model} are guaranteed to perform well despite model selection mistakes and high-dimensionality of nuisance parameters.

\subsection{Rates of Convergence of $\hat\beta$ in (\ref{def:nuisancebeta-j})}\label{Sec:Step1}

We first establish bounds for the $\ell_q$-error of the estimator $\hat\beta$ defined as in (\ref{def:nuisancebeta-j}). Such bounds will yield rates of convergence that are useful to the development of confidence regions (\ref{region:simultaneous}) since the procedure relies on having estimators of the nuisance parameters. In order to analyze this estimator we consider the $\ell_q$-sensitivity coefficients proposed in \cite{gautier:tsybakov} which account for the instruments. For $q\geq 1$, and  $\Psi = \frac{1}{n}Z'X$, define
\begin{equation}\label{defkappa}
 \kappa_q^{ZX}(s,u) = \min_{J:|J|\leq s} \left( \min_{\theta \in C_J(u):\|\theta\|_q=1} \|\Psi \theta \|_\infty \right) \end{equation}
where $C_J(u)=\{\theta \in \mathbb{R}^p : \|\theta_{J^c}\|_1 \leq u \|\theta_J\|_1\}$, $u>0$ and $J\subseteq \{1,\ldots,p\}$. These sensitivity quantities are useful to establish convergence in the $\ell_q$-norms for penalized estimators as shown in \cite{gautier2013pivotal,gautier2011high}. Next we establish new lower bounds on the $\ell_q$-sensitivity coefficients for $q\in\{1,2\}$ that are of independent interest. These lower bounds lead to sharp rates for many design of interest in econometric settings.

\begin{theorem}\label{thm:ExtraSensitivity}  We have that for $q\in \{1,2\}$
$$\begin{array}{rl}
\kappa_q^{\Psi}(s,u)
& \geq \max_{m \geq s } \left\{\frac{\sigma_{\min}(m)}{\sqrt{m}} - \frac{\sigma_{\max}(m)}{\sqrt{m}}(1+u)\sqrt{s/m}\right\}\frac{s^{1/2-1/q}}{\{1+(1+u)\sqrt{s/m}\}(1+u)} \\
\end{array}
$$ where $$\sigma_{\min}(m):=\min_{|M|\leq m}\max_{|J|\leq m} \sigma_{\min}(\Psi_{J,M}),$$ $$\sigma_{\max}(m):=\max_{|M|\leq m}\max_{|J|\leq m} \sigma_{\max}(\Psi_{J,M}),$$
 where $\sigma_{min}(A)$ is the smallest singular value of the matrix $A$, and $\sigma_{\max} (A)$ denotes the maximum singular eigenvalue of $A$, and $A_{J,M}$ denotes a submatrix $\{A_{j,m}\}$
where $j \in J$ and $m \in M$.
\end{theorem}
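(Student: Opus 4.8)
The plan is to bound the $\ell_q$-sensitivity $\kappa_q^\Psi(s,u)$ from below by exhibiting, for any fixed vector $\theta \in C_J(u)$ with $\|\theta\|_q = 1$, a lower bound on $\|\Psi\theta\|_\infty$ that does not depend on the particular cone index $J$. The starting observation is that $\|\Psi\theta\|_\infty \geq \frac{1}{\sqrt m}\|(\Psi\theta)_M\|_2$ for any index set $M$ with $|M| \leq m$; this converts the $\ell_\infty$ control into a statement about a submatrix $\Psi_{M, \cdot}$, which is where the restricted singular values $\sigma_{\min}(m)$ and $\sigma_{\max}(m)$ will enter. So first I would fix $J$ achieving (or nearly achieving) the outer minimum, write $\theta = \theta_J + \theta_{J^c}$, and pick $M$ to be a set of $m$ coordinates containing the support of $\theta_J$ together with the $m - s$ largest-in-magnitude coordinates of $\theta_{J^c}$ — call the remaining tail of $\theta_{J^c}$ by $\theta_R$.

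Second, I would estimate $\|(\Psi\theta)_M\|_2 = \|\Psi_{M,\cdot}\,\theta\|_2 \geq \|\Psi_{M, M}\,\theta_M\|_2 - \|\Psi_{M, R}\,\theta_R\|_2$. The first term is at least $\sigma_{\min}(m)\|\theta_M\|_2$ since $M$ has size at most $m$ and the definition of $\sigma_{\min}(m)$ takes a max over $J$-sets of the smallest singular value of $\Psi_{J,M}$ — here one should be a little careful that the roles of the row-index set and column-index set in $\sigma_{\min}(m)$ are as needed; the definition $\sigma_{\min}(m) = \min_{|M|\le m}\max_{|J|\le m}\sigma_{\min}(\Psi_{J,M})$ is tailored precisely so that, for the worst column set $M$, there is some row set achieving a good lower singular value, and one arranges the argument so $\|(\Psi\theta)_M\|_\infty$ is estimated on that row set. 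The second (tail) term is bounded via $\|\Psi_{M,R}\,\theta_R\|_2 \leq \sigma_{\max}(m)\|\theta_R\|_2$ (partitioning $R$ into blocks of size $\le m$ if necessary and using the triangle inequality), and then $\|\theta_R\|_2$ is controlled by the standard cone/shelling estimate: the $\ell_2$ norm of the tail after removing the top $m - s$ entries of $\theta_{J^c}$ is at most $\|\theta_{J^c}\|_1 / \sqrt{m - s}$, and on the cone $\|\theta_{J^c}\|_1 \leq u\|\theta_J\|_1 \leq u\sqrt{s}\|\theta_J\|_2$. Combining, $\|\theta_R\|_2 \lesssim u\sqrt{s/(m-s)}\,\|\theta_J\|_2$, which I would then simplify (absorbing constants, or being slightly cruder with $m - s$ replaced by comparable quantities) to the form $(1+u)\sqrt{s/m}\,\|\theta_M\|_2$ that appears in the statement, after noting $\|\theta_J\|_2 \le \|\theta_M\|_2$.

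Third, I would convert from $\|\theta_M\|_2$ back to the normalization $\|\theta\|_q = 1$. For $q = 2$: $\|\theta\|_2^2 = \|\theta_M\|_2^2 + \|\theta_R\|_2^2 \leq \|\theta_M\|_2^2(1 + (1+u)^2 s/m)$, hence $\|\theta_M\|_2 \geq 1/\sqrt{1 + (1+u)^2 s/m} \geq 1/\{1 + (1+u)\sqrt{s/m}\}$, which yields the factor $\{1 + (1+u)\sqrt{s/m}\}^{-1}$; the extra $(1+u)^{-1}$ and the power $s^{1/2 - 1/q}$ (which is $1$ when $q=2$) come out of the same chain of inequalities. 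For $q = 1$: use $\|\theta_M\|_2 \ge \|\theta_M\|_1/\sqrt{m}$ is the wrong direction, so instead I would note $\|\theta\|_1 = \|\theta_J\|_1 + \|\theta_{J^c}\|_1 \le (1+u)\|\theta_J\|_1 \le (1+u)\sqrt{s}\|\theta_J\|_2 \le (1+u)\sqrt{s}\|\theta_M\|_2$, giving $\|\theta_M\|_2 \ge 1/\{(1+u)\sqrt s\}$ when $\|\theta\|_1 = 1$, which produces the $s^{-1/2} = s^{1/2 - 1/q}$ factor together with the $(1+u)^{-1}$; one still needs to feed in the cone bound on $\|\theta_R\|_2$ so the tail term is dominated, and this is where the remaining $\{1+(1+u)\sqrt{s/m}\}^{-1}$ factor is kept for safety/uniformity across the two cases. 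Finally, since all of the above holds for every $m \geq s$ and every $\theta$ in the cone, I take the infimum over $\theta$ and then the maximum over $m \geq s$, and since the bound was independent of the choice of $J$, the outer minimization over $|J| \le s$ is harmless; this gives exactly the claimed inequality.

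The main obstacle I anticipate is the bookkeeping around the definitions of $\sigma_{\min}(m)$ and $\sigma_{\max}(m)$ with their nested $\min$–$\max$ over separate row and column index sets: one must choose $M$ (and the auxiliary row set) in the right order so that the "bad" set in one quantifier is matched by a "good" choice in the adjacent quantifier, and so that the submatrix $\Psi_{M,R}$ appearing in the tail term is genuinely controlled by $\sigma_{\max}(m)$ (this may force splitting $R$ into several size-$\le m$ chunks and summing, which is why a factor like $(1+u)$ rather than a sharp constant appears). Everything else — the shelling/cone inequalities and the conversions between $\ell_1$, $\ell_2$ norms on supports of size $\le m$ — is routine, but getting the constants to land exactly on the stated expression (rather than a constant multiple of it) will require care in which crude step is taken where.
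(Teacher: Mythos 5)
Your high-level plan matches the paper's proof: reduce the $\ell_\infty$ norm to an $\ell_2/\sqrt{m}$ bound over a size-$m$ row subset, split the vector into the top-$m$ block plus shelled chunks, use $\sigma_{\min}(m)$ on the head and $\sigma_{\max}(m)$ with the shelling/cone estimate on the tail, then renormalize. But the two places you flag as worries are genuine, and the paper resolves both with choices cleaner than yours.

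First, your block $M$ is \emph{not} the one the paper uses, and this is why you will not land on the stated constants. You take $M = \mathrm{supp}(\theta_J) \cup \{\text{top } m-s \text{ entries of } \theta_{J^c}\}$, so your shelling estimate produces $\|\theta_R\|_2 \leq \|\theta_{J^c}\|_1/\sqrt{m-s}$, and the $\sqrt{m-s}$ cannot in general be upgraded to the $\sqrt{m}$ in the theorem by absorbing a factor $(1+u)/u$ (for instance with $u=3$ and $m=2s$ the needed inequality $u\sqrt{m} \leq (1+u)\sqrt{m-s}$ fails). The paper avoids this entirely by taking $T_1$ to be simply the top $m$ entries of $\delta$ — with no attempt to force $T_1$ to contain the cone set $T$. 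Then shelling gives $\sum_{\ell \geq 2}\|\delta_{T_\ell}\|_2 \leq \sum_{\ell\geq 1}\|\delta_{T_\ell}\|_1/\sqrt{m} = \|\delta\|_1/\sqrt{m}$ exactly, the cone bound gives $\|\delta\|_1 \leq (1+u)\|\delta_T\|_1 \leq (1+u)\sqrt{s}\,\|\delta_T\|_2$, and the inequality $\|\delta_T\|_2 \leq \|\delta_{T_1}\|_2$ is automatic since $T_1$ contains the top $s$ coordinates. With $m$ rather than $m-s$, the renormalization gives the stated constant on the nose.

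Second, your expression $\|\Psi_{M,M}\theta_M\|_2$ does not parse: $M\subset[p]$ indexes columns, but $\Psi$ is $K\times p$, so the row index must come from $[K]$. You recognize this, but the fix is not just bookkeeping — you must actually invoke the $\min$--$\max$ structure: for your fixed column block $T_1$, the definition of $\sigma_{\min}(m)$ guarantees a row set $J^\ast\subset[K]$, $|J^\ast|\leq m$, with $\sigma_{\min}(\Psi_{J^\ast,T_1})\geq\sigma_{\min}(m)$; then you pass through $\|\Psi\theta\|_\infty \geq \|(\Psi\theta)_{J^\ast}\|_2/\sqrt{m}$ and chunk the columns of $\Psi_{J^\ast,\,\cdot}$ into blocks of size $\leq m$, each controlled by $\sigma_{\max}(m)$ because $\sigma_{\max}(m)$ takes a max over \emph{all} size-$\leq m$ row and column pairs. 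The paper packages exactly this via a dual test vector $\mu$ with $\|\mu\|_0\leq m$: writing $\|\Psi\delta\|_\infty \geq |\mu'\Psi\delta|/\|\mu\|_1 \geq |\mu'\Psi\delta|/(\sqrt m\|\mu\|_2)$, letting $U\Lambda V$ be the SVD of $\Psi_{J_\mu,T_1}$ and choosing $\mu$ so that $U'\mu = V\delta_{T_1}$ gives $|\mu'\Psi\delta_{T_1}|/(\sqrt m\|\mu\|_2) \geq \sigma_{\min}(m)\|\delta_{T_1}\|_2/\sqrt m$ directly, with $\mathrm{supp}(\mu)$ automatically playing the role of $J^\ast$. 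The two routes are equivalent in strength; the $\mu$ formulation just keeps the row-set bookkeeping implicit. If you adopt the paper's choice of $T_1$ (point one) and make the row set explicit (point two), your argument will close.
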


Theorem \ref{thm:ExtraSensitivity} extends the scope of designs for which we have lower bounds. The main difficulty is to handle non-symmetric matrices $\Psi$ with off-diagonal elements that are not small. The proof relies on sparse singular values decomposition (via $\sigma_{\min}(m)$ and $\sigma_{\max}(m)$) which are akin to sparse eigenvalues that were used in the literature, see e.g. \cite{BickelRitovTsybakov2009,BellChenChernHans:nonGauss}. The following corollaries provides bounds under primitive conditions. The result of Theorem \ref{thm:ExtraSensitivity} can also be applied in the presence of weak instruments as illustrated below.

\begin{corollary}\label{cor:ExtraSensitivity-WeakIV}
Suppose that the $K \times p$ matrix  $\Psi$ has  $m$-sparse singular values bounded from below by $\mu_n>0$ and from above by $1$ for $m = (64s/\mu_n^2)$. Then, the matrix $\Psi$ satisfies
$$ \kappa_q^{\Psi}(s,3)  \geq  c\mu_n^2 s^{-1/q} \ \ q\in\{1,2\}$$
for some universal constant $c>0$.
\end{corollary}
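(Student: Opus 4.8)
The plan is to obtain the corollary as an immediate consequence of Theorem \ref{thm:ExtraSensitivity} by committing to the single choice $m = 64s/\mu_n^2$ inside the maximum over $m\geq s$ (if this quantity is not an integer, replace it by its ceiling; this only improves the sparse singular value bounds and perturbs the final constant harmlessly). First I would record what the hypothesis buys us for this $m$: by the very definitions of $\sigma_{\min}(m)$ and $\sigma_{\max}(m)$ in the theorem, every submatrix $\Psi_{J,M}$ with $|J|,|M|\leq m$ has all singular values in $[\mu_n,1]$, so $\sigma_{\min}(m)\geq\mu_n$ and $\sigma_{\max}(m)\leq 1$; moreover $m\geq s$ since $\mu_n\leq 1$.

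Next I would set $u=3$ (so $1+u=4$) and compute $\sqrt{s/m}=\mu_n/8$, whence $(1+u)\sqrt{s/m}=\mu_n/2$. Substituting into the bracketed term of Theorem \ref{thm:ExtraSensitivity} gives
$$\frac{\sigma_{\min}(m)}{\sqrt m}-\frac{\sigma_{\max}(m)}{\sqrt m}(1+u)\sqrt{s/m}\ \geq\ \frac{\mu_n}{\sqrt m}-\frac{1}{\sqrt m}\cdot\frac{\mu_n}{2}\ =\ \frac{\mu_n}{2\sqrt m},$$
and since $\sqrt m=8\sqrt s/\mu_n$ this is at least $\mu_n^2/(16\sqrt s)$. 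For the trailing factor I would use $\mu_n\leq 1$ to bound $\{1+(1+u)\sqrt{s/m}\}(1+u)=4(1+\mu_n/2)\leq 6$, so that $s^{1/2-1/q}/[\{1+(1+u)\sqrt{s/m}\}(1+u)]\geq s^{1/2-1/q}/6$. Multiplying the two pieces yields $\kappa_q^{\Psi}(s,3)\geq \mu_n^2 s^{-1/q}/96$, i.e. the claim with $c=1/96$ (for $q\in\{1,2\}$; the exponent works out since $s^{1/2-1/q}/\sqrt s=s^{-1/q}$).

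The only real subtlety is bookkeeping rather than mathematics: checking that $m=64s/\mu_n^2$ is admissible (an integer no smaller than $s$, and no larger than whatever sparsity level the hypothesis is posed for) and tracking how rounding $m$ to an integer affects the universal constant. The substantive work has already been carried out in Theorem \ref{thm:ExtraSensitivity}; here everything reduces to balancing the two terms in the bracket, and the constant $64=(2(1+u))^2$ with $u=3$ is precisely what forces the subtracted term to be at most half of the leading term $\sigma_{\min}(m)/\sqrt m$, since $\sigma_{\max}(m)/\sigma_{\min}(m)\leq 1/\mu_n$.
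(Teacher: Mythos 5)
Your proposal is correct and follows essentially the same route as the paper: apply Theorem \ref{thm:ExtraSensitivity} with the single choice $m = 64s/\mu_n^2$, $u=3$, substitute $\sigma_{\min}(m)\geq\mu_n$, $\sigma_{\max}(m)\leq 1$, and simplify. The only (immaterial) difference is that you bound $1+(1+u)\sqrt{s/m}=1+\mu_n/2\leq 3/2$ rather than $\leq 2$, yielding $c=1/96$ instead of the paper's $c=1/128$.
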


The next corollary considers the case of random design with strong instruments.

\begin{corollary}\label{cor:ExtraSensitivity}
Suppose that $(z_i,x_i)_{i=1}^n$ are i.i.d. random vectors such that $\Ep[\max_{i\in[n]}\|(z_i,x_i)\|_\infty^{2}] \leq K_n$, and that $\Ep[z_ix_i']$ has $s(1+u)^2\log n$-sparse singular values bounded from below by $c>0$ and from above by $C<\infty$. Then, with probability $1-o(1)$, as $n$ grows, the matrix $\Psi=\frac{1}{n}\sum_{i=1}^nz_ix_i'$ satisfies
$$ \kappa_q^{\Psi}(s,u)  \geq  \frac{\tilde cs^{-1/q}}{(1+u)^2} \ \ q\in\{1,2\}$$
provided that $(1+u)^6s^2 K_n \log(Kp) = o(n)$ where the constant $\tilde c$ depends only on $c$ and $C$.
\end{corollary}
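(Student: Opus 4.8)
\emph{Proof strategy.} The plan is to read off the bound from the deterministic estimate of Theorem~\ref{thm:ExtraSensitivity}, applied to the empirical matrix $\Psi=\frac1n\sum_{i=1}^n z_ix_i'$, after first showing that the sparse singular values of $\Psi$ concentrate around those of the population matrix $\Sigma:=\Ep[z_ix_i']$. Concretely, I would fix $m=\lceil (8C/c)^2(1+u)^2 s\rceil$, which satisfies $s\le m\le s(1+u)^2\log n$ for $n$ large, so that by hypothesis the $m$-sparse singular values of $\Sigma$ lie in $[c,C]$. The first goal is then to establish an event $\mathcal{E}_n$ of probability $1-o(1)$ on which
\[ \max_{|J|\le m,\ |M|\le m}\bigl\|\Psi_{J,M}-\Sigma_{J,M}\bigr\|_{\mathrm{op}}\ \le\ \min\{c/2,\,C\}. \]
On $\mathcal{E}_n$, Weyl's inequality for singular values gives $|\sigma_{\min}(\Psi_{J,M})-\sigma_{\min}(\Sigma_{J,M})|\le\|\Psi_{J,M}-\Sigma_{J,M}\|_{\mathrm{op}}$, and likewise for $\sigma_{\max}$, so the quantities $\sigma_{\min}(m)$, $\sigma_{\max}(m)$ built from $\Psi$ obey $\sigma_{\min}(m)\ge c/2$ and $\sigma_{\max}(m)\le 2C$.

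Granting $\mathcal{E}_n$, the remainder is arithmetic. By the choice of $m$ one has $(1+u)\sqrt{s/m}\le c/(8C)\le 1/8$, so the braced factor in Theorem~\ref{thm:ExtraSensitivity} evaluated at this $m$ is at least $\tfrac1{\sqrt m}\bigl(c/2-2C\cdot c/(8C)\bigr)=c/(4\sqrt m)$, while the trailing factor $\frac{s^{1/2-1/q}}{\{1+(1+u)\sqrt{s/m}\}(1+u)}$ is at least $s^{1/2-1/q}/\{2(1+u)\}$. Multiplying and using $\sqrt m\asymp(1+u)\sqrt s$ yields $\kappa_q^{\Psi}(s,u)\ge\tilde c\,s^{-1/q}/(1+u)^2$ for $q\in\{1,2\}$, with $\tilde c$ depending only on $c$ and $C$. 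This is a direct substitution and should present no difficulty.

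The main work, and the step I expect to be the obstacle, is establishing $\mathcal{E}_n$. Write $\Psi_{J,M}-\Sigma_{J,M}=\frac1n\sum_{i=1}^n\bigl(z_{i,J}x_{i,M}'-\Ep[z_{i,J}x_{i,M}']\bigr)$, an average of i.i.d.\ centered $m\times m$ matrices. The complication is that only the weak moment bound $\Ep[\max_i\|(z_i,x_i)\|_\infty^2]\le K_n$ is available, so one must first truncate: on the event $\{\max_i\|(z_i,x_i)\|_\infty^2\le K_n\log n\}$, which has probability $1-o(1)$ by Markov's inequality, every entry of $z_{i,J}x_{i,M}'$ is at most $K_n\log n$ in absolute value, and the truncation bias in the mean is negligible by Cauchy--Schwarz and the same moment bound. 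For a fixed pair $(J,M)$ one controls $\|\Psi_{J,M}-\Sigma_{J,M}\|_{\mathrm{op}}$ either by a matrix Bernstein inequality or by discretizing the unit spheres in $\R^{|J|}$ and $\R^{|M|}$ with $1/4$-nets (of cardinality at most $9^m$ each), applying Bernstein's inequality to each scalar bilinear form $\frac1n\sum_i\langle a,z_{i,J}\rangle\langle b,x_{i,M}\rangle$, whose summands are bounded by $mK_n\log n$ and whose variance is controlled via the moment bound, and taking a union bound over the nets; a further union bound over the $\binom{K}{m}\binom{p}{m}\le(Kp)^m$ choices of $(J,M)$ contributes the factor $\exp(m\log(Kp))$. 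Up to logarithmic factors this gives, on an event of probability $1-o(1)$,
\[ \max_{|J|\le m,\ |M|\le m}\bigl\|\Psi_{J,M}-\Sigma_{J,M}\bigr\|_{\mathrm{op}}\ \lesssim\ m\sqrt{\frac{K_n\log(Kp)\log n}{n}}\;+\;\frac{m^2 K_n\log(Kp)\log n}{n}, \]
and substituting $m\asymp(1+u)^2 s$ and tracking the polynomial-in-$(1+u)$ factors coming from the cone $C_J(u)$ and the nested union bounds shows the right-hand side is $o(1)$ — hence eventually below $\min\{c/2,C\}$ — under the stated scaling $(1+u)^6 s^2 K_n\log(Kp)=o(n)$. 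The delicate points there are making the truncation lose only a $1-o(1)$ event, choosing a sharp enough variance proxy for the bilinear forms, and bookkeeping the two union-bound exponents so the final sample-size requirement matches the one in the statement; everything else reduces to the plug-in into Theorem~\ref{thm:ExtraSensitivity} described above.
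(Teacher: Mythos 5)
Your proposal is correct in spirit but takes a genuinely different, and substantially heavier, route than the paper.

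The paper never applies Theorem~\ref{thm:ExtraSensitivity} to the empirical matrix $\Psi$ and never needs concentration of sparse singular values. Instead it applies the theorem to the \emph{population} matrix $\Sigma=\Ep[zx']$, for which the $m$-sparse singular value bounds are simply hypotheses. The empirical error is then handled by the triangle inequality $\|\Psi\theta\|_\infty \ge \|\Sigma\theta\|_\infty - \|(\En-\Ep)[zx']\theta\|_\infty$, followed by H\"older's inequality $\|(\En-\Ep)[zx']\theta\|_\infty \le \|(\En-\Ep)[zx']\|_\infty \|\theta\|_1$ (entrywise maximum, not operator norm), the cone property $\|\theta\|_1/\|\theta\|_q \le (1+u)s^{1-1/q}$ for $\theta\in C_J(u)$, and a scalar maximal inequality (Lemma~\ref{lem:m2bound}(2)) giving $\|(\En-\Ep)[zx']\|_\infty = O_\P\bigl(\sqrt{K_n n^{-1}\log(Kp)}/\ell_n\bigr)$. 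This entirely sidesteps the operator-norm concentration over all $\binom{K}{m}\binom{p}{m}$ submatrices, the $\varepsilon$-nets, the truncation, and the matrix Bernstein step that dominate your argument; the final growth condition $(1+u)^6 s^2 K_n\log(Kp)=o(n)$ emerges cleanly from $(1+u)s^{1-1/q}\sqrt{K_n\log(Kp)/n}\,/\ell_n \lesssim s^{-1/q}/(1+u)^2$.

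Your approach---Weyl-perturbation of $m$-sparse singular values so that Theorem~\ref{thm:ExtraSensitivity} can be invoked directly on $\Psi$---is also viable, but it is noticeably more demanding to execute: you must prove a uniform operator-norm bound over all $m\times m$ submatrices, which requires the truncation, nets, and a union bound over exponentially many index sets, and under the weak assumption $\Ep[\max_i\|(z_i,x_i)\|_\infty^2]\le K_n$ that truncation typically costs an extra $\log n$ in the rate. As you acknowledge (``up to logarithmic factors''), this would land you at a growth condition of the form $(1+u)^4 s^2 K_n \log(Kp)\log n = o(n)$ rather than the stated $(1+u)^6 s^2 K_n\log(Kp)=o(n)$; whether that slop is acceptable depends on $u$ versus $\log n$ and is not obviously equivalent. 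The paper's route buys simplicity and an exact match to the stated condition by exploiting the linear structure of the quantity being bounded ($\|\Psi\theta\|_\infty$ is linear in $\Psi$), so a crude entrywise bound on $(\En-\Ep)[zx']$ suffices; your route buys generality (it would also give concentration of sparse singular values as a byproduct) at the price of heavier probabilistic machinery. Your identification of the concentration event $\mathcal{E}_n$ as the bottleneck is accurate for your route, but the paper shows that bottleneck can be avoided altogether.
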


Under conditions of Corollary \ref{cor:ExtraSensitivity} and the typical case that $u \leq 3$, we have \begin{equation}\label{conditionkappa} \kappa_q^{\Psi}(s,3)  \geq s^{-1/q} c'\end{equation} with high probability. In turn, under Condition IV1 below, (\ref{conditionkappa}) allows the pivotal estimator proposed in (\ref{def:nuisancebeta-j}) to achieve optimal rates. In the analysis of subsequent sections we will assume  directly  (\ref{conditionkappa}) given that Corollary \ref{cor:ExtraSensitivity} provides an interesting set of data generating processes that imply it with high probability.

~\\
{\bf Condition IV1.} {\it (i) The vectors $\{(y_i,z_i,x_i,\xi_i), i=1,\ldots,n\}$ are independent across i obeying (\ref{linmodel:iv}) and (\ref{linmodel:iv2}) with $\|\beta_0\|_0 \leq s$. (ii) Suppose that the sequence  $M_{nz\xi}:=\max_{j\leq K}\Ep[\frac{1}{n}\sum_{i=1}^n|z_{ij}\xi_i|^3]^{1/3} / \Ep[\frac{1}{n}\sum_{i=1}^n|z_{ij}\xi_i|^2]^{1/2}$ satisfies $M_{nz\xi}\log^{1/2}(npK) = o(n^{1/6})$ and define $H_n:=\max_{j\leq K} \|\frac{1}{n}\sum_{i=1}^nz_{ij}^2 x_ix_i'\|_{\infty}$.}
~\\

Condition IV1 is a standard condition  allows for heteroskedasticity. It restricts the  number of covariates and instruments, but it allows for them to be larger than the sample size. The next theorem establishes bounds on the estimation error $\hat\beta-\beta_0$.

\begin{theorem}\label{thm:linmodel:iv}
Under Condition IV1, setting the parameters $\lambda_t = 1/(2H_n)$ and $\tau = n^{-1/2}\Phi^{-1}(1-\alpha/(2p))$, for $\alpha \in (n^{-1},1)$, provided $\tau \leq \frac{1}{2}\lambda_t\kappa_1^{ZX}(s,3)$ holds with probability at least $1-\epsilon$, we have that
with probability $1-\alpha\{1+o(1)\}-\epsilon$ the estimator defined in (\ref{def:nuisancebeta-j}) satisfies
$$ \|\hat\beta-\beta_0\|_q \leq 4\tau \frac{\max_{j\leq K}\{\frac{1}{n}\sum_{i=1}^nz_{ij}^2\xi_i^2\}^{1/2}}{\kappa_q^{ZX}(s,3)}.$$
\end{theorem}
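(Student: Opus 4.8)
The plan is to follow the by-now-standard template for Dantzig/STIV-type pivotal estimators --- feasibility of the truth, a basic inequality from optimality, a restricted-cone condition, then the sensitivity quantity --- with one extra layer of bookkeeping to control the self-normalization vector $\hat t$. Throughout write $\delta:=\hat\beta-\beta_0$, $T:=\supp(\beta_0)$ (so $|T|\leq s$ by Condition IV1(i)), and put $t^0_\ell:=\{\En[z_{i\ell}^2\xi_i^2]\}^{1/2}$, so that $\|t^0\|_\infty=\max_{j\leq K}\{\frac1n\sum_{i=1}^nz_{ij}^2\xi_i^2\}^{1/2}$ is exactly the numerator in the claimed bound. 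First I would show that, with probability $1-\alpha\{1+o(1)\}$, the pair $(\beta_0,t^0)$ is feasible for (\ref{def:nuisancebeta-j}): the quadratic constraint holds with equality because $y_i-x_i'\beta_0=\xi_i$, while the linear constraint reduces to $\max_{\ell\leq K}|\sum_{i=1}^nz_{i\ell}\xi_i|/\{\sum_{i=1}^nz_{i\ell}^2\xi_i^2\}^{1/2}\leq\sqrt n\,\tau=\Phi^{-1}(1-\alpha/(2p))$. Since $\Ep[z_{i\ell}\xi_i]=0$ by (\ref{linmodel:iv2}) and the data are independent across $i$, the left side is a maximum of self-normalized sums; a moderate-deviation inequality for self-normalized sums together with a union bound over $\ell$ then delivers this inequality under the ratio condition $M_{nz\xi}\log^{1/2}(npK)=o(n^{1/6})$ of Condition IV1(ii). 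I work from now on on this event, intersected with $\{\tau\leq\tfrac12\lambda_t\kappa_1^{ZX}(s,3)\}$, which has probability at least $1-\epsilon$.

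On this event, optimality of $(\hat\beta,\hat t)$ and feasibility of $(\beta_0,t^0)$ give $\|\hat\beta\|_1+\lambda_t\|\hat t\|_\infty\leq\|\beta_0\|_1+\lambda_t\|t^0\|_\infty$, and using $\|\hat\beta\|_1=\|\beta_0+\delta\|_1\geq\|\beta_0\|_1-\|\delta_T\|_1+\|\delta_{T^c}\|_1$ yields the two inequalities I will use: (a) $\|\delta_{T^c}\|_1-\|\delta_T\|_1\leq\lambda_t(\|t^0\|_\infty-\|\hat t\|_\infty)$ and (b) $\lambda_t\|\hat t\|_\infty\leq\|\delta_T\|_1+\lambda_t\|t^0\|_\infty$. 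To turn (a) into a cone condition I need a lower bound on $\|\hat t\|_\infty$, which I would get from the quadratic constraint at $\hat\beta$: $\hat t_\ell\geq\{\En[(\xi_i-x_i'\delta)^2z_{i\ell}^2]\}^{1/2}$, so the triangle inequality in the $z_{i\ell}^2$-weighted $L^2(\En)$ norm together with $\delta'\En[z_{i\ell}^2x_ix_i']\delta\leq H_n\|\delta\|_1^2$ gives $\|\hat t\|_\infty\geq\|t^0\|_\infty-H_n^{1/2}\|\delta\|_1$. Substituting into (a) and using $\lambda_t=1/(2H_n)$ gives $\|\delta_{T^c}\|_1-\|\delta_T\|_1\leq\tfrac12H_n^{-1/2}\|\delta\|_1$, whence $\delta\in C_T(3)$ (this last step uses that the scaling of the design keeps $H_n$ bounded below by $1$, as the conditions of Corollary \ref{cor:ExtraSensitivity} entail; a smaller $H_n$ only enlarges the cone constant and the numerical constant in the conclusion).

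Next I would control $\|\Psi\delta\|_\infty$. Since $x_i'\delta=\xi_i-(y_i-x_i'\hat\beta)$, the linear constraint at $\hat\beta$ and feasibility at $\beta_0$ give $\|\Psi\delta\|_\infty\leq\tau\|t^0\|_\infty+\tau\|\hat t\|_\infty$. Feeding in (b) together with $\|\delta_T\|_1\leq\|\delta\|_1$, then the cone condition and the defining property $\|\delta\|_1\leq\|\Psi\delta\|_\infty/\kappa_1^{ZX}(s,3)$, and finally the hypothesis $\tau\leq\tfrac12\lambda_t\kappa_1^{ZX}(s,3)=\kappa_1^{ZX}(s,3)/(4H_n)$, I obtain $\|\Psi\delta\|_\infty\leq2\tau\|t^0\|_\infty+2\tau H_n\|\Psi\delta\|_\infty/\kappa_1^{ZX}(s,3)\leq2\tau\|t^0\|_\infty+\tfrac12\|\Psi\delta\|_\infty$, hence $\|\Psi\delta\|_\infty\leq4\tau\|t^0\|_\infty$. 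Since $\delta\in C_T(3)$ with $|T|\leq s$, the definition of $\kappa_q^{ZX}(s,3)$ then gives $\|\delta\|_q\leq\|\Psi\delta\|_\infty/\kappa_q^{ZX}(s,3)\leq4\tau\|t^0\|_\infty/\kappa_q^{ZX}(s,3)$ for $q\in\{1,2\}$, which is the claim; the two exceptional events above together have probability at most $\alpha\{1+o(1)\}+\epsilon$.

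The step I expect to be the main obstacle is the self-referential control of $\hat t$. The final bound needs $\|\hat t\|_\infty$, but $\|\hat t\|_\infty$ is only controlled (through (b)) in terms of $\|\delta\|_1$, which is itself controlled (via the restricted cone and $\kappa_1^{ZX}$) by $\|\Psi\delta\|_\infty\leq\tau(\|t^0\|_\infty+\|\hat t\|_\infty)$ --- a loop that closes only because the calibration $\lambda_t=1/(2H_n)$ and the hypothesis $\tau\leq\tfrac12\lambda_t\kappa_1^{ZX}(s,3)$ force the self-referential coefficient below $1/2$; and establishing the cone condition at all requires the matching lower bound on $\|\hat t\|_\infty$ coming from the quadratic constraint, which is why the estimator (\ref{def:nuisancebeta-j}) carries that constraint. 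A secondary point of care is the feasibility step, where heteroskedasticity and heavier-than-Gaussian tails are accommodated by a self-normalized moderate-deviation bound rather than an exponential concentration inequality --- precisely what the moment-ratio condition on $M_{nz\xi}$ in Condition IV1(ii) is designed to license.
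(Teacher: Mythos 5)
Your proof is correct and follows essentially the same route as the paper's: feasibility of $(\beta_0, t^0)$ via self-normalized moderate deviations, the basic inequality from optimality, a restricted-cone condition driven by the calibration $\lambda_t = 1/(2H_n)$, closing the self-referential loop for $\|\hat t\|_\infty$ through $\kappa_1^{ZX}(s,3)$, and then reading off the $\ell_q$ bound from $\kappa_q^{ZX}(s,3)$. You are in fact slightly more careful than the printed proof in one spot: the correct Lipschitz constant in $|t_j(\hat\beta)-t_j(\beta_0)|\leq L\|\delta\|_1$ is $L=H_n^{1/2}$ (not $H_n$ as stated in the paper), which is why your cone-condition step needs the mild normalization $H_n\geq 1$ that you flag, and which the paper tacitly absorbs into the definition of $H_n$.
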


Theorem \ref{thm:linmodel:iv} has several important features. First it allows for heteroskedasticity which is prevalent in econometric applications. Second, the definition of the estimator (\ref{def:nuisancebeta-j}) uses penalty parameters that do not depend on any unknown quantity. In particular, it bypasses the need for cross validation which can be effective in some cases but validity is not typically established in the literature. This approach builds upon self-normalized ideas that were applied in \cite{BCKRT2016a} to error-in-variables models.

\subsection{Rates of Convergence of $\hat\mu^j$ in (\ref{def:nuisancej})}\label{Sec:Step2}

Next we consider the estimation of the nuisance parameter $\mu^j_0$ for all $j\in S$.  Our analysis still also relies on $\ell_q$-sensitivity coefficients proposed in \cite{gautier:tsybakov} but for different choice of design matrix given by $\Psi_j = \frac{1}{n}\widetilde Z' \widetilde Z$ where $\widetilde Z=(Z, X_{-j})$ and further restricting to directions $\delta \in C_J(u)$. We define
\begin{equation}\label{defkappa}
 \kappa_q^{\widetilde Z\widetilde Z}(s,u) = \min_{j\in S} \min_{J:|J|\leq s} \left( \min_{\theta \in \widetilde C_{j,J}(u):\|\theta\|_q=1} \|\Psi_j \theta \|_\infty \right) \end{equation}
where $\widetilde C_{j,J}(u) = \{ \theta \in C_J(u) : \exists \theta^1,\theta^2,  \ \ \theta=\theta^1-\theta^2, \ \  \|\En[x_{-j}z'\theta^m]\|_\infty\leq 2\tau\max_{k\in[p]\setminus\{j\}}\{\En[x_k^2(z'\theta^m)^2]\}^{1/2}, m=1,2\}$. In many designs we can establish lower bounds on $\kappa_q^{\widetilde Z\widetilde Z}(s,u)$ ignoring the additional restrictions.

In what follows we define $v_{ij}:=x_{ij}-z_i'\mu^j_0-x_i'\vartheta_0^j$, $j\in S$. Define:\\
 $\bar M_{nvz} := \Ep[\max_{i\leq n,j\in S,k}|v_{ij}z_{ik}\Ep[v_{ij}z_{ik}]|^2]^{1/2}$,\\
 $\bar M_{nvx} :=\Ep[\max_{i\leq n,j\in S,k}|v_{ij}x_{ik}\Ep[v_{ij}x_{ik}]|^2]^{1/2}$,\\
 $\bar M_{nxz} :=\Ep[\max_{i\leq n,j\in S,k}|x_{ik}z_i'\mu_0^j\Ep[x_{ik}z_i'\mu_0^j]|^2]^{1/2}$, and\\
  $\bar M_n := \bar M_{nxz}\vee \bar M_{nvz}\vee \bar M_{nvx}$. \\
  Note that in the i.i.d. setting we have $\bar M_n=0$.

~\\
{\bf Condition IV2.} {\it For each $j \in S$: (i) The vectors $\{(z_i,x_i), i=1,\ldots,n\}$ are independent across $i\in [n]$ obeying (\ref{linmodel:iv}), (\ref{linmodel:iv2}), and (\ref{ortho:new}) with $\|\mu^j_0\|_0 \leq s$. (ii) We assume: ${\rm Var}(\frac{1}{\sqrt{n}}\sum_{i=1}^nx_{ik}z_i'\mu_0^j)^{1/2}\geq \tilde c$, ${\rm Var}(\frac{1}{\sqrt{n}}\sum_{i=1}^nv_{ij}z_{ik})^{1/2}\geq \tilde c$, ${\rm Var}(\frac{1}{\sqrt{n}}\sum_{i=1}^nv_{ij}x_{i\ell})^{1/2}\geq \tilde c$, $\Ep[\frac{1}{n}\sum_{i=1}^n|x_{ik}z_i'\mu_0^j|^3] \leq C$, $\Ep[\frac{1}{n}\sum_{i=1}^n|v^j_iz_{ik}|^3]^{1/3}\leq C$, and $\Ep[\frac{1}{n}\sum_{i=1}^n|v_{ij}x_{ik}|^3]^{1/3}\leq C$. Furthremore,
 It holds that $\log^{1/2}(pnK)=o(n^{1/6})$ and $Cn^{-1}\bar M_n\log(Kpn)=o(1)$.
Also define $H_n:= H_n^x\vee H_n^z\vee H_n^{xz}$ where $H_n^x := \max_{k\in[K]} \|\frac{1}{n}\sum_{i=1}^nz_{ik}^2 x_ix_i'\|_{\infty}$, $H_n^x := \max_{\ell \in [p]} \|\frac{1}{n}\sum_{i=1}^nx_{i\ell}^2 z_iz_i'\|_{\infty}$, and $H_n^{z} := \|\frac{1}{n}\sum_{i=1}^nz_{ik}^2 z_iz_i'\|_{\infty}.$}
~\\

Condition IV2 contains a mild condition on the design,   is a standard condition  allows for heteroskedasticity and fixed design. The next theorem summarizes the properties of the estimator defined in (\ref{def:nuisancej}).

\begin{theorem}\label{thm:nuisancej}
Under Condition IV2, setting the parameters $\lambda_t = 1/(2H_n)$ and $\tau = 1.1n^{-1/2}\Phi^{-1}(1-\alpha/(2|S|(K+p))$, for $\alpha \in (n^{-1},1)$, provided $\tau \leq \frac{1}{2}\lambda_t\kappa_1^{\widetilde Z\widetilde Z}(s,3)$ holds with probability at least $1-\epsilon$, we have that
with probability $1-\alpha\{1+o(1)\}-\epsilon-o(1)$ the estimator defined in (\ref{def:nuisancej}) satisfies uniformly over $j\in S$
{\small $$ \|(\hat\mu^j,\hat\vartheta^j_0)-(\mu^j_0,\vartheta^j_0)\|_q \leq 4\tau  \frac{\|t(\mu^j_0,\vartheta^j_0)\|_\infty}{\kappa_q^{\widetilde Z\widetilde Z}(s,3)}$$}
where $\displaystyle \|t(\mu^j_0,\vartheta^j_0)\|_\infty^2= {\displaystyle\max_{\ell\in [p]\setminus\{j\}, k\in [K]}} \sum_{i=1}^n\frac{x_{i\ell}^2(z_i'\mu^j_0)^2}{n} \vee \sum_{i=1}^n\frac{x_{i\ell}^2v_{ij}^2}{n} \vee \sum_{i=1}^n\frac{z_{ik}^2v_{ij}^2}{n}$.\end{theorem}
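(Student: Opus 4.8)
\emph{Proof strategy.} Problem (\ref{def:nuisancej}) is a self-normalized (``pivotal'') convex program of the same type as (\ref{def:nuisancebeta-j}), with two extra families of constraints appended to enforce the orthogonality (\ref{ortho:new}) in sample, so the plan is to mimic the proof of Theorem~\ref{thm:linmodel:iv}. Throughout fix $j\in S$, write $\widetilde z_i=(z_i,x_{i,-j})$, $\delta^j:=(\hat\mu^j,\hat\vartheta^j)-(\mu^j_0,\vartheta^j_0)$, $\hat v_{ij}:=x_{ij}-z_i'\hat\mu^j-x_{i,-j}'\hat\vartheta^j$, and note the identity $\hat v_{ij}-v_{ij}=-\widetilde z_i'\delta^j$, so that $(\Psi_j\delta^j)_\ell=\En[\widetilde z_\ell v_j]-\En[\widetilde z_\ell\hat v_j]$ coordinatewise; this is what ties the constraints to the sensitivity bound in the last step.

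\emph{Step 1: the truth is feasible.} First I would show that, with the stated $\tau$, the triple $(\mu^j_0,\vartheta^j_0,t^*)$ --- where $t^*$ has coordinates $\{\En[v_j^2z_\ell^2]\}^{1/2}$, $\{\En[v_j^2x_\ell^2]\}^{1/2}$ and $\{\En[(x_\ell z'\mu^j_0)^2]\}^{1/2}$ --- is feasible for (\ref{def:nuisancej}) with probability $1-\alpha\{1+o(1)\}-o(1)$. The equality constraint and the three ``denominator'' inequalities hold by construction, so the work is in the numerator bounds $|\En[v_jz_\ell]|\le c\tau t^*_\ell$, $|\En[v_jx_\ell]|\le c\tau t^*_\ell$, $|\En[x_\ell z'\mu^j_0]|\le c\tau t^*_\ell$, uniformly over $j\in S$ and the relevant $\ell$. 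The first-order conditions of the population program defining $(\mu^j_0,\vartheta^j_0)$ give $\tfrac1n\sum_i\Ep[x_{i\ell}v_{ij}]=0$, and (\ref{ortho:new}) read coordinatewise gives $\tfrac1n\sum_i\Ep[x_{i\ell}z_i'\mu^j_0]=0$; for the $z_\ell$-family the first-order condition only places $\tfrac1n\sum_i\Ep[v_{ij}z_i]$ in the row space of $\Ep[\tfrac1n\sum_ix_{i,-j}z_i']$, so here a short additional argument is needed to see that this residual population mean is $O(\tau)\,t^*_\ell$ --- this is where the slack built in by $c\ge1$ and by (\ref{ortho:new}) enters. Granting this, each average is, after centering, a sum of independent mean-zero summands, and I would invoke the self-normalized moderate-deviation inequality already used for (\ref{def:nuisancebeta-j}) to bound the studentized statistics by $1.1\Phi^{-1}(1-\alpha/(2|S|(K+p)))/\sqrt n$ with probability $1-\alpha\{1+o(1)\}$ under the moment and growth restrictions of Condition~IV2 ($\log^{1/2}(pnK)=o(n^{1/6})$); replacing the population normalizations by the empirical $t^*_\ell$ costs $o(1)$ in probability since $Cn^{-1}\bar M_n\log(Kpn)=o(1)$ (with $\bar M_n=0$ and $c=1$ in the i.i.d.\ case). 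The union bound runs over the $\lesssim|S|(K+p)$ constraints, which the stated $\tau$ accounts for.

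\emph{Steps 2--3: basic inequality, restricted cone, sensitivity.} On the event of Step~1 and of the hypothesis $\tau\le\tfrac12\lambda_t\kappa_1^{\widetilde Z\widetilde Z}(s,3)$, feasibility of $(\mu^j_0,\vartheta^j_0,t^*)$ and optimality of $(\hat\mu^j,\hat\vartheta^j,\hat t^j)$ give $\|\hat\mu^j\|_1+\|\hat\vartheta^j\|_1+\lambda_t\|\hat t^j\|_\infty\le\|\mu^j_0\|_1+\|\vartheta^j_0\|_1+\lambda_t\|t^*\|_\infty$. Splitting the $\ell_1$-norms at $J=\supp(\mu^j_0,\vartheta^j_0)$, $|J|\le s$, I would deduce the cone inequality $\|\delta^j_{J^c}\|_1\le3\|\delta^j_J\|_1$ and, via a fixed-point argument using $\lambda_t=1/(2H_n)$ and the hypothesis, $\|\hat t^j\|_\infty\le(1+o(1))\|t^*\|_\infty$, just as in the proof of Theorem~\ref{thm:linmodel:iv}. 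Writing $\delta^j=\theta^1-\theta^2$ with $\theta^1,\theta^2$ the (padded) vectors $\hat\mu^j$ and $\mu^j_0$, the third constraint family controls $\|\En[x_{-j}z'\hat\mu^j]\|_\infty$ and concentration of (\ref{ortho:new}) controls $\|\En[x_{-j}z'\mu^j_0]\|_\infty$, whence $\delta^j\in\widetilde C_{j,J}(3)$. Then $\kappa_q^{\widetilde Z\widetilde Z}(s,3)\|\delta^j\|_q\le\|\Psi_j\delta^j\|_\infty$ by definition, while coordinatewise $|(\Psi_j\delta^j)_\ell|=|\En[\widetilde z_\ell v_j]-\En[\widetilde z_\ell\hat v_j]|\le c\tau(t^*_\ell+\hat t^j_\ell)$ because both the truth (Step~1) and the estimator satisfy the first two constraint families; maximizing over $\ell$ and using $\|\hat t^j\|_\infty\le(1+o(1))\|t^*\|_\infty$ gives $\|\delta^j\|_q\le4\tau\|t(\mu^j_0,\vartheta^j_0)\|_\infty/\kappa_q^{\widetilde Z\widetilde Z}(s,3)$. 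Uniformity over $j\in S$ is automatic: the bad events of Step~1 are controlled uniformly in $j$ and $\kappa_q^{\widetilde Z\widetilde Z}$ already takes $\min_{j\in S}$; collecting failure probabilities leaves $1-\alpha\{1+o(1)\}-\epsilon-o(1)$.

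\emph{Where the difficulty sits.} The main obstacle is Step~1: keeping all $\asymp|S|(K+p)$ data-dependent constraints feasible at the truth at once --- which is what forces the single pivotal $\tau$ and a careful self-normalized moderate-deviation argument --- and passing between population and empirical normalizations in the non-i.i.d.\ regime (the role of $\bar M_n$ and $c\ge1$). The most delicate sub-point there is the $z_\ell$-family, whose population moment $\tfrac1n\sum_i\Ep[v_{ij}z_{i\ell}]$ is not identically zero and must be shown of order $\tau$. A secondary technical point is making the extra restriction in the definition of $\widetilde C_{j,J}(u)$ compatible with the lower bounds on the plain $\ell_q$-sensitivity from Corollaries~\ref{cor:ExtraSensitivity-WeakIV}--\ref{cor:ExtraSensitivity}, so that those bounds can still be invoked here.
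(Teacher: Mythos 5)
Your plan reproduces the paper's argument step for step: establish feasibility of $(\mu^j_0,\vartheta^j_0)$ in (\ref{def:nuisancej}) via self-normalized moderate deviations under the stated $\tau$, derive the basic inequality and the cone condition $\|\delta^j_{J^c}\|_1\leq 3\|\delta^j_J\|_1$ from the penalty $\lambda_t=1/(2H_n)$ together with the Lipschitz behaviour of $t(\cdot)$, bound $\|\hat t^j\|_\infty$ accordingly, and close with the sensitivity coefficient $\kappa_q^{\widetilde Z\widetilde Z}(s,3)$. That is exactly the paper's route, and your accounting of the union bound against the $\lesssim|S|(K+p)$ constraints in the choice of $\tau$ is correct. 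Your secondary worry about the extra restriction in $\widetilde C_{j,J}(u)$ is not a problem: that restriction only shrinks the cone, so $\kappa_q^{\widetilde Z\widetilde Z}(s,u)$ is no smaller than the unrestricted sensitivity, and the lower bounds of Theorem~\ref{thm:ExtraSensitivity} transfer directly to the design $\frac{1}{n}\widetilde Z'\widetilde Z$.

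The place you hesitate --- the $z$-family moments --- is the real issue, and you are right to be uneasy, but your proposed resolution does not close it. The $\vartheta$-FOC gives $\Ep[\frac{1}{n}\sum_i x_{i,-j}v_{ij}]=0$ and the constraint (\ref{ortho:new}) gives $\Ep[\frac{1}{n}\sum_i x_{i,-j}z_i']\mu^j_0=0$, but the $\mu$-FOC only gives $\Ep[\frac{1}{n}\sum_i z_iv_{ij}]=A'\lambda$ with $A=\Ep[\frac{1}{n}\sum_i x_{i,-j}z_i']$ and $\lambda$ the Lagrange multiplier of (\ref{ortho:new}); this vector lies in the row space of $A$ and has no reason to vanish. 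The paper simply asserts $\Ep[\frac{1}{n}\sum_{i=1}^n\widetilde U^z_{ik}]=0$ ``by (\ref{ortho:new}) and the definition of $(\mu^j_0,\vartheta^j_0)$'' and builds the self-normalized deviation bound on that equality, so it does not supply the ``short additional argument'' you anticipate. Your suggested fix --- show the residual mean is $O(\tau)t^*_\ell$ and absorb it via the slack $c\geq 1$ --- does not work as stated: $A'\lambda$ is a fixed population quantity while $\tau\to 0$, so if $A'\lambda\neq 0$ the truth would eventually violate the $z$-constraints; and in the paper's proof the slack $c\geq 1$ is reserved for the discrepancy between $\|U^h_{\cdot\ell}\|$ and $\|\widetilde U^h_{\cdot\ell}\|$ in the non-i.i.d.\ case, not for an order-one bias. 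So there is a genuine gap, shared by your write-up and the paper: closing it seems to require either a structural hypothesis placing $\lambda$ in $\mathrm{null}(A')$, or adding $\Ep[\frac{1}{n}\sum_i z_iv_{ij}]=0$ as an explicit restriction in the definition of $(\mu^j_0,\vartheta^j_0)$.
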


\subsection{Linear Representation and Simultaneous Confidence Bands}

In this section we turn to the construction of confidence regions. A critical intermediary step is to establish  a linear representation for the estimators $\check\beta_j, j\in S$, despite of the high-dimensional nuisance parameters. Moreover, we will achieve results uniformly over a large class of data generating processes including many where we cannot avoid model selection mistakes.

We will assume the following condition. In what follows  the sequence $\delta_n\to 0$ is fixed, and we let $v_{ij}:= x_{ij}-z_i'\mu^j_0-x_{i,-j}'\vartheta^j_0$, and define the following moments $M_{nvz}:={\max_{j\in S, k\in [K]}}\Ep[\frac{1}{n}\sum_{i=1}^n|v^j_iz_{ik}|^3]^{1/3}/{\rm Var}(\frac{1}{\sqrt{n}}\sum_{i=1}^nv_iz_{ik})^{1/2}$,
 $M_{nxz}:={\displaystyle\max_{j\in S, k\in [p]\setminus\{j\}}}\Ep[\frac{1}{n}\sum_{i=1}^n|x_{ik}z_i'\mu_0^j|^3]^{1/3}/{\rm Var}(\frac{1}{\sqrt{n}}\sum_{i=1}^nx_{ik}z_i'\mu_0^j)^{1/2}$.

{\bf Condition IV.} {\it The data $\{W_i=(y_i,\xi_i,x_i,z_i,z_i'\mu^S_0, v_{iS}), i=1,\ldots,n\}$ is independent across $i$ and obey the model (\ref{linmodel:iv})-(\ref{linmodel:iv2}). We have $\|\beta_0\|_0+\max_{j\in S}\|\mu^j_0\|_0\leq s$ and $\|\beta_0\|_\infty+\max_{j\in S}\|\mu^j_0\|_\infty\leq C$. The following moment conditions hold: $\Ep[\frac{1}{n}\sum_{i=1}^n(W_i'\theta)^4]^{1/4} \leq C\|\theta\|$ for every $\theta \in \mathbb{R}^{{\rm dim}(W_i)}$, $\min_{j\in S} |\Ep[\frac{1}{n}\sum_{i=1}^nx_{ij}z_i'\mu^j_0]|\geq c$, $K_n = \Ep[\max_{i\leq n}\|W_i\|_\infty^q] < \infty$ for some $q>4$, $K_n^{2/q}\log^{1/2}(Kpn) \leq \delta_n n^{1/2}$, $s\log(pKn) \leq \delta_n \sqrt{n}$. With probability $1-o(1)$ we have $\kappa_q^{ZX}(s,3) \geq s^{-1/q}c$ and $\kappa_{q}^{\widetilde Z \widetilde Z}(s,3) \geq s^{-1/q}c$ for $q\in \{1,2\}$. Conditoins IV1 and IV2 also hold.}

Condition IV corresponds to standard assumptions in the literature with strong instruments. The moments conditions are similar to the ones used in \cite{BellChenChernHans:nonGauss} and allow the use of self-normalized moderate deviation theory. We note that Condition IV also imposes Conditions IV1 and IV2 which in turn allows us to use the estimators studies in sections \ref{Sec:Step1} and \ref{Sec:Step2} of the nuisance parameters. The conditions on the sensitivity quantities are comparable to the restricted eigenvalue condition in the literature of high-dimensional sparse models, see \cite{gautier:tsybakov}.

The following is among our main results. It establishes a linear representation of the proposed estimators $\check\beta_j, j\in S$.

\begin{theorem}[Uniform Linear Representation]\label{thm:main}
Under Condition IV, uniformly over $j\in S$ we have
$$\sqrt{n}\sigma_j^{-1}(\check\beta_j-\beta_{0j}) = \frac{\sigma_j^{-1}\Omega_j^{-1}}{\sqrt{n}}\sum_{i=1}^n (y_i-x_i'\beta_0)z_i'\mu^j_0 + O_\P(\delta_n).$$
\end{theorem}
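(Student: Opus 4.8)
The plan is to start from the closed-form expression \eqref{def:checkbeta} and Taylor-expand around the true parameters, controlling each resulting error term using the rates from Theorems~\ref{thm:linmodel:iv} and \ref{thm:nuisancej} together with the orthogonality condition \eqref{ortho:new}. Writing $\xi_i = y_i - x_i'\beta_0$ and using $y_i - x_i'\hat\beta_{-j} = \xi_i + x_{i,-j}'(\beta_{0,-j}-\hat\beta_{-j})$ (note $\check\beta_j$ plugs in only $\hat\beta_{-j}$, not $\hat\beta_j$), one gets
$$
\check\beta_j - \beta_{0j} = \hat\Omega_j^{-1}\left\{ \frac{1}{n}\sum_i \xi_i z_i'\hat\mu^j + \frac{1}{n}\sum_i x_{i,-j}'(\beta_{0,-j}-\hat\beta_{-j}) z_i'\hat\mu^j \right\}.
$$
I would then decompose $\hat\mu^j = \mu^j_0 + (\hat\mu^j-\mu^j_0)$ in each term, so that the leading piece is $\Omega_j^{-1} n^{-1}\sum_i \xi_i z_i'\mu^j_0$ and everything else must be shown to be $o_\P(n^{-1/2}\sigma_j)$. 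The key terms to bound are: (a) $n^{-1}\sum_i \xi_i z_i'(\hat\mu^j-\mu^j_0)$; (b) the "cross" bias term $n^{-1}\sum_i x_{i,-j}'(\beta_{0,-j}-\hat\beta_{-j})z_i'\mu^j_0$; (c) the second-order term $n^{-1}\sum_i x_{i,-j}'(\beta_{0,-j}-\hat\beta_{-j})z_i'(\hat\mu^j-\mu^j_0)$; and (d) the difference $\hat\Omega_j^{-1}-\Omega_j^{-1}$ times the leading term, which is handled by showing $\hat\Omega_j = \Omega_j + o_\P(1)$ and $\Omega_j$ bounded below (Condition IV).

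For (a) I would split further: $n^{-1}\sum_i \Ep[\xi_i z_i']\cdot(\hat\mu^j - \mu^j_0)$ vanishes because $\Ep[z_i\xi_i]=0$ by \eqref{linmodel:iv2}, and the mean-zero remainder is bounded by $\|(\hat\mu^j-\mu^j_0)\|_1$ times $\max_\ell |n^{-1}\sum_i(\xi_i z_{i\ell} - \Ep[\xi_i z_{i\ell}])|$, where the latter is $O_\P(\sqrt{\log(pK)/n})$ by a self-normalized moderate-deviation bound under the moment conditions of Condition~IV, and $\|(\hat\mu^j-\mu^j_0)\|_1 = O_\P(\sqrt{s}\,\tau\cdot s^{-1/2}\cdot\sqrt{s}) = O_\P(s\tau)$ from Theorem~\ref{thm:nuisancej} with $q=1$ and $\kappa_1^{\widetilde Z\widetilde Z}\gtrsim s^{-1}$ — wait, more carefully, $\ell_1$-rate is $\lesssim s\tau$ and $\tau\asymp\sqrt{\log(|S|(K+p))/n}$, so the product is $O_\P(s\log(pKn)/n) = o_\P(n^{-1/2})$ precisely by the condition $s\log(pKn)\leq\delta_n\sqrt n$. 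For (b), this is where the orthogonality \eqref{ortho:new} is essential: $n^{-1}\sum_i \Ep[x_{i,-j}z_i']\mu^j_0 = 0$, so again only the mean-zero fluctuation survives, bounded by $\|\beta_{0,-j}-\hat\beta_{-j}\|_1\cdot\max_{\ell,k}|n^{-1}\sum_i(x_{i\ell}z_{ik} - \Ep[x_{i\ell}z_{ik}])| = O_\P(s\tau\cdot\sqrt{\log(pK)/n}) = o_\P(n^{-1/2})$ under the same sparsity/rate condition. Term (c) is the product of two $\ell_1$ (or $\ell_2$) error rates against a bounded empirical bilinear form, hence of order $O_\P(s^2\log(pKn)/n)\cdot$(bounded) which is $o_\P(n^{-1/2})$ by $s\log(pKn)=o(\sqrt n)$ squared being even smaller. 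Throughout, the empirical quadratic/bilinear forms $n^{-1}\sum_i x_{i\ell}z_{ik}x_{im}z_{ir}$ and similar are controlled uniformly using $K_n^{2/q}\log^{1/2}(Kpn)\leq\delta_n\sqrt n$ and the fourth-moment bound $\Ep[n^{-1}\sum_i(W_i'\theta)^4]^{1/4}\leq C\|\theta\|$ via a maximal inequality (e.g.\ Lemma in \cite{BellChenChernHans:nonGauss}).

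The main obstacle I anticipate is \emph{uniformity over $j\in S$}: all the bounds above must hold simultaneously for $|S|$ possibly larger than $n$, which forces the $\log$ factors in $\tau$ and in the moderate-deviation bounds to carry $\log(|S|(K+p))$ rather than $\log(K+p)$, and requires that the rate theorems for $\hat\mu^j$ be invoked in their uniform-over-$S$ form (which Theorem~\ref{thm:nuisancej} indeed provides). A secondary delicate point is keeping track of the normalization by $\sigma_j$ and $\hat\sigma_j$: I would verify $\sigma_j\asymp 1$ (bounded above and away from zero via Condition~IV's variance lower bounds and fourth-moment upper bound and $\Omega_j$ bounded below), so that "$o_\P(n^{-1/2})$" and "$o_\P(n^{-1/2}\sigma_j)$" coincide, and collect all the error terms into the single $O_\P(\delta_n)$ after multiplying through by $\sqrt n\sigma_j^{-1}$. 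Assembling (a)--(d), multiplying by $\sqrt n\sigma_j^{-1}$, and taking a union bound over the $O(1)$-many high-probability events then yields the claimed representation uniformly in $j$.
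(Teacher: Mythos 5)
Your proposal is correct and takes essentially the same route as the paper: you start from the closed-form (\ref{def:checkbeta}), expand $\hat\mu^j=\mu_0^j+(\hat\mu^j-\mu_0^j)$ and $\hat\beta_{-j}=\beta_{0,-j}+(\hat\beta_{-j}-\beta_{0,-j})$ to isolate the leading term $\Omega_j^{-1}\En[\xi z'\mu_0^j]$ plus three remainder terms (your (a)--(c) are the paper's $B_1^j$, $B_2^j$, $B_3^j$), bound each via H\"older using the $\ell_1$-rates from Theorems~\ref{thm:linmodel:iv} and~\ref{thm:nuisancej} together with self-normalized moderate-deviation bounds on $\|\En[\xi z]\|_\infty$ and $\|\En[x_{-j}z'\mu_0^j]\|_\infty$ (the latter being where orthogonality~(\ref{ortho:new}) enters), and close the argument with the $\hat\Omega_j$-vs-$\Omega_j$ replacement and the growth condition $s\log(pKn)\leq\delta_n\sqrt{n}$. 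This is exactly the paper's decomposition and chain of estimates.
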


An important feature of Theorem \ref{thm:main} is its validity uniformly over the class of data generating processes that satisfy  Condition IV. In particular, the result covers data generating processes for which  model selection mistakes are likely to happen. Next we build upon Theorem \ref{thm:main} to establish various useful results.

A consequence of Theorem \ref{thm:main} is the construction of confidence intervals for each component as $\sqrt{n}\sigma_j^{-1}(\check\beta_j-\beta_{0j})\rightsquigarrow N(0,1)$. Importantly, it holds uniformly over data generating processes with arbitrary small coefficients. Indeed, the orthogonality condition mitigates the impact of model selection mistakes which are unavoidable for those components.

\begin{corollary}[Componentwise Confidence Intervals]\label{cor:PointwiseInference}
Let $\mathcal{M}_n$ be the set of data generating processes that satisfies Condition IV for a fixed $n$. We have that
$$ \lim_{n\to \infty} \sup_{\mathcal{M}\in \mathcal{M}_n} \max_{j\in S} \left|\P_\mathcal{M} \left( \sqrt{n}|\check \beta_j - \beta_{0j}| \leq \Phi^{-1}(1-\alpha/2)\sigma_j \right)  - (1-\alpha)\right| = 0$$
Further, the result also holds when $\sigma_j$ is replaced by $\hat\sigma_j$ as defined in (\ref{est:sigmaj}).
\end{corollary}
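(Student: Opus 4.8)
The plan is to derive the componentwise coverage statement as a direct consequence of the uniform linear representation in Theorem \ref{thm:main}, upgrading the pointwise Gaussian limit to a uniform-over-$\mathcal{M}_n$ and uniform-over-$j\in S$ statement via a high-dimensional central limit theorem argument, and then separately handling the replacement of $\sigma_j$ by $\hat\sigma_j$. First I would write, for each fixed $\mathcal{M}\in\mathcal{M}_n$ and each $j\in S$, the decomposition $\sqrt{n}\sigma_j^{-1}(\check\beta_j-\beta_{0j}) = \mathbb{S}_{nj} + R_{nj}$, where $\mathbb{S}_{nj} := \frac{\sigma_j^{-1}\Omega_j^{-1}}{\sqrt{n}}\sum_{i=1}^n (y_i-x_i'\beta_0)z_i'\mu^j_0$ is the leading term with unit variance by the definition of $\sigma_j^2$ in (\ref{est:sigmaj}), and $\sup_{\mathcal{M}\in\mathcal{M}_n}\P_\mathcal{M}(\max_{j\in S}|R_{nj}|>\delta_n')\to 0$ for a suitable $\delta_n'\to 0$; this uniform control of the remainder is exactly what Theorem \ref{thm:main} delivers (the $O_\P(\delta_n)$ there is uniform over $\mathcal{M}_n$ by construction of that class). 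The point is then to show $\sup_{\mathcal{M}\in\mathcal{M}_n}\max_{j\in S}\big|\P_\mathcal{M}(|\mathbb{S}_{nj}|\le \Phi^{-1}(1-\alpha/2)) - (1-\alpha)\big|\to 0$.

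For that central step I would invoke the self-normalized moderate deviation / high-dimensional CLT machinery already cited in the paper (\cite{chernozhukov2013gaussian} and the companion papers). Since $\mathbb{S}_{nj}$ is a normalized sum of independent (across $i$) mean-zero terms $\sigma_j^{-1}\Omega_j^{-1}(y_i-x_i'\beta_0)z_i'\mu^j_0$ with unit variance, and Condition IV controls the relevant third moments (through the moment bound $\Ep[\frac1n\sum_i (W_i'\theta)^4]^{1/4}\le C\|\theta\|$ and the envelope condition $K_n=\Ep[\max_{i}\|W_i\|_\infty^q]<\infty$ with $q>4$, together with $K_n^{2/q}\log^{1/2}(Kpn)\le\delta_n n^{1/2}$ and the lower bound on $\min_j|\Omega_j|$), a Berry--Esseen-type bound gives $\sup_{t}|\P_\mathcal{M}(\mathbb{S}_{nj}\le t)-\Phi(t)|\le C M_{n}/\sqrt{n}\to 0$ with a rate depending only on the constants in Condition IV, hence uniformly over $\mathcal{M}_n$ and over $j\in S$. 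Combining this with the uniform remainder bound and the anti-concentration of the Gaussian (so that shifting the event by $\delta_n'$ changes its probability by $O(\delta_n')$) yields the first display of the corollary.

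Finally, for the replacement of $\sigma_j$ by $\hat\sigma_j$ I would show $\max_{j\in S}|\hat\sigma_j^2/\sigma_j^2 - 1| = o_\P(1)$ uniformly over $\mathcal{M}_n$. This reduces to controlling $\max_{j\in S}|\hat\Omega_j-\Omega_j|$ and $\max_{j\in S}|\frac1n\sum_i\{(y_i-x_i'\hat\beta)z_i'\hat\mu^j\}^2 - \Ep[\frac1n\sum_i\{(y_i-x_i'\beta_0)z_i'\mu^j_0\}^2]|$; both follow by expanding $y_i-x_i'\hat\beta = (y_i-x_i'\beta_0) - x_i'(\hat\beta-\beta_0)$ and $\hat\mu^j = \mu^j_0 + (\hat\mu^j-\mu^j_0)$, then using the $\ell_q$-rates from Theorems \ref{thm:linmodel:iv} and \ref{thm:nuisancej}, the sparsity bound $\|\beta_0\|_0+\max_j\|\mu^j_0\|_0\le s$ with $s\log(pKn)\le\delta_n\sqrt n$, and the fourth-moment and envelope conditions in Condition IV to bound the cross terms; a union bound over $j\in S$ costs only $\log|S|\le\log p$, which is absorbed. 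Then $\sqrt{n}|\check\beta_j-\beta_{0j}|/\hat\sigma_j = (\sigma_j/\hat\sigma_j)\cdot\sqrt{n}|\check\beta_j-\beta_{0j}|/\sigma_j$, and since $\sigma_j/\hat\sigma_j = 1+o_\P(1)$ uniformly, another application of Gaussian anti-concentration transfers the coverage statement. I expect the main obstacle to be the uniformity (over the whole class $\mathcal{M}_n$ and simultaneously over $j\in S$) of the remainder and variance-estimation bounds: one must be careful that every implied constant and every $o_\P(1)$ depends only on the fixed constants $c, C, q, \delta_n$ appearing in Condition IV, so that the suprema over $\mathcal{M}_n$ can be taken — this is routine but must be tracked, and it is where the structure of Condition IV (phrased entirely in terms of such constants) is doing the real work.
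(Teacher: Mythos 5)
Your proposal is correct and follows essentially the same route as the paper: Theorem \ref{thm:main}'s uniform linear representation plus a central limit theorem for the leading term (the paper invokes Lyapunov; you invoke the quantitative Berry--Esseen form, which makes the uniformity over $\mathcal{M}_n$ and over $j\in S$ explicit but amounts to the same thing given the third-moment bound from Condition IV). Your treatment of the replacement of $\sigma_j$ by $\hat\sigma_j$ — establishing $\max_{j\in S}|\hat\sigma_j/\sigma_j-1|=o_\P(1)$ via the rates from Theorems \ref{thm:linmodel:iv} and \ref{thm:nuisancej} — matches what the paper carries out in Step~2 of the proof of Theorem \ref{thm:inference}.
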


Next we turn to simultaneous confidence bands over $S\subset\{1,\ldots,p\}$ components of $\beta_0$. We allow for the cardinality of $S$ to also grow with the sample size (and potentially $S=\{1,\dots,p\}$). This exploits new central limit theorems for high dimensional vectors, see \cite{chernozhukov2015noncenteredprocesses} and the references therein. The following result provides sufficient conditions under which the multiplier bootstrap procedure based  on (\ref{def:mb}) yields (honest) simultaneous confidence bands.

In what follows we let $A_n\geq \max_{j\in S}\Ep[\frac{n^{-1}}{\sigma_j^{3}\Omega_j^{3}}\sum_{i=1}^n|\xi_iz_i'\mu^j_0|^3]$ and $B_{\bar q,n} \geq \frac{1}{n}\sum_{i=1}^n\Ep[ \max_{j\in S}|\xi_iz_i'\mu^j_0|^{\bar q}]$ for some $\bar q>3$. Note that $B_{\bar q,n} \leq K_n^{\bar q}$ provided that $\bar q \leq q/2$. However, $B_{\bar q,n}$ can be bounded in various settings if $|S|$ is bounded.

\begin{theorem}\label{thm:inference}
Let $\mathcal{M}_n$ be the set of data generating processes that satisfy Condition IV for a fixed $n$. Furthermore, suppose that
$\delta_n\log^{3/2}(n|S|)(1+\max_{j\in S}\|\mu^j_0\|)=o(1)$, $A_n^2\log^7(|S|)=o(n)$, $B_{\bar q,n}^{1/\bar q}\log^{2}(|S|)=o(n^{1/2-1/\bar q})$, and $K_n^{4/q}\log^{5}(|S|)=o(n)$ holds.
For the critical value $c_{\alpha,S}$ computed via the multiplier bootstrap procedure, we have that
$$ \lim_{n\to \infty} \sup_{\mathcal{M}\in \mathcal{M}_n} \left|\P_\mathcal{M} \left( \check \beta_j - \frac{c_{\alpha,S}^* \hat\sigma_j}{\sqrt{n}}\leq \beta_{0j} \leq \check \beta_j + \frac{c_{\alpha,S}^* \hat \sigma_j}{\sqrt{n}}, \ \ \forall j\in S\right)  - (1-\alpha)\right| = 0$$
\end{theorem}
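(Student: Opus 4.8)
The plan is to combine the uniform linear representation of Theorem \ref{thm:main} with a high-dimensional central limit theorem and a Gaussian comparison/multiplier-bootstrap consistency result from the Chernozhukov–Chetverikov–Kato toolbox, \cite{chernozhukov2013gaussian,chernozhukov2014clt,chernozhukov2015noncenteredprocesses}. Write $\mathcal{G}_j := \sigma_j^{-1}\Omega_j^{-1} n^{-1/2}\sum_{i=1}^n (y_i-x_i'\beta_0)z_i'\mu^j_0$ for the leading (infeasible, exactly linear) term. Theorem \ref{thm:main} gives $\max_{j\in S}|\sqrt{n}\sigma_j^{-1}(\check\beta_j-\beta_{0j}) - \mathcal{G}_j| = O_\P(\delta_n)$, and since $\delta_n\log^{3/2}(n|S|)(1+\max_{j\in S}\|\mu^j_0\|)=o(1)$, the remainder is $o_\P(1/\log^{1/2}|S|)$, which is negligible relative to the fluctuations of the maximum of $|S|$ approximately-Gaussian coordinates. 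Hence it suffices to approximate the law of $\max_{j\in S}|\mathcal{G}_j|$.

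First I would normalize: each $\mathcal{G}_j$ is a normalized sum of independent, mean-zero (by \eqref{linmodel:iv2}) terms $g_{ij} := \sigma_j^{-1}\Omega_j^{-1}\xi_i z_i'\mu^j_0$ with $\frac1n\sum_i\Ep[g_{ij}^2]=1$ by definition of $\sigma_j^2$; the third-moment and tail controls are exactly the quantities $A_n$ and $B_{\bar q,n}$ in the hypothesis. Apply the high-dimensional CLT over the rectangle class (Corollary/Proposition in \cite{chernozhukov2014clt} or \cite{chernozhukov2015noncenteredprocesses}) to get a Gaussian vector $N\sim\mathrm{N}(0,\Sigma)$, $\Sigma_{jk}=\frac1n\sum_i\Ep[g_{ij}g_{ik}]$, with $\sup_{t}|\P(\max_{j\in S}|\mathcal{G}_j|\le t)-\P(\max_{j\in S}|N_j|\le t)|\to 0$; the rate conditions $A_n^2\log^7|S|=o(n)$ and $B_{\bar q,n}^{1/\bar q}\log^2|S|=o(n^{1/2-1/\bar q})$ are precisely what those theorems require (the $\max$-over-$S$ rectangle class, including symmetric $|\cdot|$-rectangles, is covered). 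This step must be done uniformly over $\mathcal{M}\in\mathcal{M}_n$, but since the bounds depend on $\mathcal{M}$ only through the quantities assumed uniformly controlled in Condition IV and the extra hypotheses, uniformity is automatic.

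Next, for the bootstrap side, I would show the feasible multiplier statistic $\max_{j\in S}|\widehat{\mathcal{G}}_j|$ from \eqref{def:mb}, conditionally on the data, is close in distribution to $\max_{j\in S}|\tilde N_j|$ where $\tilde N\sim\mathrm{N}(0,\widehat\Sigma)$ with $\widehat\Sigma_{jk}=\frac1n\sum_i \hat\sigma_j^{-1}\hat\Omega_j^{-1}\hat\sigma_k^{-1}\hat\Omega_k^{-1}(y_i-x_i'\hat\beta)(z_i'\hat\mu^j)(z_i'\hat\mu^k)(y_i-x_i'\hat\beta)$: conditionally the multiplier sum is exactly Gaussian with this covariance, so no CLT is needed there. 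It then remains to control $\|\widehat\Sigma-\Sigma\|_\infty = o_\P(1/\log^2|S|)$, which follows from the nuisance-parameter rates of Theorems \ref{thm:linmodel:iv} and \ref{thm:nuisancej} (giving $\|\hat\beta-\beta_0\|_1$, $\max_{j}\|\hat\mu^j-\mu^j_0\|_1=O_\P(s\sqrt{\log/n})$), the consistency $\hat\Omega_j\to\Omega_j$, $\hat\sigma_j\to\sigma_j$, the moment bounds $\Ep[\frac1n\sum(W_i'\theta)^4]^{1/4}\le C\|\theta\|$, and the envelope condition $K_n^{2/q}\log^{1/2}(Kpn)\le\delta_n\sqrt n$ from Condition IV, together with $K_n^{4/q}\log^5|S|=o(n)$ here. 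Feeding this into the Gaussian comparison lemma (\cite{chernozhukov2012comparison}, Lemma on anti-concentration + covariance perturbation) gives $\sup_t|\P(\max|N_j|\le t)-\P^*(\max|\tilde N_j|\le t)|=o_\P(1)$. Combining the three approximations (linear representation, HDCLT, bootstrap consistency) with the anti-concentration of $\max_{j\in S}|N_j|$ (to convert closeness of cdfs into closeness of quantiles, so that replacing $c_{\alpha,S}^*$ for the true $(1-\alpha)$-quantile costs $o(1)$), and finally swapping $\sigma_j$ for $\hat\sigma_j$ in the event using $\hat\sigma_j/\sigma_j\to 1$ uniformly as in Corollary \ref{cor:PointwiseInference}, yields the claimed uniform coverage.

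The main obstacle I anticipate is the uniform control of $\|\widehat\Sigma-\Sigma\|_\infty$ at the sharp rate $o_\P(\log^{-2}|S|)$ while the entries of $\widehat\Sigma$ involve plug-in high-dimensional nuisance estimates $\hat\beta,\hat\mu^j$ and the summands are only $(q/2)$-integrable with $q>4$ (not sub-Gaussian). Getting the cross terms $\frac1n\sum_i (z_i'\hat\mu^j - z_i'\mu^j_0)\,\xi_i\,(z_i'\mu^k_0)$ and the analogous $\hat\beta$-terms down to this order uniformly in $j,k\in S$ and uniformly over $\mathcal{M}_n$ is where the envelope/moment conditions of Condition IV and the extra logarithmic-rate hypotheses of this theorem all get used; a self-normalized moderate-deviation bound (as in \cite{BellChenChernHans:nonGauss}) applied coordinatewise, followed by a union bound over the $O(|S|^2)$ pairs, should deliver it, but the bookkeeping is delicate.
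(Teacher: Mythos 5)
Your overall strategy — linear representation via Theorem \ref{thm:main}, high-dimensional CLT for the leading term, bootstrap validity, and finally replacement of $\sigma_j$ by $\hat\sigma_j$ — matches the paper's skeleton. But your treatment of the bootstrap step takes a genuinely different route from the paper, and that route does not quite close under the theorem's stated hypotheses.

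The paper's proof (via Lemma \ref{cor:gaussapprox} and Steps 2--3 of the argument for Theorem \ref{thm:inference}) does \emph{not} reduce bootstrap validity to a single bound $\|\widehat\Sigma-\Sigma\|_\infty = o_\P(\log^{-2}|S|)$ followed by a Gaussian comparison lemma. Instead it splits the bootstrap error into two pieces and treats them by different tools: (i) the plug-in error $\widetilde Z^*-\bar Z^*$, i.e.\ the difference between the feasible multiplier max built from $\hat\psi_j$ and the infeasible one built from $\bar\psi_j$, is bounded \emph{directly on the maximum statistic}, using the fact that conditionally it is Gaussian with per-coordinate variance $\En[\{\hat\psi_j-\bar\psi_j\}^2]\lesssim\varepsilon_n^2$ and then Gaussian maximal inequality, giving $O_\P(\varepsilon_n\sqrt{\log|S|})$; (ii) the law-of-large-numbers error in the covariance, $\Delta=\max_{j,k}|\En[\bar\psi_j\bar\psi_k]-\Ep[\frac1n\sum\bar\psi_j\bar\psi_k]|$, is fed into Theorem 3.2 of \cite{chernozhukov2015noncenteredprocesses} together with a conditional Strassen coupling. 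Finally both pieces are converted into a quantile shift via the quantile-derivative bound from \cite{chernozhukov2014honestbands}, and the resulting requirement on the plug-in error is $\varepsilon_n\log|S| = o(1)$.

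Your route, by contrast, folds both the plug-in error and the LLN error into $\|\widehat\Sigma-\Sigma\|_\infty$ and applies the covariance-perturbation Gaussian comparison from \cite{chernozhukov2012comparison}. That lemma gives a Kolmogorov-distance bound of order $\|\widehat\Sigma-\Sigma\|_\infty^{1/3}\log^{2/3}|S|$, so you correctly identify that you need $\|\widehat\Sigma-\Sigma\|_\infty=o_\P(\log^{-2}|S|)$. The LLN piece meets this under $K_n^{4/q}\log^5|S|=o(n)$, as you note. But the plug-in piece is of order $\varepsilon_n\asymp\delta_n\log^{1/2}(n|S|)(1+\max_j\|\mu^j_0\|)$: covariance entries move by roughly $\|\hat\psi_j-\bar\psi_j\|\cdot\|\bar\psi_k\|\asymp\varepsilon_n$. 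Requiring that this be $o(\log^{-2}|S|)$ amounts to $\delta_n\log^{1/2}(n|S|)\log^2|S|=o(1)$, which in the regime $|S|\asymp p\asymp n$ is strictly stronger than the stated hypothesis $\delta_n\log^{3/2}(n|S|)(1+\max_j\|\mu^j_0\|)=o(1)$ by about a $\log^{1/2}$ factor. So as written, your argument needs either a slightly stronger $\delta_n$ hypothesis or — as the paper does — a coupling-type treatment of the plug-in error on the max itself, which is more efficient precisely because it does not push a rank-one-ish perturbation of the covariance through the cube-root loss of the comparison lemma. The rest of your outline (exact conditional Gaussianity of the multiplier statistic, anti-concentration to convert cdf closeness to quantile closeness, uniformity coming for free from the uniformity of Condition IV) is correct and matches the paper.
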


Theorem \ref{thm:inference} yields simultaneous confidence regions that are uniformly valid across models that satisfy Condition IV and the additional growth conditions on the $n$, $\log |S|$ and various moments. In particular, they allow for (sequence of) models where model selection mistakes are unavoidable.

\section{Monte Carlo Experiments}

In this section we investigate the finite sample performance of the proposed estimators via numerical simulations. We consider the following data generating process
$$ y = x'\beta_0 + \varepsilon, \ \ \ x_j = \tilde x_j + \sum_{\ell=1}^Lz_{L(j-1)+k}, \ \ \ \varepsilon = \zeta + \tilde x'\gamma_0 $$
where $\beta_0=(1,.8,.6,.4,.2,0,\ldots)'$, $\gamma_0=(.1,.2,.3,.4,.5,.6,.7,.8,.9,1,0,\ldots)'$, $z \sim N(0,I_{Fp})$, $\tilde x \sim N(0,\Sigma)$, $\Sigma_{ij}=0.3^{|i-j|}$, and $\zeta \sim N(0,1/4^2)$. This corresponds to the case that $L$ instruments are useful for each covariate.

\begin{table}[h!]
\begin{tabular}{cccc|cc|ccc}
    & &  &  &   \multicolumn{2}{c|}{Uniform over $\{1,2,3\}$} & \multicolumn{3}{c}{Bias}\\
$n$ & $p$ & $K$ & $L$ & rp(.05) & $\ell_\infty$ & $\beta_{01}$ & $\beta_{02}$ & $\beta_{03}$  \\
\hline
500 & 30 &  30 &  1 & 0.060 & 0.1744 & -0.0130 & -0.0244 & -0.0199 \\
500 & 30 &  90 &  3 & 0.053 & 0.0868 & -0.0019 &  0.0022 & 0.0010 \\
500 & 100 & 100 & 1 & 0.040 & 0.1636 & -0.0091 & -0.0108& -0.0199 \\
500 & 100 & 300 & 3 & 0.077 & 0.0831 & -0.0011 & -0.0030& 0.0013 \\
500 & 300 & 300 & 1 & 0.057 & 0.1451 & -0.0056 & -0.0089& -0.0183 \\
500 & 300 & 900 & 3 & 0.060 & 0.0734 & 0.0012 & 0.0012&  0.0029\\
\end{tabular}
\end{table}

The table above illustrates how the proposed procedure accurately controls size where the confidence region aims to simultaneously cover the first three components. The table also reports the bias in the estimation of each component. To achieve correct size in this high-dimensional settings it is critical to obtain a low bias and this is what the table reports.

{\bf Acknowledgement.} We thank Victor Chernozhukov and Denis Chetverikov for several comments and suggestions.

\begin{appendix}

\section{Proofs of Section 3}

\begin{proof}[Proof of Theorem \ref{thm:ExtraSensitivity}]  Let $T \subset [p]$ and $\delta \in C_T(u)$, i.e., $\|\delta_{T^c}\|_1 \leq u\|\delta_T\|_1$.
Let $m\in \mathbb{N}$ and denote by $T_1$ the support of the $m$-largest components (in absolute value) of $\delta$, $T_2$ the subsequent $m$-largest, etc. For any $\mu \in \mathbb{R}^K$, such that $\|\mu\|_0\leq m$, we have
$$\begin{array}{rl}
\|\Psi \delta \|_\infty &  = \max_{\mu } \left|\frac{\mu'}{\|\mu\|_1}\Psi \delta\right| \geq  \left|\frac{\mu'}{\sqrt{m}\|\mu\|}\Psi \delta\right| \geq \left|\frac{\mu'}{\sqrt{m}\|\mu\|}\Psi \delta_{T_1}\right| - \left|\frac{\mu'}{\sqrt{m}\|\mu\|}\Psi \delta_{T_1^c}\right| \\
& \geq \left|\frac{\mu'}{\sqrt{m}\|\mu\|}\Psi \delta_{T_1}\right| - \sum_{\ell\geq 2}\left|\frac{\mu'}{\sqrt{m}\|\mu\|}\Psi \delta_{T_\ell}\right| \\
& \geq \left|\frac{\mu'}{\sqrt{m}\|\mu\|}U\Lambda V \delta_{T_1}\right| - \sum_{\ell\geq 2} \left | \frac{\mu'}{\sqrt{m}\|\mu\|} \Psi  \delta_{T_\ell}\right | \\
\end{array}
$$
where $U\Lambda V$ denotes the SVD of $\Psi_{J_\mu,T_1}$. By definition of $\sigma_{\min}(m)$, by suitably choosing $J_{\mu} = \text{support}(\mu)$ to correspond to the largest singular values $\Lambda_{jj}$, $j=1,...,m$, we have that $\Lambda_{jj} \geq \sigma_{\min}(m)$ for $j=1,\ldots,m$. We can assume $\sigma_{\min}(m)>0$ otherwise the result is trivial.

Since the columns of $U'$ associated to $J_\mu$ span $\mathbb{R}^m$, we can let $\mu$ be such that $U'\mu=V\delta_{T_1}$, so that $\|\mu\|=\|\delta_{T_1}\|$, and recall that by assumption $\Lambda_{jj} \geq \sigma_{\min}(m)$. Hence
$$
\left|\frac{\mu'}{\sqrt{m}\|\mu\|}U\Lambda V \delta_{T_1}\right| \geq  \frac{\sigma_{\min}(m)}{\sqrt{m}}\|\delta_{T_1}\|.
$$
Moreover, we have
$$
\left|\frac{\mu'}{\sqrt{m}\|\mu\|} \Psi  \delta_{T_\ell}\right|=\left|\frac{\mu'}{\sqrt{m}\|\mu\|} \Psi_{J_\mu,T_\ell}  \delta_{T_\ell}\right|\leq \frac{\sigma_{\max}(m)}{\sqrt{m}} \| \delta_{T_\ell}\|.
$$ where the last inequality follows by definition of $\sigma_{\max}$ and $|J_\mu|\leq m$, $|T_\ell|\leq m$.

By construction of the sets $\{T_\ell\}_{\ell\geq 1}$, we have $\|\delta_{T_{\ell}}\|_1/\sqrt{m} \geq \|\delta_{T_{\ell+1}}\|$.

Thus we have
$$\begin{array}{rl}
\|\Psi \delta \|_\infty &
 \geq \frac{\sigma_{\min}(m)}{\sqrt{m}}\|\delta_{T_1}\| - \frac{\sigma_{\max}(m)}{\sqrt{m}}\sum_{\ell\geq 1}\|\delta_{T_\ell}\|_1/\sqrt{m} \\
& \geq \frac{\sigma_{\min}(m)}{\sqrt{m}}\|\delta_{T_1}\| - \frac{\sigma_{\max}(m)}{\sqrt{m}}(1+u)\|\delta_{T}\|_1/\sqrt{m} \\
& \geq \frac{\sigma_{\min}(m)}{\sqrt{m}}\|\delta_{T_1}\| - \frac{\sigma_{\max}(m)}{\sqrt{m}}(1+u)\|\delta_{T}\|\sqrt{s/m} \\
\end{array}
$$
where we used that %$\|U\mu\|=\|\mu\|$, $\|V\delta_{T_1}\|=\|\delta_{T_1}\|$,
$\|\delta\|_1 \leq (1+u)\|\delta_T\|_1$,
which is implied by $\| \theta_0 + \delta\|_1 \leq \| \theta_0\|_1$,  and $|T|=s$.

To conclude the proof note that by a similar argument $$\|\delta\| \leq \|\delta_{T_1}\| + \sqrt{s/m}(1+u)\|\delta_T\| \leq \{1+(1+u)\sqrt{s/m}\}\|\delta_{T_1}\|$$
and that $\|\delta\|/\|\delta\|_1 \geq 1/\{\sqrt{s}(1+u)\}$. \end{proof}

\begin{proof}[Proof of Corollary \ref{cor:ExtraSensitivity-WeakIV}]

Take $\theta \in C_J(u)$ for some $J \subset [p]$, $|J|\leq s$. Note that
$$\begin{array}{rl}
\frac{\|\Psi \theta \|_\infty}{\|\theta\|_q}
& \geq_{(1)} {\displaystyle \max_{m \geq s }} \left\{\frac{\sigma_{\min}(m)}{\sqrt{m}} - \frac{\sigma_{\max}(m)}{\sqrt{m}}(1+u)\sqrt{s/m}\right\}\frac{s^{1/2-1/q}}{\{1+(1+u)\sqrt{s/m}\}(1+u)}  \\
& \geq_{(2)} \frac{s^{1/2-1/q}}{8} \frac{\mu_n^2}{16\sqrt{s}}  \\
  & = \frac{s^{-1/q}}{8} \frac{\mu_n^2}{16}  \\
\end{array}
$$ where (1) follows from Theorem \ref{thm:ExtraSensitivity}, (2) by setting $m=64s/\mu_n^2$ and $u=3$.
\end{proof}

\begin{proof}[Proof of Corollary \ref{cor:ExtraSensitivity}]

Take $\theta \in C_J(u)$ for some $J \subset [p]$, $|J|\leq s$. Note that

$$\begin{array}{rl}
\frac{\|\Psi \theta \|_\infty}{\|\theta\|_q} & \geq_{(1)} \| \Ep[zx'] \theta\|_\infty/\|\theta\|_q - \| (\En-\Ep)[zx']\theta\|_\infty/\|\theta\|_q \\
& \geq_{(2)} {\displaystyle \max_{m \geq s }} \left\{\frac{\sigma_{\min}(m)}{\sqrt{m}} - \frac{\sigma_{\max}(m)}{\sqrt{m}}(1+u)\sqrt{s/m}\right\}\frac{s^{1/2-1/q}}{\{1+(1+u)\sqrt{s/m}\}(1+u)}  \\
& - \| (\En-\Ep)[zx']\theta\|_\infty/\|\theta\|_q \\
& \geq_{(3)} \frac{s^{1/2-1/q}}{2(1+u)} {\displaystyle \max_{m \geq s (1+u)^2 }} \left\{\frac{c}{\sqrt{m}} - \frac{C(1+u)\sqrt{s/m}}{\sqrt{m}}\right\} \\
   &- \| (\En-\Ep)[zx']\|_\infty\|\theta\|_1/\|\theta\|_q \\
& \geq_{(4)} \frac{s^{-1/q}}{2(1+u)} \frac{c}{2} \frac{c}{2C(1+u)}   - \| (\En-\Ep)[zx']\|_\infty\|\theta\|_1/\|\theta\|_q \\
\end{array}
$$ where (1) follows from the triangle inequality, (2) follows from Theorem \ref{thm:ExtraSensitivity}, (3) by restricting the maximum and Holder's inequality, and (4) by setting $m=4sC^2(1+u)^2/c^2$ and using that $s(1+u)^2\log n \geq m$ for $n$ large.

Next we control $\| (\En-\Ep)[zx']\|_\infty$ via Lemma \ref{lem:m2bound}(2). Note that $|\Ep[z_{ij}x_{ik}]|\leq \frac{1}{2}\Ep[z_{ij}^2+x_{ik}^2] \leq K_n$. Therefore
$$ \Ep[ \| (\En-\Ep)[zx']\|_\infty] \leq C' \sqrt{ K_n n^{-1} \log(Kp)} + K_n^{1/2} n^{-1}\log(Kp)$$
Then with probability $1-\ell_n$, for some $\ell_n\to 0$, by Markov's inequality
$$  \| (\En-\Ep)[zx']\|_\infty \leq C'' \sqrt{ K_n n^{-1} \log(Kp)}/\ell_n$$
Moreover, if $q=1$, $\|\theta\|_1/\|\theta\|_q =1$, and if $q=2$, $\|\theta\|_1/\|\theta\|_q \leq (1+u)s^{1/2}$.

Therefore, with probability $1-\ell_n$ we have
$$ \| (\En-\Ep)[zx']\|_\infty\|\theta\|_1/\|\theta\|_q  \leq  C''' (1+u)s^{1-1/q} \sqrt{ K_n n^{-1} \log(Kp)}/\ell_n $$

The result follows provided $n$ is large enough so that
$$ C''' (1+u)s^{1-1/q} \sqrt{ K_n n^{-1} \log(Kp)}/\ell_n \leq \frac{s^{-1/q}}{4(1+u)} \frac{c}{2} \frac{c}{2C(1+u)}$$
which is possible for $\ell_n\to 0$ slowly enough under $(1+u)^6s^2 K_n n^{-1}\log(Kp) = o(1)$, and we obtain
$$ \kappa_q^{\Psi}(s)  \geq s^{-1/q} c  \ \frac{c/C}{16(1+u)^2} \ \ q\in\{1,2\}$$
\end{proof}

\begin{proof}[Proof of Theorem \ref{thm:linmodel:iv}]
Let $t_j(\beta)=\{\frac{1}{n}\sum_{i=1}^nz_{ij}^2(y_i-x_i'\beta)^2\}^{1/2}$ and $U_{ij} = z_{ij}(y_i-x_i'\beta_0) = z_{ij}\xi_i$ is such that $M_{nz\xi} \geq \Ep[\frac{1}{n}\sum_{i=1}^n|U_{ij}|^3]^{1/3} /\Ep[\frac{1}{n}\sum_{i=1}^n|U_{ij}|^2]^{1/2}$ for all $j \leq K$. By Lemma 7.4 in \cite{delapena}, we have
\begin{equation}\label{linmodel:eqsn}\begin{array}{rl}
 \P\left( \left| \frac{1}{n}\sum_{i=1}^n z_{ij}(y_i-x_i'\beta_0) \right|
>  \tau t_j(\beta_0)  \right)
& = \P\left( \left| \frac{1}{\sqrt{n}}\sum_{i=1}^n U_{ij} \right|
> \sqrt{n}\tau \left\|\frac{U_{\cdot j}}{\sqrt{n}}\right\|  \right) \\
& \leq \{1-\Phi(\sqrt{n}\tau)\}\left( 1 + \frac{A}{\ell_n^3} \right)\\
&=\frac{\alpha}{2K}\left( 1 + \frac{A}{\ell_n^3} \right) \\
& \end{array}
\end{equation}
for some $\ell_n\to \infty$ given the assumption $M_{nz\xi}\log^{1/2}(Knp)=o(n^{1/6})$ in Condition IV1 and the choice of $\tau$. Then by the union bound we have that $\beta_0$ is feasible in the proposed optimization problem with probability at least $1-\alpha\left( 1 + \frac{A}{\ell_n^3} \right)$. Therefore
\begin{equation}\label{linmodel:opt:iv} \|\hat\beta\|_1 + \lambda_t \|\hat t\|_\infty \leq \|\beta_0\|_1 + \lambda_t\|t(\beta_0)\|_\infty\end{equation}
Since $\|t(\hat\beta)\|_\infty\leq \|\hat t\|_\infty$ and $  |t_j(\hat\beta) -  t_j(\beta_0)|\leq  H_n\|\hat\beta-\beta_0\|_1$, by
setting $\lambda_t = \frac{1}{2H_n}$  we have
$$ \frac{1}{2}\|\hat \beta_{T^c}\|_1 \leq \frac{3}{2}\|\beta_0-\hat\beta_T\|_1$$

Next note that with probability $1-\alpha(1+o(1))$
$$\begin{array}{rl}
\|\frac{1}{n}Z'X(\hat\beta-\beta_0)\|_\infty & \leq \|\frac{1}{n}Z'(Y-X\hat\beta)\|_\infty + \|\frac{1}{n}Z'\xi\|_\infty\\
& \leq \tau\|\hat t\|_\infty + \tau\|t(\beta_0)\|_\infty\\
& \leq \frac{\tau}{\lambda_t}\|\hat \beta-\beta_0\|_1 + 2\tau\|t(\beta_0)\|_\infty\\
& \leq \frac{\tau}{\lambda_t\kappa_1^{ZX}(s,3)}\|\frac{1}{n}Z'X(\hat\beta-\beta_0)\|_\infty + 2\tau\|t(\beta_0)\|_\infty\\
\end{array}$$
where the first inequality follows from the triangle inequality, the second by feasibility of $(\hat\beta,\hat t)$ and (\ref{linmodel:eqsn}), and the third inequality follows from (\ref{linmodel:opt:iv}). The fourth inequality follows from the definition of the $\ell_1$-sensitivity $\kappa_1^{ZX}(s,3)$.

Therefore we have
$$
\begin{array}{rl}
\left(1-\frac{\tau}{\lambda_t\kappa_1^{ZX}(s,3)}\right)\kappa_q^{ZX}(s,3)\|\hat\beta-\beta_0\|_q \leq 2\tau\|t(\beta_0)\|_\infty
\end{array}
$$
and the result follows from the condition $\tau \leq 2^{-1}\lambda_t\kappa_1^{ZX}(s,3)$ holding with probability $1-\epsilon$.
\end{proof}

\begin{proof}[Proof of Theorem \ref{thm:nuisancej}]
We begin by defining some convenient notation. We fix a $j\in S$. Let $t=(t^x{}',t^z{}',t^{xz}{}')' $ be a vector in $\mathbb{R}^{2(p-1)+(K-1)}$. Define the functions
$$\begin{array}{rl}
t_{\ell}^{xz}(\mu)=\{\frac{1}{n}\sum_{i=1}^n|x_{i\ell}z_i'\mu|^2\}^{1/2}, \ \ell \in [p]\setminus \{j\}\\
t_{\ell}^x(\mu,\vartheta)=\{\frac{1}{n}\sum_{i=1}^n|x_{i\ell}(x_j-z_i'\mu-x_{i,-j}'\vartheta)|^2\}^{1/2}, \ell \in [p]\setminus \{j\}\\ t_{\ell}^z(\mu,\vartheta)=\{\frac{1}{n}\sum_{i=1}^n|z_{i\ell}(x_j-z_i'\mu-x_{i,-j}'\vartheta)|^2\}^{1/2}, k\in [K].\end{array}$$ Next define the random variables $\widetilde U_{i\ell}^{xz} = x_{i\ell}z_i'\mu_0^j$, $\widetilde U_{i\ell}^x=x_{i\ell}v_{ij}$ and $\widetilde U_{ik}^z=z_{i\ell}v_{ij}$, $\ell \in [p]\setminus\{j\}$ and $k \in [K]$, so that $\Ep[\frac{1}{n}\sum_{i=1}^n\widetilde U_{i\ell}^{xz}]=0$, $\Ep[\frac{1}{n}\sum_{i=1}^n\widetilde U_{i\ell}^{x}]=0$, and $\Ep[\frac{1}{n}\sum_{i=1}^n\widetilde U_{ik}^{z}]=0$ by (\ref{ortho:new}) and the definition of $(\mu_0^j,\vartheta_0^j)$. Finally we define $U_{i\ell}^h = \widetilde U_{i\ell}^h - \Ep[U_{i\ell}^h]$ for $h=z,x,zx$. Therefore we have $M_{h,j} \geq \Ep[\frac{1}{n}\sum_{i=1}^n|U_{i\ell}^h|^3]^{1/3}/\Ep[\frac{1}{n}\sum_{i=1}^n|U_{i\ell}^h|^2]^{1/2}$ for $h=x,z,xz$ by Condition IV2.

Since $\sum_{i=1}^n U_{i\ell}^h=\sum_{i=1}^n \widetilde U_{i\ell}^h$ since $\Ep[\sum_{i=1}^n \widetilde U_{i\ell}^h]=0$ we have that
\begin{equation}\label{linmodel:eqsn20}\begin{array}{ll}
\P\left(\exists (\ell,h): \left| \frac{1}{\sqrt{n}}\sum_{i=1}^n \widetilde U_{i\ell}^h \right|
> c\sqrt{n}\tau \left\|\frac{\widetilde U_{\cdot \ell}^h}{\sqrt{n}}\right\|  \right)
\\ = \P\left(\exists (\ell,h):  \left| \frac{1}{\sqrt{n}}\sum_{i=1}^n  U_{i\ell}^h \right|
> c\sqrt{n}\tau \left\|\frac{\widetilde U_{\cdot \ell}^h}{\sqrt{n}}\right\|  \right)\\
\leq \P\left(\exists (\ell,h):  \left| \frac{1}{\sqrt{n}}\sum_{i=1}^n  U_{i\ell}^h \right|
> \sqrt{n}\tau \left\|\frac{U_{\cdot \ell}^h}{\sqrt{n}}\right\|  \right)\\
+ \P\left(\exists (\ell,h): \left\|U_{\cdot \ell}^h\right\| > c\left\|\widetilde U_{\cdot \ell}^h\right\|  \right)\\
\end{array}
\end{equation} where the constant $c$ was taken as $c=1.1$.

To bound the first term in the RHS of (\ref{linmodel:eqsn20}) note that $\Ep[U_{i\ell}^h]=0$ for $h=z,x,{xz}$ and each corresponding $\ell$. Then by Lemma 7.4 in \cite{delapena}, for $h=x,z,xz$ and each corresponding $\ell$, we have
\begin{equation}\label{linmodel:eqsn2}\begin{array}{rl}
\P\left( \left| \frac{1}{\sqrt{n}}\sum_{i=1}^n U_{i\ell}^h \right|
> \sqrt{n}\tau \left\|\frac{U_{\cdot \ell}^h}{\sqrt{n}}\right\|  \right)
& \leq \{1-\Phi(\sqrt{n}\tau)\}\left( 1 + \frac{A}{\ell_n^3} \right)\\
&=\frac{\alpha}{2|S|(K+2p)}\left( 1 + \frac{A}{\ell_n^3} \right) \\
& \end{array}
\end{equation}
for some $\ell_n\to \infty$ given our condition $M_{h,j}\log^{1/2}(|S|Kpn)=o(n^{1/3})$, $h=z,x$, and the choice of $\tau$.

To bound the second term in the RHS of (\ref{linmodel:eqsn20}) we have
\begin{equation}\label{linmodel:eqsn3}\begin{array}{ll}
 & \P\left(\exists (\ell,h): \sum_{i=1}^n(U_{i\ell}^h)^2 > c^2\sum_{i=1}^n(U_{i\ell}^h+\Ep[\widetilde U_{i\ell}^h])^2  \right)\\
  & \leq \P\left(\exists (\ell,h): \sum_{i=1}^nU_{i\ell}^h\Ep[\widetilde U_{i\ell}^h]  > \frac{c^2-1}{2c^2}\sum_{i=1}^n(U_{i\ell}^h)^2+(\Ep[\widetilde U_{i\ell}^h])^2  \right)\\
  & \leq \P\left(\exists (\ell,h): \sum_{i=1}^nU_{i\ell}^h\Ep[\widetilde U_{i\ell}^h]  \geq c'n\right)\\
   & + \P\left(\exists (\ell,h): \frac{c^2-1}{2c^2}\sum_{i=1}^n(U_{i\ell}^h)^2+(\Ep[\widetilde U_{i\ell}^h])^2  < c'n \right)\\
   & = o(1)
  \end{array}
\end{equation}
Indeed we have that for $A_n = \frac{1}{n}\sum_{i=1}^n\Ep[(U_{i\ell}^h)^21\{|U_{i\ell}^h|\leq \sqrt{\log n}\}]$
$$\begin{array}{rl}
\P\left(\sum_{i=1}^n(U_{i\ell}^h)^2 < A_n n - tn \right) & \leq \P\left(\sum_{i=1}^n(U_{i\ell}^h)^21\{|U_{i\ell}^h|\leq \sqrt{\log n}\} < A_n n -tn \right)\\
 &\leq \exp(-t^2n/2\log^2 n)\\
\end{array}
$$
by Hoefding inequality. Note that $\Ep[\frac{1}{n}\sum_{i=1}^n(U_{i\ell}^h)^2]-A_n \leq \Ep[\sum_{i=1}^n(U_{i\ell}^h)^21\{|U_{i\ell}^h|> \sqrt{\log n}\}] \leq \Ep[ \frac{1}{n}\sum_{i=1}^n|U_{i\ell}^h|^3 ]/\sqrt{\log n} = o(1)$ by Condition IV2. Also by Condition IV2 we have $\Ep[\frac{1}{n}\sum_{i=1}^n(U_{i\ell}^h)^2]\geq \tilde c$.

Next by Markov's inequality
$$\P\left(\exists (\ell,h): \sum_{i=1}^nU_{i\ell}^h\Ep[\widetilde U_{i\ell}^h]  \geq c'n\right) \leq \Ep\left[\max_{\ell, h} \left|\frac{1}{n}\sum_{i=1}^nU_{i\ell}^h\Ep[\widetilde U_{i\ell}^h]\right|\right]/c'$$
Then by Lemma \ref{lem:m2bound}(2) we have
$$\begin{array}{rl}
 \P\left(\exists (\ell,h): \sum_{i=1}^nU_{i\ell}^h\Ep[\widetilde U_{i\ell}^h]  \geq c'n\right) & \leq C\sqrt{n^{-1}\log(Kpn)} \\
 & + Cn^{-1}\Ep[\max_{i,\ell,h}|U_{i\ell}^h\Ep[\widetilde U_{i\ell}^h]|^2]^{1/2}\log(Kpn)\\
 & = o(1) \end{array}$$
since $Cn^{-1}\Ep[\max_{i,\ell,h}|U_{i\ell}^h\Ep[\widetilde U_{i\ell}^h]|^2]^{1/2}\log(Kpn)=o(1)$ by Condition IV2.

Then by combining the union bound with (\ref{linmodel:eqsn2}) and (\ref{linmodel:eqsn3}), we have that $(\mu_0^j,\vartheta^j_0)$ is feasible in the proposed optimization problem with probability at least $1-\alpha\left( 1 + \frac{A}{\ell_n^3} \right)-o(1)$. Therefore
\begin{equation}\label{linmodel:opt:iv} \|\hat\mu^j\|_1+\|\hat\vartheta^j\|_1 + \lambda_t \|\hat t\|_\infty \leq \|\mu_0^j\|_1 + \|\vartheta^j_0\|_1+\lambda_t\|t(\mu_0^j,\vartheta^j_0)\|_\infty\end{equation}

Next note that
\begin{equation}\label{auxboundt}
\begin{array}{rl}
|t_k^{xz}(\hat\mu^j) -  t_k^{xz}(\mu_0^j)|^2 & = | \{\frac{1}{n}\sum_{i=1}^n|x_{ik}z_i'\hat\mu^j|^2\}^{1/2} - \{\frac{1}{n}\sum_{i=1}^n|x_{ik}z_i'\mu_0^j|^2\}^{1/2} |^2 \\
& \leq \frac{1}{n}\sum_{i=1}^n|x_{ik}z_i'(\hat\mu^j-\mu^j_0)|^2\\
& \leq \|\hat\mu^j-\hat\mu^j_0\|_1^2 H_n^2\\
|t_k^z(\hat\mu^j,\hat\vartheta^j) -  t_k^z(\mu_0^j,\vartheta^j_0)|^2 & \leq \frac{1}{n}\sum_{i=1}^n|z_{ik}\{z_i'(\hat\mu^j-\mu^j_0)+x_{i,-j}'(\hat\vartheta^j-\vartheta^j_0)\}|^2\\
& \leq \{\|\hat\mu^j-\hat\mu^j_0\|_1+\|\hat\vartheta^j-\vartheta^j_0\|_1\}^2 H_n^2\\
|t_k^x(\hat\mu^j,\hat\vartheta^j) -  t_k^x(\mu_0^j,\vartheta^j_0)|^2 & \leq \frac{1}{n}\sum_{i=1}^n|x_{ik}\{z_i'(\hat\mu^j-\mu^j_0)+x_{i,-j}'(\hat\vartheta^j-\vartheta^j_0)\}|^2\\
& \leq \{\|\hat\mu^j-\hat\mu^j_0\|_1+\|\hat\vartheta^j-\vartheta^j_0\|_1\}^2 H_n^2
\end{array}
\end{equation}
where $H_n \geq {\displaystyle \max_{k\in[K],\ell\in[p]}}\left\{\|\En[z_{k}^2zz']\|_\infty, \|\En[z_{k}^2xx']\|_\infty, \|\En[x_{\ell}^2zz']\|_\infty\right\}$.

Next let $T_j := \supp(\mu_j^0,\vartheta^j_0)$. Since $\|t(\hat\mu^j,\hat\vartheta^j)\|_\infty\leq \|\hat t\|_\infty$,  using (\ref{auxboundt}) and the definition of  $\lambda_t = \frac{1}{2H_n}$, by (\ref{linmodel:opt:iv}) we have
$$\begin{array}{rl}
\|(\hat\mu^j,\hat\vartheta^j)_{T_j^c}\|_1 & \leq \|(\mu^j_0,\vartheta^j_0)\|_1 + \lambda_t\|t(\mu^j_0,\vartheta^j_0)\|_\infty -  \lambda_t \|t(\hat\mu^j,\hat\vartheta^j)\|_\infty-\|(\hat\mu^j,\hat\vartheta^j)_{T_j}\|_1\\
& \leq \|(\mu_0^j, \vartheta^j_0)-(\hat\mu^j,\hat\vartheta^j)_{T_j}\|_1 + \lambda_t\|t(\mu^j_0,\vartheta^j_0)-t(\hat\mu^j,\hat\vartheta^j)\|_\infty\\
& \leq \|(\mu_0^j, \vartheta^j_0)-(\hat\mu^j,\hat\vartheta^j)_{T_j}\|_1 + \frac{1}{2}\|(\mu_0^j, \vartheta^j_0)-(\hat\mu^j,\hat\vartheta^j)\|_1\\
\end{array}$$
so that $\|(\hat\mu^j,\hat\vartheta^j)_{T_j^c}\|_1\leq 3\|(\mu_0^j, \vartheta^j_0)-(\hat\mu^j,\hat\vartheta^j)_{T_j}\|_1$. Furthermore, note that
\begin{equation}\label{auxboundt2} |\hat t_\ell| \leq  \|t(\mu^j_0,\vartheta^j_0)\|_\infty + \frac{\|(\mu_0^j, \vartheta^j_0)-(\hat\mu^j,\hat\vartheta^j)\|_1}{\lambda_t} \end{equation}

Next note that with probability $1-\alpha\{1+o(1)\}$ for all $j\in S$ we have $(\mu^j_0,\vartheta^j_0)$ is feasible to the corresponding (\ref{def:nuisancej}). Then defining $\widetilde Z = [ Z ; X_{-j} ]$, $\theta^j_0 = (\mu^j_0,\vartheta^j_0)$,  and $\hat \theta^j = (\hat\mu^j,\hat\vartheta^j)$, we have
$$\begin{array}{rl}
\|\frac{1}{n}\widetilde Z'\widetilde Z(\hat\theta^j-\theta_0^j)\|_\infty & \leq \|\frac{1}{n}\widetilde Z'(X_j-\widetilde Z\hat\theta^j)\|_\infty + \|\frac{1}{n}\widetilde Z'(X_j-\widetilde Z\theta^j_0)\|_\infty\\
& \leq \tau\|\hat t^x\|_\infty + \tau\|t(\mu^j_0,\vartheta^j_0)\|_\infty\\
& \leq \frac{\tau}{\lambda_t}\|\hat\theta^j-\theta_0^j\|_1 + 2\tau\|t(\mu^j_0,\vartheta^j_0)\|_\infty\\
& \leq \frac{\tau}{\lambda_t\kappa_1^{\widetilde Z\widetilde Z}(s,3)}\|\frac{1}{n}\widetilde Z'\widetilde Z(\hat\theta^j-\theta_0^j)\|_\infty + 2\tau\|t(\mu^j_0,\vartheta^j_0)\|_\infty\\
\end{array}$$
where the first inequality follows from the triangle inequality, the second by feasibility of $(\hat\mu^j,\hat t)$ and (\ref{linmodel:eqsn}), and the third inequality follows from (\ref{auxboundt2}). The fourth inequality follows from the definition of the $\ell_1$-sensitivity $\kappa_1^{\widetilde Z\widetilde Z}(s,3)$.

Therefore we have
$$
\begin{array}{rl}
\left(1-\frac{\tau}{\lambda_t\kappa_1^{\widetilde Z\widetilde Z}(s,3)}\right)\kappa_q^{\widetilde Z\widetilde Z}(s,3)\|\hat\theta^j-\theta_0^j\|_q \leq 2\tau\|t(\mu^j_0,\vartheta^j_0)\|_\infty
\end{array}
$$
provided the condition $\tau \leq 2^{-1}\lambda_t\kappa_1^{\widetilde Z\widetilde Z}(s,3)$ holding with probability $1-\epsilon$.
\end{proof}

\begin{proof}[Proof of Theorem \ref{thm:main}]
For convenience we will use the notation
$\En[\cdot]=\frac{1}{n}\sum_{i=1}^n[\cdot_i]$. By (\ref{def:checkbeta}) we have
\begin{equation}\label{def:newbetaj} \check\beta_j := \hat \Omega_j^{-1} \En[(y-x'\hat\beta_{-j})z'\hat \mu^j]\end{equation}
where $\hat \Omega_j := \En[ x_{j}z'\hat\mu^j]$.
Next we rearrange the expression (\ref{def:newbetaj}). It follows that

{\small $$\begin{array}{rl}
  \check\beta_j & = \hat \Omega_j^{-1}\En[\{\xi+x_{j}\beta_{0j}-x'(\hat\beta_{-j}-\beta_{0,-j})\}z'\hat\mu^j]\\
  & = \hat \Omega_j^{-1}\beta_{0j}\En[x_{j}z'\hat\mu^j]+\hat \Omega_j^{-1}\En[\{\xi-x_{-j}'(\hat\beta_{-j}-\beta_{0,-j})\}z'\hat\mu^j]\\
  & = \beta_{0j} + \hat \Omega_j^{-1}\En[\xi z'\hat\mu^j] - \hat \Omega_j^{-1}(\hat\beta_{-j}-\beta_{0,-j})'\En[x_{i,-j}z_i'\hat\mu^j]\\
  & = \beta_{0j} + \hat \Omega_j^{-1}\En[\xi z'\mu^j_0] + \hat \Omega_j^{-1}\En[\xi z'](\hat\mu^j-\mu^j_0)\\
  &  - \hat \Omega_j^{-1}(\hat\beta_{-j}-\beta_{0,-j})'\En[x_{-j}z'\mu^j_0] - \hat \Omega_j^{-1}(\hat\beta_{-j}-\beta_{0,-j})'\En[x_{-j}z'](\hat\mu^j-\mu^j_0) \\
  & =: \beta_{0j} + \hat \Omega_j^{-1}\En[\xi z'\mu^j_0] + B_1^j - B_2^j - B_3^j
 \end{array}$$}

Therefore we have
$$\begin{array}{rl}
\sqrt{n}\hat\Omega_j(\check\beta_j-\beta_{0j}) & = \sqrt{n}\En[\xi z'\mu^j_0]  + O_\P(\delta_n) \\
\end{array} $$

Step 2 below shows that with probability $1-o(1)$ uniformly over $j\in S$
$$|B_1^j| + |B_2^j| + |B_3^j| \leq C s \log (pKn) / n.  $$

Under Condition IV, we have that $s \log (pKn) \leq \delta_n \sqrt{n}$, so that we obtain the stated linear representation, namely uniformly over $j\in S$, with probability $1-o(1)$ we have
$$\begin{array}{rl}
\sqrt{n}\hat\Omega_j(\check\beta_j-\beta_{0j}) & = \frac{1}{\sqrt{n}}\sum_{i=1}^n\psi_j(y_i,x_i,z_i)  + O_\P(\delta_n) \\
\end{array} $$
where $\psi_j(y_i,x_i,z_i)=(y_i-x_i'\beta_0)z_i'\mu^j_0$ is a zero mean random variable.

The result follows by (\ref{boundOmegaj}) noting that
$$\begin{array}{rl}
\sqrt{n}|(\hat\Omega_j-\Omega_j)(\check\beta_j-\beta_{0j})| & \leq \max_{j\in S}|\hat\Omega_j-\Omega_j|\frac{\hat\Omega_j^{-1}}{\sqrt{n}}\sum_{i=1}^n\psi_j(y_i,x_i,z_i) \\
& + |\hat\Omega_j-\Omega_j|O_\P(\delta_n)\\
& \leq O_\P( \delta_n \log^{-1/2}(Kpn) ) {\displaystyle \max_{j\in S}\frac{1}{\sqrt{n}}\sum_{i=1}^n\psi_j(y_i,x_i,z_i)} \\
& \leq O_\P( \delta_n ) \end{array}$$
since $\hat\Omega_j$ is bounded away from zero with probability $1-o(1)$ by Step 2 below, $\max_{j\in S}\frac{1}{\sqrt{n}}\sum_{i=1}^n\psi_j(y_i,x_i,z_i) = O_\P ( \log^{1/2}|S|)$ and $|S|\leq p$.

Step 2. (Auxiliary Calculations for $B_1^j, B_2^j, B_3^j$.) In what follows  over uniformly over $j\in S$.
First we show that under Condition IV we have with probability $1-o(1)$ for $q\in \{1,2\}$
$$ \| \hat\beta - \beta_0 \|_q \leq  Cs^{1/q}\sqrt{\frac{\log(pKn)}{n}} \ \ \mbox{and} \ \ \max_{j\in S}\|\hat\mu^j-\mu^j_0\|_q \leq Cs^{1/q}\sqrt{\frac{\log(pKn)}{n}} $$
since $|S|\leq p$. The first rate follows from Theorem \ref{thm:linmodel:iv} provided  (i) $\kappa_q^{ZX}(s,3)\geq s^{-1/q}c$ for $q\in\{1,2\}$ and (ii) $\max_{k\in [K]} \En[\xi^2z_{k}^2] \leq C$ with probability $1-o(1)$. Indeed, (i) holds by Condition IV.
%Lemma \ref{thm:RV34} under the requirement $K_n^{2/q}\sqrt{s/n}(\log^{1/2}(p)\log^{3/2}(n)=o(1)$ and singular values of $\Ep[\frac{1}{n}\sum_{i=1}^nz_ix_i']$ are bounded away from zero and from above under Condition IV.
Relation (ii) holds by Markov's inequality and Lemma \ref{lem:m2bound}(3) using the relations $\Ep[\max_{i\leq n} \|\xi_iz_{i}\|_\infty^2]\leq K_n^{4/q}$, $\max_{k\in [K]}
\Ep[\frac{1}{n}\sum_{i=1}^n\xi_i^2z_{ik}^2] \leq C$ and $K_n^{4/q}\log K=o(n)$.

The second rate follows from Theorem \ref{thm:nuisancej} provided that we have the following bounds (i) $\kappa_q^{\widetilde Z\widetilde Z}(s,3)\geq s^{-1/q}c$ for $q\in\{1,2\}$, (ii) $\max_{k\leq p} \En[x_{k}^2(z'\mu_0^j)^2] \leq C$ and  (iii) $\max_{k\leq K} \En[v_j^2z_{k}^2] \leq C$
with probability $1-o(1)$. Again, (i) holds by Condition IV.
 %Lemma \ref{thm:RV34} that shows that sparse eigenvalues of order $\ell_n s$ are bounded away from zero and from above which in turn bounds the restricted eigenvalues away from zero (this implies the required relations, see \cite{gautier:tsybakov}).
 Relations (ii) and (iii) hold by the following moment conditions $\max_{\ell \in [p], j\in S} \Ep[\frac{1}{n}\sum_{i=1}^n x_{i\ell}^2(z_i'\mu_0^j)^2] \leq C$,  $\max_{k\in [K], j\in S} \Ep[\frac{1}{n}\sum_{i=1}^n v_{ij}^2z_{ik}^2] \leq C$, $\Ep[\max_{i\leq n, j\in S} \|(v_{ij},z_i',x_i',z_i'\mu_0^j)\|_\infty^4]\leq K_n^{4/q}$ and $K_n^{4/q}\log (pKn)=o(n)$ combined with Markov's inequality and Lemma \ref{lem:m2bound}(3).

Next note that uniformly over $j\in S$
\begin{equation}\label{boundOmegaj}\begin{array}{rl}
|\hat \Omega_j-\Omega_j|& =|\En[x_{j}z'\hat\mu^j]-\Ep[\frac{1}{n}\sum_{i=1}^nx_{ij}z_i'\mu^j_0]|\\
& \leq |\En[x_{j}z'\mu^j_0]+\En[x_{j}z'(\hat\mu^j-\mu^j_0)]| + |\En[x_{j}z'\mu^j_0-\Ep[x_{j}z'\mu^j_0]]| \\
& \leq \|\En[x_{j}z]\|_\infty \|\hat\mu^j-\mu^j_0\|_1  + |\En[x_{j}z'\mu^j_0-\Ep[x_{j}z'\mu^j_0]]|\\
& \leq O(\delta_n \log^{-1/2}(Kpn))\end{array}
\end{equation}
with probability $1-o(1)$ under Condition IV as $\max_{j\in S}\|\En[x_{j}z]\|_\infty \leq C$ and $\max_{j\in S} |\En[x_{j}z'\mu^j_0-\Ep[x_{j}z'\mu^j_0]]| \leq C \{K^{2/q}\log(pKn)/n\}^{1/2}$ with probability $1-o(1)$. Moreover, since $\min{j\in S}|\Omega_j|>c$ by Condition IV, we have that $\min_{j\in S}|\hat\Omega_j|$ is bounded away from zero with probability $1-o(1)$.

Next we proceed to bound each term $B^j_k, k=1,2,3, \ j \in S$. It follows that
$$\begin{array}{rl}
\max_{j\in S}|B_1^j| & = \max_{j\in S}|\hat \Omega_j^{-1}\En[\xi z'](\hat\mu^j-\mu^j_0)|\\
&  \leq \max_{j\in S}|\hat \Omega_j^{-1}|\|\En[\xi z']\|_\infty \|\hat\mu^j-\mu^j_0\|_1\\
& \leq Cs \log(|S|Kpn)/n
\end{array}
$$ since $\|\En[\xi z]\|_\infty \leq \tau \max_{k\leq K} \En[\xi^2z_{k}^2]^{1/2}\leq C\tau$ with probability $1-o(1)$ by (\ref{linmodel:eqsn}) and the argument above, and $\max_{j\in S}\|\hat\mu^j-\mu^j_0\|_1\leq Cs\sqrt{\log(pKn)/n}$ with probability $1-o(1)$.

Moreover we have
$$\begin{array}{rl}
\max_{j\in S}|B_2^j| & = \max_{j\in S}|\hat \Omega_j^{-1}(\hat\beta_{-j}-\beta_{0,-j})'\En[x_{-j}z'\mu^j_0]|\\
&  \leq \max_{j\in S}|\hat \Omega_j^{-1}|\|\En[x_{-j}z'\mu^j_0]\|_\infty \|\hat\beta_{-j}-\beta_{0,-j}\|_1\\
& \leq Cs \log(|S|Kpn)/n
\end{array}
$$
since $\max_{j\in S}\|\En[x_{-j}z'\mu^j_0]\|_\infty \leq \tau \max_{\ell \in [p], j\in S} \En[x_{\ell}^2(z'\mu^j_0)^2]^{1/2}\leq C\tau$ with probability $1-o(1)$ by (\ref{linmodel:eqsn2}) and the argument above, and $\|\hat\beta-\beta_{0}\|_1\leq Cs\sqrt{\log(pKn)/n}$ with probability $1-o(1)$.

Finally, we have
$$\begin{array}{rl}
\max_{j\in S}|B_3^j| & = \max_{j\in S}|\hat \Omega_j^{-1}(\hat\beta_{-j}-\beta_{0,-j})'\En[x_{-j}z'](\hat\mu^j-\mu^j_0)
|\\
&  \leq \max_{j\in S}|\hat \Omega_j^{-1}|\|\hat\beta_{-j}-\beta_{0,-j}\|_1 \|\En[x_{-j}z'](\hat\mu^j-\mu^j_0)\|_\infty\\
& \leq Cs \log(|S|Kpn)/n
\end{array}
$$ since $\|\En[x_{i,-j}z_i'](\hat\mu^j-\mu^j_0)\|_\infty \leq \|\En[x_{-j}z']\hat\mu^j\|_\infty + \|\En[x_{-j}z']\mu^j_0\|_\infty \leq \tau\|\hat t\|_\infty + \tau \|t(\mu^0_j,\vartheta^j_0)\|_\infty \leq C \sqrt{\log(pKn)/n}$ with probability $1-o(1)$ and $\|\hat\beta_{-j}-\beta_{0,-j}\|_1\leq C s\sqrt{\log(pKn)/n}$  with probability $1-o(1)$.
\end{proof}

\begin{proof}[Proof of Corollary \ref{cor:PointwiseInference}]
The result follows directly from the linear representation of Theorem \ref{thm:main} combined with an application of the Lyapunov central limit theorem since $\bar \psi_j(y_i,x_i,z_i)=\Omega^{-1}_j\sigma_j^{-1}(y_i-x_i'\beta_0)z_i'\mu^j_0$, $i=1,\ldots,n$, are independent random variables with  $\Ep[\bar\psi_j(y_i,x_i,z_i)]=0$, $\Ep[\frac{1}{n}\sum_{i=1}^n\bar\psi_j^2(y_i,x_i,z_i)]=1$, and $\Ep[\frac{1}{n}\sum_{i=1}^n|\bar\psi_j(y_i,x_i,z_i)|^3]\leq C$.
\end{proof}

Next we show that the distribution of the (maximum) estimation error is close to the distribution of the maximum of a (sequence of) Gaussian vectors. Let $(\mathcal{G}_j)_{j\in S}$ denote a tight zero-mean Gaussian vector whose dimension $|S|$ could grow with $n$. The covariance matrix associated with $\mathcal{G}$ is given by $V_{jk} = (\Ep[\frac{1}{n}\sum_{i=1}^n\bar \psi_j(y_i,x_i,z_i)\bar\psi_k(y_i,x_i,z_i)])_{j,k}$, $j,k\in S$ where $\bar \psi_j(y_i,x_i,z_i)=\Omega^{-1}_j\sigma_j^{-1}(y_i-x_i'\beta_0)z_i'\mu^j_0$. We have the following lemma.
\begin{lemma}\label{cor:gaussapprox}
In addition to Condition IV, suppose that $A_n^2\log^7(|S|)=o(n)$ and $B_{\bar q,n}^{1/\bar q}\log^{2}(|S|)=o(n^{1/2-1/\bar q})$ where $A_n\geq \max_{j\in S}\Ep[\frac{n^{-1}}{\sigma_j^{3}\Omega_j^{3}}\sum_{i=1}^n|\xi_iz_i'\mu^j_0|^3]$ and $B_{\bar q,n} \geq \frac{1}{n}\sum_{i=1}^n\Ep[ \max_{j\in S}|\xi_iz_i'\mu^j_0|^{\bar q}]$ for some $\bar q>3$. Then we have that
$$ \sup_{t \in \mathbb{R}} \left|P\left(\max_{j\in S}|\sqrt{n}\sigma_j^{-1}(\check\beta_{0j}-\beta_{0j})|\leq t\right)- P\left(\max_{j\in S}|\mathcal{G}_j|\leq t\right) \right|=o(1)$$
\end{lemma}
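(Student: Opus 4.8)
The plan is to transfer the statement from the estimators $\check\beta_j$ to the leading linear term supplied by Theorem~\ref{thm:main}, then invoke a Gaussian approximation theorem for maxima of sums of independent high-dimensional vectors, and finally use Gaussian anti-concentration to absorb the linearization error. Although the displayed bound is stated for a sequence of data generating processes, the argument produces constants depending on $\mathcal M$ only through the uniform constants of $\mathcal M_n$ (namely $c$, $C$, $K_n$, $A_n$, $B_{\bar q,n}$, and the lower bounds on $\sigma_j$ and $\Omega_j$), so the conclusion in fact holds uniformly over $\mathcal M\in\mathcal M_n$, which is what the proof of Theorem~\ref{thm:inference} needs.

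First I would set $\bar\psi_j(W_i):=\sigma_j^{-1}\Omega_j^{-1}(y_i-x_i'\beta_0)z_i'\mu^j_0$ and $Z_n:=\max_{j\in S}\bigl|\frac{1}{\sqrt n}\sum_{i=1}^n\bar\psi_j(W_i)\bigr|$. Theorem~\ref{thm:main} gives, uniformly over $j\in S$, $\sqrt n\,\sigma_j^{-1}(\check\beta_j-\beta_{0j})=\frac{1}{\sqrt n}\sum_{i=1}^n\bar\psi_j(W_i)+O_\P(\delta_n)$, hence $\bigl|\max_{j\in S}|\sqrt n\,\sigma_j^{-1}(\check\beta_j-\beta_{0j})|-Z_n\bigr|=O_\P(\delta_n)$. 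A standard sandwiching argument then reduces the claim to two ingredients: (a) $\sup_{t}\bigl|P(Z_n\le t)-P(\max_{j\in S}|\mathcal G_j|\le t)\bigr|=o(1)$, where $\mathcal G$ is the tight zero-mean Gaussian vector with covariance $V$ defined just above the lemma; and (b) the anti-concentration bound $\sup_{t}P\bigl(\bigl|\max_{j\in S}|\mathcal G_j|-t\bigr|\le\eta\bigr)\le C\eta\{\log|S|\}^{1/2}$. Given (a) and (b), one picks a sequence $\eta_n\to0$ with $\eta_n/\delta_n\to\infty$ and $\eta_n\{\log|S|\}^{1/2}\to0$ (possible under the maintained growth conditions relating $\delta_n$, $n$ and $\log|S|$), so that the $O_\P(\delta_n)$ remainder contributes only $o(1)$ to the Kolmogorov distance between the two laws.

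For ingredient (a) I would apply the high-dimensional central limit theorem for maxima of independent (not necessarily identically distributed) random vectors of \cite{chernozhukov2013gaussian,chernozhukov2014clt,chernozhukov2015noncenteredprocesses}. The summands $(\bar\psi_j(W_i))_{j\in S}$ are independent across $i$ and mean zero by \eqref{linmodel:iv2}, and the $\sigma_j$-normalization makes the average variance of each coordinate equal to one, so the coordinate variances of $V$ are bounded above and, using $\min_{j}|\Omega_j|\ge c$ together with the variance lower bounds in Condition~IV, bounded away from zero. The two rate requirements of that theorem match the hypotheses assumed here: $A_n^2\log^7(|S|)=o(n)$ controls the averaged third absolute moment $\frac1n\sum_i\Ep[|\bar\psi_j(W_i)|^3]$ uniformly in $j$, while $B_{\bar q,n}^{1/\bar q}\log^{2}(|S|)=o(n^{1/2-1/\bar q})$ is the envelope/truncation moment condition on $\max_{j\in S}|\xi_iz_i'\mu^j_0|$. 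Ingredient (b) is the Nazarov-type anti-concentration inequality for the maximum of a Gaussian vector with coordinate variances bounded above and below, as established in \cite{chernozhukov2014clt,chernozhukov2015noncenteredprocesses}.

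The main obstacle I anticipate is bookkeeping rather than conceptual. One must check that the moment quantities $A_n$ and $B_{\bar q,n}$ — and in particular the possibly growing norms $\max_{j\in S}\|\mu^j_0\|$ hidden inside $\bar\psi_j$ — do not spoil the normalized third and envelope moments once divided by $\sigma_j$, and one must confirm that the $O_\P(\delta_n)$ error inherited from Theorem~\ref{thm:main} is genuinely uniform over $\mathcal M_n$ and negligible relative to $\{\log|S|\}^{-1/2}$, which is precisely the point where the growth conditions tying $\delta_n$ to $\log|S|$ enter. Once these are verified, steps (a) and (b) combine directly with the sandwiching to yield the stated Kolmogorov-distance bound.
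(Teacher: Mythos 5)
Your proposal is correct and follows essentially the same approach as the paper: linearize via Theorem~\ref{thm:main}, apply a high-dimensional Gaussian approximation for the maximum of the leading sums, and finish with Gaussian anti-concentration, with $A_n$ and $B_{\bar q,n}$ playing exactly the roles you identify (the normalized third moment and the envelope/truncation moment, respectively). The only structural difference is organizational: the paper routes the Gaussian approximation through Theorem~3.1 of \cite{chernozhukov2015noncenteredprocesses} plus a Strassen-type coupling, obtaining a near-coupling $|\bar Z-\widetilde Z|=O_\P(\cdot)$, and then applies a \emph{single} anti-concentration step (Lemma~2.4 of \cite{chernozhukov2012gaussian}) that absorbs both the coupling error and the $O_\P(\delta_n)$ remainder simultaneously, whereas you propose to first obtain a Kolmogorov-distance bound for $\bar Z$ versus $\widetilde Z$ and then handle the $O_\P(\delta_n)$ remainder by a separate sandwiching with anti-concentration; the two presentations are interchangeable and rest on the same growth conditions.
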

\begin{proof}[Proof of Lemma \ref{cor:gaussapprox}]

It is convenient to define
 $$\begin{array}{rl}
 Z &= \max_{j\in S} |\sqrt{n}\sigma_j^{-1}(\check\beta_j - \beta_{0j})|,\\
 \bar Z &= \max_{j\in S} \left|\frac{1}{\sqrt{n}}\sum_{i=1}^n\bar\psi_j(y_i,x_i,z_i)\right|, \ \ \mbox{and}\\
 \widetilde Z &= \max_{j\in S} |\mathcal{G}_j|\end{array}$$ where $(\mathcal{G}_j)_{j\in S}$ is a Gaussian vector with zero mean and covariance matrix given by $V_{jk}=\Ep[n^{-1}\sum_{i=1}^n\bar\psi_j(y_i,x_i,z_i)\bar\psi_k(y_i,x_i,z_i)]$, $j,k\in S$.

By the triangle inequality we have
$$\begin{array}{rl}
|Z - \widetilde Z| & \leq |Z-\bar Z| + |\bar Z - \widetilde Z| \\
& \leq \max_{j \in S} \left| \sqrt{n}\sigma_j^{-1}(\check\beta_j - \beta_{0j}) -\frac{1}{\sqrt{n}}\sum_{i=1}^n\bar\psi_j(y_i,x_i,z_i) \right| + |\bar Z - \widetilde Z| \\
& \leq O_\P(\delta_n)  + |\bar Z - \widetilde Z|
\end{array}
$$ where the last inequality follows from the linear representation established in Theorem \ref{thm:main} under Condition IV.

To bound $|\bar Z - \widetilde Z|$ we will apply Theorem 3.1 in \cite{chernozhukov2015noncenteredprocesses}. Let $X_{ij}=\sigma_j^{-1}\Omega_j^{-1}(y_i-x_i'\beta_0)z_i'\mu_0^j$, $\Ep[X_{ij}]=0$, $Y_i\sim N(0,\Ep[X_iX_i'])$, $L_n=\max_{j\in S}\Ep[\frac{1}{n}\sum_{i=1}^n|X_{ij}|^3]$,
$M_{n,X}(\delta) = \frac{1}{n}\sum_{i=1}^n\Ep[ \max_{j\in S}|X_{ij}|^31\{ \max_{j\in S}|X_{ij}|> \delta \sqrt{n}/\log |S| \} ]$, and $M_{n,Y}(\delta) = \frac{1}{n}\sum_{i=1}^n\Ep[ \max_{j\in S}|Y_{ij}|^31\{ \max_{j\in S}|Y_{ij}|> \delta \sqrt{n}/\log |S| \} ]$. Thus $\bar Z = \max_{j\in S} |n^{-1/2}\sum_{i=1}^nX_{ij}|$ and $\widetilde Z =  \max_{j\in S} |n^{-1/2}\sum_{i=1}^nY_{ij}|$. Theorem 3.1 in \cite{chernozhukov2015noncenteredprocesses} shows that for every Borel set $A\subseteq \mathbb{R}$ such that
$$\begin{array}{rl} P ( \max_{j\in S} n^{-1/2}\sum_{i=1}^nX_{ij} \in A ) & \leq P(\max_{j\in S} n^{-1/2}\sum_{i=1}^nY_{ij} \in A^{C\delta} )\\
 & +  C' \frac{\log^2|S|}{\delta^3\sqrt{n}}\{L_n+M_{n,X}(\delta)+M_{n,Y}(\delta)\}\end{array}$$
where $A^{C\delta}=\{ t \in \mathbb{R} : {\rm dist}(t,A)\leq C\delta\}$ is an $C\delta$-enlargement of $A$, and  $|S|>2$. By definition of $A_n$ we have $$L_n=\max_{j\in S}\Ep[\mbox{$\frac{1}{n}\sum_{i=1}^n$}|X_{ij}|^3] = \max_{j\in S}\Ep[\mbox{$\sigma_j^{-3}\Omega_j^{-3}\frac{1}{n}\sum_{i=1}^n$} |\xi_iz_i'\mu_0^j|^3] \leq A_n.$$

Since $\frac{1}{n}\sum_{i=1}^n\Ep[ \max_{j\in S} |\xi_iz_i'\mu^j_0|^{\bar q} ] \leq B_{\bar q,n}$, we have that
$$\begin{array}{rl}
M_{n,X}(\delta) & = \frac{1}{n}\sum_{i=1}^n\Ep[ \max_{j\in S}|X_{ij}|^3 1\{ \max_{j\in S}|X_{ij}|> \delta \sqrt{n}/\log |S| \} ]\\
 & \leq \frac{1}{n}\sum_{i=1}^n\Ep[ \max_{j\in S}|X_{ij}|^3 \max_{j\in S}|X_{ij}|^{\bar q-3}/ \{ \delta \sqrt{n}/\log |S| \}^{\bar q-3} ]\\
 & \leq \frac{1}{n}\sum_{i=1}^n\Ep[ \max_{j\in S}|X_{ij}|^{\bar q} ]/ \{ \delta \sqrt{n}/\log |S| \}^{\bar q-3}\\
 & \leq B_{\bar q,n} \{\log(|S|) / (\delta \sqrt{n})\}^{\bar q-3}\\
 \end{array}$$
% and for $\bar \sigma_i^2 :=\max_{j\in S}\Ep[Y_{ij}^2]$ and $\bar\sigma^2:=\frac{1}{n}\sum_{i=1}^n\bar\sigma_i^2$ we have
%$$\begin{array}{rl}
%M_{n,Y}(\delta) & = \Ep[\frac{1}{n}\sum_{i=1}^n \max_{j\in S}|Y_{ij}|^3 1\{ \max_{j\in S}|Y_{ij}|> \delta \sqrt{n}/\log |S| \} ]\\
%& \leq  \Ep[\{\frac{1}{n}\sum_{i=1}^n \max_{j\in S}|Y_{ij}|^6]\}^{1/2} \{ \frac{1}{n}\sum_{i=1}^n 1\{ {\displaystyle \max_{j\in S}}|Y_{ij}|> \frac{\delta \sqrt{n}}{\log |S|} \}\}^{1/2}]\\
%& \leq \{ \Ep[\frac{1}{n}\sum_{i=1}^n \max_{j\in S}|Y_{ij}|^6]\}^{1/2} \{ \Ep[\frac{1}{n}\sum_{i=1}^n 1\{ {\displaystyle \max_{j\in S}}|Y_{ij}|> \frac{\delta \sqrt{n}}{\log |S|} \} ]\}^{1/2}\\
%& \leq C \log^{3/2}(|S|C')\bar\sigma \{\frac{|S|}{n}\sum_{i=1}^n2\exp( -\frac{\delta^2n}{2\bar\sigma_i^2\log^2|S|}) \}^{1/2}\\
%& \leq  C \log^{3/2}(|S|C')\bar\sigma n^{-2}
% \end{array}$$
%provided that $\delta > 8n^{-1/2} \log(|S|) \max_{i\leq n} \bar\sigma_i \{ \log^{1/2} n + \log^{1/2}(|S|)\}$.

Next note that
$$\begin{array}{rl}
M_{n,Y}(\delta) & = \frac{1}{n}\sum_{i=1}^n\Ep[ \max_{j\in S}|Y_{ij}|^3 1\{ \max_{j\in S}|Y_{ij}|> \delta \sqrt{n}/\log |S| \} ]\\
 & \leq \frac{1}{n}\sum_{i=1}^n\Ep[ \max_{j\in S}|Y_{ij}|^{\bar q}] / \{ \delta \sqrt{n}/\log |S| \}^{\bar q-3} \\
 & \leq \frac{C_{\bar q}\log^{\bar q/2}(|S|)}{n}\sum_{i=1}^n \max_{j\in S} \Ep[ |Y_{ij}|^2]^{\bar q/2} / \{ \delta \sqrt{n}/\log |S| \}^{\bar q-3} \\
 & = \frac{C_{\bar q}\log^{\bar q/2}(|S|)}{n}\sum_{i=1}^n \max_{j\in S} \Ep[ |X_{ij}|^2]^{\bar q/2} / \{ \delta \sqrt{n}/\log |S| \}^{\bar q-3} \\
 & \leq C_{\bar q}\log^{\bar q/2}(|S|) B_{\bar q,n} \{\log(|S|) / (\delta \sqrt{n})\}^{\bar q-3}\\
 \end{array}$$

%%% NOTE: to avoid working with $\max_{i\leq n, j\in S} \Ep[Y_{ij}^2]$ I paid an extra $log^{1/2}|S|$ in the bound below (the 3/2 could be 1). This is motivated to handle the heteroskedasticity across i.

Setting $\delta:=  \frac{A_n^{1/3}\log^{2/3}(|S|)}{\gamma^{1/3}n^{1/6}} + \frac{B_{\bar q,n}^{1/\bar q} \log^{3/2-1/\bar q}(|S|)}{\gamma^{1/\bar q}n^{1/2-1/\bar q}}$, we have
$$ P ( |\bar Z - \widetilde Z| > C\delta ) \leq  C' \gamma + C'/n.$$

By Strassen's Theorem, there exists a version of $\widetilde Z$ such that
$$
\left| \bar Z - \widetilde Z \right| = O_P\left( \frac{A_n^{1/3}\log^{2/3}(|S|)}{n^{1/6}} + \frac{B_{\bar q,n}^{1/\bar q} \log^{3/2-1/\bar q}(|S|)}{n^{1/2-1/\bar q}} \right)
$$
The result then follows from Lemma 2.4 in \cite{chernozhukov2012gaussian} so that $\sup_{t\in \mathbb{R}}|P( Z \leq t ) -   P(\widetilde Z \leq t )|  = o(1)$ by noting that $\Ep[\widetilde Z] \leq C\sqrt{\log |S|}$ since $\Ep[\mathcal{G}_j^2]=1$, provided that
$$\sqrt{\log |S|} \left(  \delta_n +  \frac{A_n^{1/3}\log^{2/3}(|S|)}{n^{1/6}} + \frac{B_{\bar q,n}^{1/\bar q} \log^{3/2-1/\bar q}(|S|)}{n^{1/2-1/\bar q}}\right)=o(1),$$
which holds under $A_n^2\log^7(|S|)=o(n)$ and $B_{\bar q,n}^{1/\bar q}\log^{2}(|S|)=o(n^{1/2-1/\bar q})$.
\end{proof}

\begin{proof}[Proof of Theorem \ref{thm:inference}]
Let $c_\alpha^*$ denote the $(1-\alpha)$-conditional quantile of $\widetilde Z^* = \max_{j\in S} |\widehat{\mathcal{G}}_j|$ and
 $c_\alpha^0$ denote the $(1-\alpha)$-conditional quantile of $\widetilde Z= \max_{j\in S} |{\mathcal{G}}_j|$ where $\mathcal{G}$ is a Gaussian random vector defined in Lemma \ref{cor:gaussapprox}. For some sequence $\vartheta_n \to 0$ defined in Step 3 below we have that
$$ \begin{array}{rl}
&\displaystyle \Pr{\max_{j\leq S}|\sqrt{n}\hat\sigma_j^{-1}(\check\beta_{0j}-\beta_{0j})|\leq c_\alpha^*} \\
& = \Pr{ |\sqrt{n}\sigma_j^{-1}(\check\beta_{0j}-\beta_{0j})|\leq c_\alpha^* \sigma_j/\hat\sigma_j, \forall j\in S} \\
& \leq_{(1)}  \Pr{ \max_{j\in S}|\sqrt{n}\sigma_j^{-1}(\check\beta_{0j}-\beta_{0j})|\leq c_\alpha^* (1+\varepsilon_n)}+o(1)\\
& \leq_{(2)}  \Pr{ \max_{j\in S}|\sqrt{n}\sigma_j^{-1}(\check\beta_{0j}-\beta_{0j})|\leq c^0_{\alpha-\vartheta_n}}+o(1)\\
& \leq_{(3)}  \Pr{ \max_{j\in S}|\mathcal{G}_j|\leq c^0_{\alpha-\vartheta_n}}+o(1)\\
& = 1-\alpha+\vartheta_n+o(1)
\end{array}$$
where inequality (1) holds by $|\sigma_j/\hat\sigma_j|\leq (1+\varepsilon_n)$ with probability $1-o(1)$ established in Step 2, (2) follows from $c_\alpha^* (1+\varepsilon_n) \leq c^0_{\alpha-\vartheta_n}$ with probability $1- o(1)$ by Step 3 below, (3) follows by Lemma \ref{cor:gaussapprox}, and the last step by definition of $c^0_{\alpha-\vartheta_n}$.

The reverse inequality holds by a similar argument.

~\\

Step 2. Claim: for some $\varepsilon_n = O(\delta_n\log^{1/2}(n|S|))$ we have \begin{equation}\label{step2:main}P\left( \max_{j\in S} |\sigma_j/\hat\sigma_j|\vee |\hat \sigma_j/\sigma_j|\leq 1+\varepsilon_n\right) \geq 1-o(1)\end{equation}
Let $\widetilde\psi_j(y,x,z)=\psi_j(y,x,z,\hat\beta,\hat\mu^j)$ and $\psi_j(y,x,z)=\psi_j(y,x,z,\beta_{0},\mu^j_0)$, so that $\sigma_j=\Omega_j^{-1}\{\Ep[\frac{1}{n}\sum_{i=1}^n\widetilde\psi_j^2(y_i,x_i,z_i)]\}^{1/2}$, we have that
$$
\begin{array}{rl}
|\hat\sigma_j - \sigma_j| & \leq \hat\Omega_j^{-1}|  \En[\widetilde\psi^2_j(y,x,z)]^{1/2} - \En[\psi^2_j(y,x,z)]^{1/2}\\
&\ \ \ \ \ \ \ \ +  \En[\psi^2_j(y,x,z)]^{1/2} -  \Ep[\frac{1}{n}\sum_{i=1}^n\psi^2_j(y_i,x_i,z_i)]^{1/2}| \\
& + |\hat\Omega_j^{-1}-\Omega_j^{-1}|\Ep[\frac{1}{n}\sum_{i=1}^n\psi^2_j(y_i,x_i,z_i)]^{1/2}\\
& \leq \hat\Omega_j^{-1}\En[\{\widetilde\psi_j(y,x,z)-\psi_j(y,x,z)\}^2]^{1/2}\\
& + \hat\Omega_j^{-1}\frac{|\En[\psi^2_j(y,x,z)] -  \Ep[\frac{1}{n}\sum_{i=1}^n\psi_j^2(y_i,x_i,z_i)]| }{\En[\psi^2_j(y,x,z)]^{1/2} +  \Ep[\frac{1}{n}\sum_{i=1}^n\psi^2_j(y_i,x_i,z_i)]^{1/2}}\\
& + \frac{|\Omega_j-\hat\Omega_j|}{\hat\Omega_j\Omega_j}\Ep[\frac{1}{n}\sum_{i=1}^n\psi^2_j(y_i,x_i,z_i)]^{1/2}\\
& =\hat\Omega_j^{-1}\{ T_1^j + T_2^j + T_3^j \}\end{array}
$$

By the definitions $K_n=\Ep[\max_{i\leq n} \|W_i\|_\infty^q]^{1/q}$, $\widetilde\psi_j(y,x,z)$ and $\psi_j(y,x,z)$, and the triangle inequality we have uniformly over $j\in S$
\begin{equation}\label{Aux:L2Pn}
 \begin{array}{rl}
|T_1^j|  &= \En[\{\widetilde\psi_j(y,x,z)-\psi_j(y,x,z)\}^2]^{1/2} \\
&  \leq \En[\{(\hat\mu^j-\mu^j_0)'z\}^2\xi^2]^{1/2}  + \En[\{x'(\hat\beta-\beta_0)\}^2(z'\hat\mu^j)^2]^{1/2}\\
& \leq C\sqrt{\frac{s\log(Kpn)}{n}}\max_{i\leq n}|\xi_i|\\
 & + C\sqrt{\frac{s\log(Kpn)}{n}}  {\displaystyle \max_{i\leq n, j\in S}}\{|z_i'\mu^j_0|+|z_i'(\hat\mu^j-\mu^j_0)|\}\\
& = O_\P\left(K_n^{1/q}\sqrt{\frac{s\log(Kpn)}{n}}+\sqrt{\frac{s\log(Kpn)}{n}}K_n^{1/q}s\sqrt{\frac{\log(pKn)}{n}}\right)
\end{array}\end{equation}
where we used that $\max_{i\leq n, j\in S}\|(\xi_i,x_i,z_{i},z_i'\mu^j_0)\|_\infty = O_\P(K_n^{1/q})$, the (prediction) rate of convergence for $\hat\beta$ and $\hat\mu^j, j\in S$, and the condition on $s$, $n$, $p$, $K$, and $\delta_n$ in Condition IV.

Next we bound $\max_{j\in S}|T_2^j|$. Let $X_{ij}:= \psi_j(y_i,x_i,z_i)\sigma_j^{-1}\Omega_j^{-1}$, $j\in S$, $p=|S|\geq 2$, $k=2$, and note that $\Ep[\frac{1}{n}\sum_{i=1}^n\psi_j^2(y_i,x_i,z_i)]=\sigma_j^2\Omega_j^2$. Thus by  Lemma \ref{lem:m2bound}(3)  we have
$$ \begin{array}{rl}
\Ep[\max_{j\in S}|T_2^j|]&\leq \Ep\left[  \max_{j\in S}|\En[\psi^2_j(y,x,z)]-\Ep[\frac{1}{n}\sum_{i=1}^n\psi_j^2(y_i,x_i,z_i)]|/\sigma_j^2\Omega_j^2 \right] \\
  & \lesssim \sqrt{\frac{\log(|S|)}{n}} M_{k}^{1/2}+M_k\frac{\log(|S|)}{n}\\
& \lesssim \frac{\log^{1/2}(|S|) K_n^{2/q}}{\sqrt{n}\min_{j\in S} \sigma_j\Omega_j}\\
\end{array}$$
\noindent where we used that $M_{k} =\Ep[ \max_{i\leq n} \|X_i\|_\infty^2 ] \leq K_n^{4/q}\sigma_j^{-2}\Omega_j^{-2}$ by Condition IV. Thus by Markov's inequality we have
$$ \max_{j\in S}|T_2^j| = O_\P(\frac{\log^{1/2}(|S|) K_n^{2/q}}{\sqrt{n}\min_{j\in S} \sigma_j\Omega_j}) = O_P(\delta_n \log^{1/2}(|S|))$$
under $K_n^{2/q} \leq \delta_n \sqrt{n}\min_{j\in S} \sigma_j\Omega_j$ implied by Condition IV.

% by Lemma \ref{lemma:CCK} with $t=\log n$, $\bar q = q/4$, $|\mathcal{F}|=|S|$, $\|M\|_{P,\bar q}\leq K_n^{2\bar q}= K_n^q$, we have that with probability $1-o(1)$
%$$\begin{array}{l}
% {\displaystyle \max_{j\in S} }|\En[\psi^2_j(y,x,z)]-\Ep[\frac{1}{n}\sum_{i=1}^n\psi_j^2(y_i,x_i,z_i)]|/\sigma_j^2\Omega_j^2 \\
%  \leq C \Ep[{\displaystyle \max_{j\in S} }|\En[\psi^2_j(y,x,z)]-\Ep[\frac{1}{n}\sum_{i=1}^n\psi_j^2(y_i,x_i,z_i)]|/\sigma_j^2\Omega_j^2  ]  \\
% + Cn^{-1/2}\{1+ n^{-1/2}K_n^2\{\max_{j\in S}\sigma_j^{-1}\Omega_j^{-1}\}\log^{1/2}n \\
% +Cn^{-1}K_n^2\max_{j\in S}\sigma_j^{-1}\Omega_j^{-1}\log (n)\\
% \lesssim n^{-1/2} K_n^2 \log^{1/2}(|S|n)\max_{j\in S}(\sigma_j\Sigma_j)^{-1}
% \end{array}$$
%under $n^{-1/2}K_n^2\max_{j\in S}\sigma_j^{-1}\Omega_j^{-1}=o(1)$ implied by Condition IV.

For the last term, we have with probability $1-o(1)$
$$\begin{array}{rl}
&\max_{j\in S}|\hat\Omega_j-\Omega_j|\frac{\Ep[\frac{1}{n}\sum_{i=1}^n\psi^2_j(y_i,x_i,z_i)]^{1/2}}{\Omega_j} \\
 & \leq \max_{j\in S}|\En[x_jz'(\hat\mu^j-\mu_0^j)]|\sigma_j +|\En[x_{j}z_{-j}'\mu^j_0]-\frac{1}{n}\sum_{i=1}^n\Ep[x_{ij}z_{i}'\mu^j_0]|\\
& \leq \max_{j\in S}\{\En[x_j^2]\}^{1/2}\{\En[\{z_{-j}'(\hat\mu^j-\mu_0^j)\}^2]\}^{1/2}\sigma_j  \\
& + \max_{j\in S}|\En[x_{j}z_{-j}'\mu^j_0]-\frac{1}{n}\sum_{i=1}^n\Ep[x_{ij}z_{i}'\mu^j_0]|\sigma_j\\
& \lesssim  \sqrt{n^{-1} s\log(pKn)}+\sqrt{n^{-1}\log(|S|)}\bar m_2  +\sqrt{n^{-1}\log(n)} \bar m_2 + n^{-1}K_n^2 \log n  \\
& \lesssim \delta_n (1+\max_{j\in S}\|\mu^j_0\|)\end{array}$$
\noindent where $\bar m_2 := \max_{j\in S}\{\frac{1}{n}\sum_{i=1}^n\Ep[x_{ij}^2(z_{i}'\mu^j_0)^2|\}^{1/2}\leq C\max_{j\in S} \|\mu^j_0\|$, and we used that  $\Ep[\max_{i\leq n, j\in S} |x_{ij}z_i'\mu^j_0|^2]^{1/2} \leq K_n^{2/q}$. Note that the first step follows from $\Ep[\frac{1}{n}\sum_{i=1}^n\psi^2_j(y_i,x_i,z_i)]^{1/2}=\sigma_j\Omega_j\leq C$ and the triangle inequality,  the second step from Cauchy--Schwarz inequality, the third from the (prediction) rate of convergence that holds with probability $1-o(1)$, Lemma \ref{lem:m2bound}(2) and Lemma \ref{lemma:CCK}. The last step follows from the growth conditions in Condition IV.
Since  $\Omega_j$ is bounded away from zero and from above, the last argument also implies that $\hat\Omega_j$ is also bounded from below uniformly in $j\in S$ and $n$ with probability $1-o(1)$.

Combining these relations we have $$\varepsilon_n \leq C\delta_n\log^{1/2}(n|S|)(1+\max_{j\in S}\|\mu^j_0\|).$$

Step 3. Claim: there is a sequence $\vartheta_n \to 0$ such that
$$ P( c_\alpha^*(1+\varepsilon_n) > c^0_{\alpha-\vartheta_n}) = o(1)$$
where $\varepsilon_n$ is defined in Step 2.

Recall that $\widehat{\mathcal{G}}_j=\frac{1}{\sqrt{n}} \sum_{i=1}^ng_i\hat\psi_{ij}$ where $\hat\psi_{ij} :=\hat\sigma_j^{-1}\hat\Omega_j^{-1}(y_i-x_i'\hat\beta)z_i'\hat\mu^j=\hat\sigma_j^{-1}\hat\Omega_j^{-1}\widetilde \psi_j(y_i,x_i,z_i,\check\beta_j,\hat\beta_{-j},\hat\mu^j)$ and define
 $$
\bar Z^* = \max_{j\in S} \left|\frac{1}{\sqrt{n}} \sum_{i=1}^ng_i\bar\psi_j(y_i,x_i,z_i)\right|, \ \  \widetilde Z^*= \max_{j\in S} |\widehat{\mathcal{G}}_j|, \ \ \mbox{and} \ \
 \widetilde Z  = \max_{j\in S} |\mathcal{G}_j|$$ where $(\mathcal{G}_j)_{j\in S}$ is a zero mean Gaussian vector with covariance matrix given by $C_{jk}=\Ep[\frac{1}{n}\sum_{i=1}^n\bar\psi_j(y_i,x_i,z_i)\bar\psi_k(y_i,x_i,z_i)]$, and $\widetilde Z^*$ is associated with the multiplier bootstrap as defined in (\ref{def:mb}).

 We have that
\begin{equation}\label{aux:step3}\begin{array}{rl}
 |\widetilde Z^* - \widetilde Z | & \leq \left|\widetilde Z^* -  \bar Z^* \right| + \left| \bar Z^* - \widetilde Z \right|  \\
& \leq {\displaystyle \max_{j\in S}}  \left| \frac{1}{\sqrt{n}} \sum_{i=1}^ng_i\{\hat\psi_{ij} - \bar\psi_j(y_i,x_i,z_i)\}\right| + \left|\bar Z^* - \widetilde Z\right| \\ \end{array}
 \end{equation}

To control the first term of the RHS in (\ref{aux:step3}) note that conditional on $(y_i,x_i,z_i)_{i=1}^n$, $\frac{1}{\sqrt{n}} \sum_{i=1}^ng_i\{\hat\psi_{ij} - \bar\psi_j(y_i,x_i,z_i)\}$ is a zero-mean Gaussian random variable with variance $\En[\{\hat\psi_{ij}-\bar\psi_j(y,x,z)\}^2]$. By Step 2's claim (\ref{step2:main}) and (\ref{Aux:L2Pn}), with probability $1-o(1)$ we have uniformly over $j\in S$
$$\begin{array}{rl}\En[\{\hat\psi_{j}-\bar\psi_j(y,x,z)\}^2]^{1/2} & \leq |\hat\sigma_j^{-1}\hat\Omega_j^{-1}-\sigma_j^{-1}\Omega_j^{-1}|\En[\psi_j^2(y,x,z,\check\beta_j,\hat\eta^j)]^{1/2} \\
& + \sigma_j^{-1}\Omega_j^{-1}\En[\{\widetilde \psi_j(y,x,z)-\psi_j(y,x,z)\}^2]^{1/2}\\
& \lesssim \varepsilon_n \\
\end{array}$$
from the calculations in Step 2, since $\hat\sigma_j$, $\sigma_j$, $\Omega_j$, $\hat\Omega_j$ and $\En[\hat\psi_j^2(y,z)]^{1/2}$ are bounded away from zero and from above with probability $1-o(1)$. Therefore we have with probability $1-o(1)$ that
$$ \begin{array}{rl}\Ep\left[ \max_{j\in S} \left| \frac{1}{\sqrt{n}} \sum_{i=1}^ng_i\{\hat\psi_j(y_i,x_i,z_i) - \bar\psi_j(y_i,x_i,z_i)\}\right| \ \mid (y_i,x_i,z_i)_{i=1}^n \right] \\
\lesssim \varepsilon_n \sqrt{\log |S| } =: I_n
\end{array} $$
by Corollary 2.2.8 in \cite{vdV-W}.

To control the second term of the RHS in (\ref{aux:step3}) we apply Theorem 3.2 in \cite{chernozhukov2015noncenteredprocesses} and a conditional version of Strassen's theorem due to \cite{monrad1991nearby}. We have $\bar Z^*=\max_{j\in S} |X_j|$ and $\widetilde Z=\max_{j\in S}|Y_j|$ where $X_{j}=\frac{1}{\sqrt{n}}\sum_{i=1}^ng_i\bar \psi_j(y_i,x_i,z_i)$, $j\in S$, and $Y \sim N(0,\Ep[\frac{1}{n}\sum_{i=1}^n\bar\psi(y_i,x_i,z_i)\bar\psi'(y_i,x_i,z_i)])\in \mathbb{R}^{|S|}$. Note that $\Ep[X_j]=\Ep[Y_j]=0$ and $\Ep[X_j^2]=\Ep[Y_j^2]=1$ for all $j\in S$. Let $$\Delta=\max_{j,k\in S}| \En[\bar\psi_j(y,x,z)\bar\psi_k(y,x,z)] - \Ep[\mbox{$\frac{1}{n}\sum_{i=1}^n$}\bar\psi_j(y_i,x_i,z_i)\bar\psi_k(y_i,x_i,z_i)]|.$$
 We will consider the event $E_n=\{ \Delta \leq \bar \Delta \}$.  Conditionally on $E_n$, by  Theorem 3.2 in \cite{chernozhukov2015noncenteredprocesses} we have that for every $\delta>0$ and every Borel subset $A\subset \mathbb{R}$
 $$ P( \max_{j\in S} X_j \in A \mid (y_i,x_i,z_i)_{i=1}^n) \leq P(\max_{j\in S} Y_j \in A^\delta ) + C\delta^{-1}\sqrt{\bar \Delta \log(|S|)}$$
for some universal constant $C>0$ where $A^{\delta}=\{ t \in \mathbb{R} : {\rm dist}(t,A)\leq \delta\}$. By a conditional version of Strassen's theorem, we have there is a version of $\widetilde Z$  such that $$P( | \bar Z^* - \widetilde Z | > \delta ) \leq P(E_n^c) + C\delta^{-1}\sqrt{\bar \Delta \log(|S|)}.$$
The result will follow from a suitable choice of $\bar \Delta \to 0$ and $\delta \to 0$. We have that
$$
\begin{array}{rl}
\Ep[\Delta] & \leq C \Ep[  \Ep_{r} [ \En[r_i\bar \psi_{ij}\bar\psi_{ik}] ] \leq C\sqrt{n^{-1}\log(|S|^2)} \Ep[ \max_{j,k\in S} \En[  \bar \psi_{ij}^2\bar\psi_{ik}^2]^{1/2}]\\
&  = C\sqrt{n^{-1}\log(|S|^2)}  \Ep[ \max_{j,k\in S} \En[  \xi_i^4 |z_i'\mu^j_0z_i'\mu^k_0|^2]^{1/2}]\\
&  \leq C\sqrt{n^{-1}\log(|S|^2)}  \Ep[ \max_{j,k\in S, i\leq n} |z_i'\mu^j_0z_i'\mu^k_0| \En[  \xi_i^4 ]^{1/2}]\\
&  \leq C\sqrt{n^{-1}\log(|S|^2)}  \Ep[ \max_{j,k\in S, i\leq n} |z_i'\mu^j_0z_i'\mu^k_0|^2]^{1/2} \{\Ep[\frac{1}{n}\sum_{i=1}^n \xi_i^4 ]\}^{1/2}\\
& \leq C\sqrt{n^{-1}\log(|S|^2)} K_n^{2/q}
\end{array}
$$
For $\gamma \in (0,1)$, we can set $\bar \Delta = \gamma^{-1}\sqrt{n^{-1}\log(|S|^2)} K_n^{2/q}$, we have $P(E_n^c)=O(\gamma)$ and by setting $\delta = \gamma^{-1}\{\bar\Delta \log |S|\}^{1/2}$ we have
{\small $$ \begin{array}{rl}
\displaystyle |\bar Z^* - \widetilde Z | &\displaystyle  = O_P\left( n^{-1/4}\log^{3/4}(|S|) K_n^{1/q} \right) =: II_n
\end{array} $$}
where $II_n\to 0$ under $n^{-1}K_n^{4/q}\log^5(|S|)=o(1)$ under Condition IV since $q\geq 4$.

For notational convenience let $r_n:= \ell_n(I_n + II_n)\to 0$ for some $\ell_n\to \infty$, and $\wp_n = P( |\widetilde Z^* - \widetilde Z| > r_n )^{1/2} =o(1)$.
Letting $U_n := P( |\widetilde Z^* - \widetilde Z| > r_n \mid (y_i,x_i,z_i)_{i=1}^n)$ note that $$\begin{array}{rl}
P(  U_n > \wp_n ) & = E [ 1\{ U_n > \wp_n \} ]\\
& \leq  E[ U_n ] / \wp_n \\
& = \wp_n\\
\end{array}
$$ so that $P( |\widetilde Z^* - \widetilde Z| > r_n \mid (y_i,x_i,z_i)_{i=1}^n) \leq \wp_n$ with probability at least $1-\wp_n$. Then, by definition of the quantile function we have that
$$\begin{array}{rl}
(1+\varepsilon_n)c_\alpha^* & \leq (1+\varepsilon_n)(c^0_{\alpha - \wp_n}+r_n) \\
& = c^0_{\alpha-\vartheta_n}- \{ c^0_{\alpha-\vartheta_n} - c^0_{\alpha - \wp_n}\} +c^0_{\alpha - \wp_n} \varepsilon_n + (1+\varepsilon_n)r_n\\
& \leq c^0_{\alpha-\vartheta_n} - \frac{c(\vartheta_n- \wp_n)}{\Ep[\widetilde Z]} +  c^0_{\alpha - \wp_n}\varepsilon_n + (1+\varepsilon_n)r_n\\
& \leq c^0_{\alpha-\vartheta_n}
\end{array}
$$ where the third step we used Corollary 2.1 in \cite{chernozhukov2014honestbands}, and we set $\vartheta_n\to 0$ so that $\Ep[\widetilde Z] \{  c^0_{\alpha - \wp_n}\varepsilon_n + r_n(1+\varepsilon_n) \} = o(\vartheta_n- \wp_n)$. Note that because  $\Ep[\widetilde Z] \leq C\sqrt{1 + \log |S|}$ and $ c^0_{\alpha - \wp_n} \leq C\sqrt{1 + \log (|S|/\{\alpha - \wp_n\})}$, we can find such sequence $\vartheta_n\to 0$ provided that
we have $$\varepsilon_n (1 + \log |S|) = O(\delta_n \log^{1/2}(n|S|)(1+\max_{j\in S}\|\mu^j_0\|)\log |S| = o(1),$$  $r_n = o(1)$  and also $$\begin{array}{rl}
r_n \sqrt{\log |S|} & = \ell_n O( \varepsilon_n\sqrt{\log |S|} + n^{-1/4}\log^{3/4}(|S|) K_n^{1/q}) \sqrt{\log |S|} \\
& = \ell_n O( \delta_n \log^{3/2}(n|S|) ) + \ell_n O( n^{-1/4}K^{1/q}_n\log^{5/4}(|S|))\\
& = o(1)
\end{array}$$
implied by Condition IV where $\delta_n \log^{3/2}(n|S|)=o(1)$, $\max_{j\in S}\|\mu^j_0\|\leq C$ and $ n^{-1}K_n^{4/q} \log^5(|S|) = o(1)$ are assumed.

\end{proof}

\section{Auxiliary Lemmas}

\begin{lemma}\label{lem:m2bound}
Let $X_i, i=1,\ldots,n,$ be independent random vectors in $\mathbb{R}^p$, $p\geq 3$. Define $\bar m_k := \max_{j\leq p}\frac{1}{n}\sum_{i=1}^n\mathbb{E}[|X_{ij}|^k]$ and $M_{k} \geq \mathbb{E}[ {\displaystyle \max_{i\leq n}}\|X_i\|_\infty^k]$. Then we have the following bound
$$\begin{array}{ll}
(1) \mathbb{E}\left[\max_{j\leq p}\frac{1}{n}\sum_{i=1}^n|X_{ij}|\right] \leq CM_{1} n^{-1}\log p+ C\bar m_1 \\
(2)\mathbb{E}\left[\max_{j\leq p}\frac{1}{n}\left|\sum_{i=1}^n X_{ij}-\mathbb{E}[X_{ij}]\right|\right] \leq C  \sqrt{\bar m_2 n^{-1}\log p} + M_2^{1/2} n^{-1}\log p\\
(3)\mathbb{E}\left[\max_{j\leq p}\frac{1}{n}\left|\sum_{i=1}^n|X_{ij}|^k-\mathbb{E}[|X_{ij}|^k]\right|\right] \leq \frac{C M_{k}\log p}{n}+C\sqrt{\frac{M_{k}\bar m_k \log p}{n}}\\
\end{array}
$$
for some universal constant $C$.
\end{lemma}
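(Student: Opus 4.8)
All three bounds are classical maximal inequalities, and I would derive them from two standard ingredients. First, the symmetrization inequality: for any independent triangular array $(Z_{ij})$ and i.i.d.\ Rademacher signs $(\varepsilon_i)$ independent of the data,
$\mathbb{E}[\max_{j\le p}\frac1n|\sum_{i=1}^n(Z_{ij}-\mathbb{E}Z_{ij})|]\le 2\,\mathbb{E}[\max_{j\le p}\frac1n|\sum_{i=1}^n\varepsilon_i Z_{ij}|]$. Second, the finite sub-Gaussian maximal inequality: since conditionally on the data $\sum_i\varepsilon_i a_{ij}$ is sub-Gaussian with variance proxy $\sum_i a_{ij}^2$, one has $\mathbb{E}_\varepsilon[\max_{j\le p}|\sum_{i=1}^n\varepsilon_i a_{ij}|]\le \sqrt{2\log(2p)}\,\max_{j\le p}(\sum_{i=1}^n a_{ij}^2)^{1/2}$ for fixed reals $a_{ij}$. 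Throughout I use $p\ge 3$ so that $\log(2p)\le 2\log p$, and I use the deterministic pointwise bound $|X_{ij}|^{2k}\le (\max_{\ell\le n}\|X_\ell\|_\infty)^{k}\,|X_{ij}|^k$, which is the only place the ``envelope'' $M_k$ enters.

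I would prove (3) first, by a self-referential (bootstrapping) argument. Applying symmetrization to $Z_{ij}=|X_{ij}|^k$ and then the sub-Gaussian bound conditionally on the $X_i$'s gives
\[
R:=\mathbb{E}\Big[\max_{j\le p}\tfrac1n\big|\textstyle\sum_i(|X_{ij}|^k-\mathbb{E}|X_{ij}|^k)\big|\Big]\le 2\sqrt{2\log(2p)}\,n^{-1/2}\,\mathbb{E}\Big[\big(\max_{j\le p}\tfrac1n\textstyle\sum_i|X_{ij}|^{2k}\big)^{1/2}\Big].
\]
Bounding the inner quantity by $(\max_\ell\|X_\ell\|_\infty)^k\max_j\frac1n\sum_i|X_{ij}|^k$ and using Cauchy--Schwarz across the expectation gives $\mathbb{E}[(\cdots)^{1/2}]\le M_k^{1/2}\,\mathbb{E}[\max_j\frac1n\sum_i|X_{ij}|^k]^{1/2}\le M_k^{1/2}(\bar m_k+R)^{1/2}$. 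Hence $R\le c(\bar m_k+R)^{1/2}$ with $c:=2\sqrt{2\log(2p)}\,n^{-1/2}M_k^{1/2}$; since $R$ is finite (because $\max_j\frac1n\sum_i|X_{ij}|^k\le\max_i\|X_i\|_\infty^k$ has mean at most $M_k$), solving this quadratic inequality yields $R\le c^2+c\,\bar m_k^{1/2}$, which is exactly the asserted bound up to an absolute constant. Part (1) follows from the same trick applied directly to $\frac1n\sum_i|X_{ij}|$: writing $Q:=\mathbb{E}[\max_j\frac1n\sum_i|X_{ij}|]$, symmetrization and the sub-Gaussian bound give $Q\le \bar m_1+2\sqrt{2\log(2p)}\,n^{-1/2}\,\mathbb{E}[(\max_j\frac1n\sum_iX_{ij}^2)^{1/2}]$, and bounding $X_{ij}^2\le(\max_\ell\|X_\ell\|_\infty)|X_{ij}|$ followed by Cauchy--Schwarz gives $Q\le \bar m_1+2\sqrt{2\log(2p)}\,n^{-1/2}M_1^{1/2}Q^{1/2}$, whence $Q\lesssim \bar m_1+M_1\log(p)/n$. (Note one cannot simply reduce (1) to (2), since that would produce $M_2^{1/2}\ge M_1$ in place of $M_1$.)

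For (2) I would run the symmetrization/sub-Gaussian step once more with $Z_{ij}=X_{ij}$, obtaining $\mathbb{E}[\max_j\frac1n|\sum_i(X_{ij}-\mathbb{E}X_{ij})|]\le 2\sqrt{2\log(2p)}\,n^{-1/2}\,\mathbb{E}[\max_j\frac1n\sum_iX_{ij}^2]^{1/2}$ by Jensen on the square root, and then control the second-moment factor by $\mathbb{E}[\max_j\frac1n\sum_iX_{ij}^2]\le \bar m_2+R_2$, where $R_2$ is the bound from (3) with $k=2$. Since $\sqrt{M_2\bar m_2\log(p)/n}\le \frac12(\bar m_2+M_2\log(p)/n)$ by AM--GM, this second moment is $\lesssim \bar m_2+M_2\log(p)/n$, and substituting back gives $\lesssim \sqrt{\bar m_2\log(p)/n}+M_2^{1/2}\log(p)/n$, as claimed.

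The only care required is the bookkeeping of the quadratic inequalities and absolute constants and the verification of finiteness of the quantities being bootstrapped (ensured by $M_k<\infty$); there is no genuine obstacle, since everything reduces to symmetrization, the sub-Gaussian union bound, and the deterministic bound $|X_{ij}|^{2k}\le\|X_i\|_\infty^k|X_{ij}|^k$.
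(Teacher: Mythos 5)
Your proof is correct, and the bookkeeping works out. The paper states Lemma~\ref{lem:m2bound} without a proof, so there is no argument in the text to compare against; but the route you take---symmetrization, the finite sub-Gaussian maximal inequality conditionally on the data, the pointwise envelope bound $|X_{ij}|^{2k}\le(\max_\ell\|X_\ell\|_\infty)^k|X_{ij}|^k$ to trade one power for the envelope, and the self-referential quadratic to absorb the stochastic term---is precisely the standard Chernozhukov--Chetverikov--Kato technique that the authors appear to be invoking (it mirrors, e.g., the proofs of their maximal inequality lemmas, of which Lemma~\ref{lemma:CCK} in this paper is a restatement). Two small things you handled correctly that are worth flagging: (i) the a priori finiteness of $R$ (resp.\ $Q$) is genuinely needed before the quadratic inequality can be solved, and your justification via $\max_j n^{-1}\sum_i|X_{ij}|^k\le\max_i\|X_i\|_\infty^k$ is the right one; (ii) proving (1) directly by bootstrapping on $Q$ rather than reducing to (2) is essential, exactly for the reason you give---a reduction would replace $M_1$ by $M_2^{1/2}\ge M_1$, which is a strictly weaker constant. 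The AM--GM step in (2) to collapse the bound from (3) with $k=2$ into $\bar m_2+M_2\log p/n$ before taking square roots is also the clean way to obtain the stated form. No gaps.
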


The following lemma is a version of a result in \cite{RudelsonVershynin2008}.

\begin{lemma}\label{thm:RV34}
Let $(X_{i})$, $i=1,\ldots, n$, be independent (across i) random vectors such that $X_{i} \in \mathbb{R}^p$ with $p\geq 3$ and $(\Ep[ \max_{1\leq i\leq n}\|X_{i}\|_\infty^2])^{1/2} \leq K$. Furthermore, for $k\geq 1$, define
$$
\delta_n:= \frac{K \sqrt{k}}{\sqrt n}\left(\log^{1/2} p + (\log k) (\log^{1/2}p) (\log^{1/2} n) \right),
$$
%where $A$ is a universal constant.
Then, there is a universal constant $C$ such that
{\small $$
\begin{array}{c}
\Ep\left[ \sup_{\|\theta\|_0\leq k, \|\theta\| =1} \left| \frac{1}{n}\sum_{i=1}^n\{ (\theta'X_i)^2 - \Ep[(\theta'X_i)^2]\} \right|\right]\\ \leq C\delta_n^2 + C\delta_n \sup_{\|\theta\|_0\leq k, \|\theta\| =1} \sqrt{\frac{1}{n}\sum_{i=1}^n\Ep[(\theta'X_i)^2]}
\end{array}
$$}
\end{lemma}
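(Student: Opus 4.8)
The plan is to reduce to a symmetrized process, control it with a Rudelson--Vershynin--type inequality for the $k$-sparse operator norm, and then close the estimate with a self-bounding argument. Write $\hat R := \sup_{\|\theta\|_0\le k,\,\|\theta\|=1}\{\frac1n\sum_{i=1}^n(\theta'X_i)^2\}^{1/2}$ for the square root of the maximal $k$-sparse eigenvalue of $\frac1n\sum_i X_iX_i'$, let $\bar R := \sup_{\|\theta\|_0\le k,\,\|\theta\|=1}\{\frac1n\sum_{i=1}^n\Ep[(\theta'X_i)^2]\}^{1/2}$ be its population analogue, and let $D$ denote the left-hand side of the claimed inequality. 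Since $D$ is the expectation of the $k$-sparse operator norm of $\frac1n\sum_i X_iX_i' - \frac1n\sum_i\Ep[X_iX_i']$, the triangle inequality gives $\hat R^2 \le \bar R^2 + \sup_{\|\theta\|_0\le k,\,\|\theta\|=1}|\frac1n\sum_i\{(\theta'X_i)^2-\Ep[(\theta'X_i)^2]\}|$ pointwise, and hence $\Ep[\hat R^2]\le \bar R^2 + D$; this is the identity that the self-bounding step will exploit.

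First I would symmetrize. Introducing i.i.d.\ Rademacher signs $(\varepsilon_i)_{i=1}^n$ independent of $(X_i)$, the standard symmetrization inequality yields
$$D \ \le\ 2\,\Ep\Big[\Ep_\varepsilon\sup_{\|\theta\|_0\le k,\,\|\theta\|=1}\Big|\frac1n\sum_{i=1}^n\varepsilon_i(\theta'X_i)^2\Big|\Big],$$
and conditionally on $(X_i)$ the inner supremum is precisely the $k$-sparse operator norm of the symmetric random matrix $\frac1n\sum_i\varepsilon_i X_iX_i'$.

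The core step is to bound this symmetrized, conditionally-Rademacher sparse operator norm, and here I would invoke the ``result in \cite{RudelsonVershynin2008}'' referenced in the statement: for any fixed $x_1,\dots,x_n$ with $\max_i\|x_i\|_\infty\le \bar K$,
$$\Ep_\varepsilon\sup_{\|\theta\|_0\le k,\,\|\theta\|=1}\Big|\frac1n\sum_{i=1}^n\varepsilon_i(\theta'x_i)^2\Big| \ \le\ C\,\gamma\,\bar K\,\sup_{\|\theta\|_0\le k,\,\|\theta\|=1}\Big\{\frac1n\sum_{i=1}^n(\theta'x_i)^2\Big\}^{1/2}, \qquad \gamma := \frac{\sqrt k}{\sqrt n}\big(\log^{1/2}p + (\log k)(\log^{1/2}p)(\log^{1/2}n)\big).$$
Applying this with $x_i = X_i$ and $\bar K = \max_i\|X_i\|_\infty$, noting $\delta_n = K\gamma$, and then taking $\Ep$ over $(X_i)$, Cauchy--Schwarz together with the hypothesis $\Ep[\max_i\|X_i\|_\infty^2]\le K^2$ and the bound $\Ep[\hat R^2]\le \bar R^2 + D$ gives
$$D \ \le\ 2C\gamma\,\Ep\big[\max_i\|X_i\|_\infty\,\hat R\big] \ \le\ 2C\gamma\,\big(\Ep[\max_i\|X_i\|_\infty^2]\big)^{1/2}\big(\Ep[\hat R^2]\big)^{1/2} \ \le\ 2C\,\delta_n\,(\bar R^2 + D)^{1/2}.$$
Because $(\bar R^2+D)^{1/2}\le \bar R + D^{1/2}$, this is a scalar inequality $D\le 2C\delta_n D^{1/2} + 2C\delta_n\bar R$; solving it yields $D \le C'(\delta_n^2 + \delta_n\bar R)$, which is exactly the assertion once one recalls $\bar R = \sup_{\|\theta\|_0\le k,\|\theta\|=1}\{\frac1n\sum_i\Ep[(\theta'X_i)^2]\}^{1/2}$.

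I expect the real obstacle to be the displayed Rudelson--Vershynin inequality rather than the symmetrization or the algebra, and specifically the $\sqrt k$ (not $k$) dependence in $\gamma$: the crude estimate $|\theta'X_i|\le\sqrt k\,\|X_i\|_\infty$ for $k$-sparse $\theta$, fed into the Ledoux--Talagrand contraction principle (the map $t\mapsto t^2$ is $2\sqrt k\bar K$-Lipschitz on the relevant range) and a Rademacher maximal inequality, only gives $\gamma\asymp \frac{k}{\sqrt n}\log^{1/2}p$, which is too weak by a factor $\sqrt k$. Recovering the sharp power requires the finer Rudelson-type argument: a Dudley/chaining estimate over the $k$-sparse unit sphere, whose $\epsilon$-covering numbers are at most $\binom{p}{k}(3/\epsilon)^k$ and thus contribute $\sqrt{k\log(p/\epsilon)}$ to the entropy integral, run in tandem with Rudelson's ``square-root'' bound $\Ep_\varepsilon\|\sum_i\varepsilon_i x_i x_i'\| \lesssim \sqrt{\log n}\,\max_i\|x_i\|\,\|\sum_i x_i x_i'\|^{1/2}$; the extra polylogarithmic factors $(\log k)(\log^{1/2}p)(\log^{1/2}n)$ in $\gamma$ are exactly what this chaining produces. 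In the write-up I would import this step essentially verbatim from \cite{RudelsonVershynin2008} --- their proof uses only independence of the $x_i$ and the uniform sup-norm bound, so it transfers to the present independent, not-identically-distributed setting without change --- and isolate it as the single external ingredient.
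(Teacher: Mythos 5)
The paper does not actually supply a proof of this lemma: it is stated in the appendix with the one-line attribution ``a version of a result in \cite{RudelsonVershynin2008},'' and no argument is given. So there is no in-paper proof to compare against, only the source being cited.

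That said, your sketch is a correct reconstruction of the mechanism that \cite{RudelsonVershynin2008} themselves use, specialized to the sparse operator norm. The three moves --- (a) symmetrize $D$ to a conditionally-Rademacher process, (b) invoke the Rudelson chaining bound $\Ep_\varepsilon\sup_{\|\theta\|_0\le k,\|\theta\|=1}|\tfrac1n\sum_i\varepsilon_i(\theta'x_i)^2|\le C\gamma\,\bar K\,\hat R$ conditionally on the $X_i$, and (c) close with Cauchy--Schwarz, the pointwise inequality $\hat R^2\le\bar R^2+\sup_\theta|\cdot|$, and a self-bounding fixed-point argument --- are exactly the components of their proof, and your algebra in step (c) is correct: $D\le 2C\delta_n(\bar R^2+D)^{1/2}$ does yield $D\le C'(\delta_n^2+\delta_n\bar R)$. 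You are also right about where the real content lives: the $\sqrt{k}$ rather than $k$ dependence in $\gamma$ cannot be obtained from the naive contraction bound $|\theta'X_i|\le\sqrt{k}\|X_i\|_\infty$ and a union bound over coordinates, and requires the Dudley entropy estimate over the $k$-sparse sphere combined with Rudelson's square-root trick; the $(\log k)(\log^{1/2}p)(\log^{1/2}n)$ factor is the footprint of that chaining. And your observation that the cited conditional bound only uses fixed vectors $x_i$ and a uniform $\ell_\infty$ envelope, so it transfers to independent non-i.i.d.\ $X_i$ with an $L^2$-in-probability envelope $K$ via Cauchy--Schwarz, is precisely the point of restating it in this paper with the hypothesis $(\Ep[\max_i\|X_i\|_\infty^2])^{1/2}\le K$. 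In short: the proposal is sound and is the argument the paper is implicitly deferring to.
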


Let $ ( W_i)_{i=1}^n$ be a sequence of independent copies of a random element $ W$  taking values in a measurable space $({\mathcal{W}}, \mathcal{A}_{{\mathcal{W}}})$ according to a probability law $P$. Let $\mathcal{F}$ be a set  of suitably measurable functions $f\colon {\mathcal{W}} \to \mathbb{R}$, equipped with a measurable envelope $F\colon \mathcal{W} \to \mathbb{R}$. Let $\Gn(f)=n^{-1/2}\sum_{i=1}^n f(W_i)-\Ep[f(W_i)]$.
%Let $M = \max_{1 \leq i \leq n} F(X_{i})$.

  \begin{lemma}[Maximal Inequality adapted from \cite{chernozhukov2012gaussian}]
\label{lemma:CCK}  Suppose that $F\geq \sup_{f \in \mathcal{F}}|f|$ is a measurable envelope for the finite class $\mathcal{F}$
with $\| F\|_{P,\bar q} < \infty$ for some $\bar q \geq 2$.  Let $M = \max_{i\leq n} F(W_i)$ and $\sigma^{2} > 0$ be any positive constant such that $\sup_{f \in |\mathcal{F}|}  \| f \|_{P,2}^{2} \leq \sigma^{2} \leq \| F \|_{P,2}^{2}$. Then
\begin{equation*}
\Ep_P [ \max_{f \in \mathcal{F}} |\Gn(f)| ] \leq K  \left( \sqrt{\sigma^{2} \log \left ( \frac{|\mathcal{F}|\| F \|_{P,2}}{\sigma} \right ) } + \frac{\| M \|_{P, 2}}{\sqrt{n}} \log \left ( |\mathcal{F}|  \frac{\| F \|_{P,2}}{\sigma} \right ) \right),
\end{equation*}
where $K$ is an absolute constant.  Moreover, for every $t \geq 1$, with probability $> 1-t^{-q/2}$,
\begin{multline*}
\max_{f \in \mathcal{F}} |\Gn(f)|  \leq (1+\alpha) \Ep_P [ \max_{f \in \mathcal{F}} |\Gn(f)|  ] \\
+ K(\bar q) \Big [ (\sigma + n^{-1/2} \| M \|_{P,\bar q}) \sqrt{t}
+  \alpha^{-1}  n^{-1/2} \| M \|_{P,2}t \Big ], \ \forall \alpha > 0,
\end{multline*}
where $K(\bar q) > 0$ is a constant depending only on $\bar q$.
\end{lemma}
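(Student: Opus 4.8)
The plan is to note that a \emph{finite} class $\mathcal{F}$ is a VC-type class whose uniform covering numbers are bounded, at every scale, by $|\mathcal{F}|$, so that the two asserted inequalities are the specialization (index $V\mapsto\log(|\mathcal{F}|\vee e)$) of the maximal inequality of \cite{chernozhukov2012gaussian}; the work is then to re-run that argument in this simpler setting. It proceeds in two stages: the expectation bound, via symmetrization, a sub-Gaussian maximal inequality, and a self-bounding deconvolution of the conditional second moment; and the high-probability bound, via Talagrand's concentration inequality combined with a truncation (Fuk--Nagaev) step that trades exponential tails for the polynomial rate $t^{-\bar q/2}$ governed by $\|M\|_{P,\bar q}$.

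For the expectation bound I would first symmetrize, $\Ep_P[\max_{f\in\mathcal{F}}|\Gn(f)|]\le 2\,\Ep_P\Ep_\epsilon[\max_{f\in\mathcal{F}}|n^{-1/2}\sum_{i=1}^n\epsilon_i f(W_i)|]$ with i.i.d. Rademacher signs $(\epsilon_i)$. Conditionally on the data the symmetrized process indexed by the finite set $\mathcal{F}$ is sub-Gaussian with variance proxy $\En[f^2]:=n^{-1}\sum_i f(W_i)^2$, so a sub-Gaussian maximal inequality applied scale by scale --- the peeling over the size of $\En[f^2]^{1/2}$ being what turns a bare $\log|\mathcal{F}|$ into $\log(|\mathcal{F}|\,\|F\|_{P,2}/\sigma)$ --- gives $\Ep_\epsilon[\cdots]\lesssim\sqrt{\log(|\mathcal{F}|\,\|F\|_{P,2}/\sigma)}\,\max_f\En[f^2]^{1/2}$. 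It remains to deconvolve $B:=\Ep_P[\max_f\En[f^2]]$: one has $B\le\sigma^2+n^{-1/2}\Ep_P[\max_f|\Gn(f^2)|]$, where $\Gn(f^2):=n^{-1/2}\sum_i(f(W_i)^2-\|f\|_{P,2}^2)$, and symmetrization plus the Ledoux--Talagrand contraction principle (the map $u\mapsto u^2$ is $2M$-Lipschitz on $[-M,M]$ with $M:=\max_{i\le n}F(W_i)$) plus Cauchy--Schwarz bound $\Ep_P[\max_f|\Gn(f^2)|]\lesssim\|M\|_{P,2}\sqrt{\log(|\mathcal{F}|\,\|F\|_{P,2}/\sigma)}\,B^{1/2}$. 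This gives the self-bounding quadratic $B\lesssim\sigma^2+n^{-1/2}\|M\|_{P,2}\sqrt{\log(|\mathcal{F}|\,\|F\|_{P,2}/\sigma)}\,B^{1/2}$; solving it for $B$ and substituting into $\Ep_P[\max_f|\Gn(f)|]\lesssim\sqrt{\log(|\mathcal{F}|\,\|F\|_{P,2}/\sigma)}\,B^{1/2}$ yields the first display with an absolute constant $K$.

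For the high-probability bound I would fix $\tau>0$ and split each $f$ as $f\,1\{F\le\tau\}+f\,1\{F>\tau\}$. On the truncated class the summands $f(W_i)\,1\{F(W_i)\le\tau\}/\sqrt n$ are bounded by $\tau/\sqrt n$ and the variance remains at most $\sigma^2$, so Bousquet's form of Talagrand's inequality gives, for every $\alpha>0$ and $s>0$, with probability at least $1-e^{-s}$, $\max_f|\Gn(f\,1\{F\le\tau\})|\le(1+\alpha)\Ep_P[\max_f|\Gn(f)|]+K(\sigma\sqrt s+\alpha^{-1}\tau s/\sqrt n)$, after re-absorbing the small truncation corrections into $\Ep_P[\max_f|\Gn(f)|]$ and into $\sigma$. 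The tail piece is bounded deterministically and then by Markov, $\max_f|\Gn(f\,1\{F>\tau\})|\le\sqrt n\,\En[F\,1\{F>\tau\}]+\sqrt n\,\Ep_P[F\,1\{F>\tau\}]$, which a moment estimate controls through $\|M\|_{P,\bar q}$, $\|F\|_{P,\bar q}$ and $\tau$ (using $\bar q\ge 2$). Choosing $\tau$ in the standard way --- so the deterministic remainder is no larger than the Talagrand deviation --- and combining the two pieces so that the total failure probability is $t^{-\bar q/2}$ reproduces the stated bound, with the $\sqrt t$ term acquiring the factor $\sigma+n^{-1/2}\|M\|_{P,\bar q}$ and the linear-in-$t$ term acquiring $\alpha^{-1}n^{-1/2}\|M\|_{P,2}$.

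The main obstacle is exactly this exponential-to-polynomial conversion: one must pick $\tau$ so that the truncation remainder matches (or beats) the main deviation term, arrange that the envelope enters only through the moments $\|M\|_{P,2}$ and $\|M\|_{P,\bar q}$ (never through $\|F\|_\infty$), and keep all constants uniform in $n$, $|\mathcal{F}|$ and $\bar q\ge 2$; the same care is needed so that the $(1+\alpha)\Ep_P[\max_f|\Gn(f)|]$ term and the $\sigma$ in the Talagrand bound can legitimately be written in terms of the original, untruncated quantities. Since the present lemma is precisely the finite-class specialization of the inequality in \cite{chernozhukov2012gaussian}, the residual work is mostly bookkeeping: verifying that $\log(|\mathcal{F}|\vee e)$ may replace the VC index at every scale, and that no step covertly uses more than $\bar q\ge 2$ envelope moments.
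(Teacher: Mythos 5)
The paper gives no proof of this lemma; it is imported, with attribution, from \cite{chernozhukov2012gaussian}, and your sketch is a faithful reconstruction of the CCK argument: symmetrization plus chaining (Dudley over scales $\varepsilon\in(0,\sigma/\|F\|_{P,2}]$) plus the contraction-based self-bounding deconvolution of $\Ep_P[\max_f\En[f^2]]$ for the expectation bound, and a Fuk--Nagaev truncation combined with Bousquet/Talagrand at resolution $s\asymp\bar q\log t$ for the polynomial-tail concentration. One small correction: the finite-class specialization of the CCK VC-type entropy condition $\sup_Q N(\varepsilon\|F\|_{Q,2},\mathcal{F},\|\cdot\|_{Q,2})\le(A/\varepsilon)^V$ is $A=|\mathcal{F}|$, $V=1$, which gives the factor $\log(|\mathcal{F}|\,\|F\|_{P,2}/\sigma)$ in the displayed bound exactly; your parenthetical ``$V\mapsto\log(|\mathcal{F}|\vee e)$'' would instead produce the weaker product $\log(|\mathcal{F}|\vee e)\cdot\log(e\|F\|_{P,2}/\sigma)$. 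This slip does not affect the structure of the argument, and it is worth noting that the concentration half of the lemma is a fully general statement in CCK and does not use the finiteness of $\mathcal{F}$ at all.
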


\end{appendix}

\bibliography{IVbib}
\bibliographystyle{plain}

\end{document}